\definecolor{webgreen}{rgb}{0,0.4,0}
\definecolor{webbrown}{rgb}{0.6,0,0}
\definecolor{purple}{rgb}{0.5,0,0.25}
\definecolor{darkblue}{rgb}{0,0,0.7}
\definecolor{darkred}{rgb}{0.7,0,0}
\definecolor{darkgreen}{rgb}{0,0.7,0}
\newcommand{\ignore}[1]{}
\newtheorem{lemma}{{\bf Lemma}}
\newtheorem{proposition}{{\bf Proposition}}
\newtheorem{corollary}{{\bf Corollary}}
\newtheorem{theorem}{{\bf Theorem}}
\newtheorem{defn}{{\bf Definition}}
\newtheorem{obs}{{\bf Observation}}
\newenvironment{proof}{\noindent \textsl{Proof.\/}}
{\hfill $\blacksquare{}$ \vspace{12pt}}
\title{{\bf Undominated monopoly regulation}\thanks{We are grateful to the editor and two anonymous referees for their helpful suggestions. We thank Dirk Bergemann, Sushil Bikhchandani, Sulagna Dasgupta, Rahul Deb,  Yingni Guo, Mallesh Pai, Alessandro Pavan, Jeevant Rampal, Bruno Strulovici, and  Vilok Taori for helpful discussions. We also thank seminar participants at Academia Sinica, Ashoka University, Higher School of Economics at St. Petersburg, Shiv Nadar University, University of Liverpool, Conference on Mechanisms and Institution Design in Budapest, and Meeting of the Society for Social Choice and Welfare in Paris for their comments.}}
\author{Debasis Mishra and Sanket Patil\footnote{Mishra: Indian Statistical Institute, Delhi (\href{mailto:dmishra@gmail.com}{dmishra@gmail.com}). Patil: Indian Institute of Management Bangalore (\href{mailto:sanket.patil@iimb.ac.in}{sanket.patil@iimb.ac.in}).}}
\begin{document}
\allowdisplaybreaks
\maketitle

\begin{abstract}
We study undominated mechanisms with transfers for regulating a monopolist who privately observes the marginal cost of production. A mechanism is undominated if no other mechanism gives the regulator a strictly higher payoff at some marginal cost of the monopolist without lowering the regulator’s payoff at other costs. We show that an undominated mechanism has a quantity floor: whenever the monopolist is allowed to operate, it produces above a threshold quantity. Moreover, the regulator's operation decision is stochastic only if the monopolist produces at the quantity floor. We provide a near-complete characterization of the set of undominated mechanisms and use it to (a) derive a max-min optimal regulatory mechanism, (b) provide a foundation for deterministic mechanisms, and (c) show that the efficient mechanism is dominated.
\end{abstract}

\noindent {\sc Keywords:} regulation, undominated mechanisms, floor-randomized mechanisms, downward distortion.

\noindent {\sc JEL Classification:} D82, L51

\newpage

\section{Introduction}
\label{sec:intro}

We consider the problem of regulating a monopolist who privately observes the marginal cost of production. The regulator’s payoff is a weighted sum of the consumer surplus and the monopolist’s profit, with a strictly smaller weight on the monopolist’s profit. Given such preferences, the regulator would like to maximize the consumer surplus irrespective of the cost (type) of the monopolist. However, incentive constraints imply that increasing consumer surplus at a given type reduces the consumer surplus at lower types. A seminal paper, \citet{BM82}, studies the optimal resolution of this trade-off for a regulator who uses transfers and maximizes expected payoff with respect to a prior over the plausible marginal costs. Most of the subsequent literature, both with and without transfers, assumed expected payoff maximization by the regulator subject to incentive (IC) and participation (IR) constraints of the monopolist (see \citet{AS06} for an extensive survey).

This paper takes the first step in understanding the prior-free criterion of undomination in the context of regulatory mechanisms. A regulatory mechanism must specify an {\sl operation decision}, which may be stochastic; a {\sl quantity decision}, which is contingent on the monopolist operating; and a {\sl transfer decision} on the amount of subsidy/tax to the monopolist, which is also contingent on the monopolist operating. A regulatory mechanism is undominated if no other mechanism gives the regulator a strictly higher payoff at some marginal cost of the monopolist without lowering the regulator’s payoff at other costs. Such prior-free undominated mechanisms are relevant particularly when the regulator lacks confidence in the belief over the plausible marginal costs or fears ex-post criticism of the subjective belief used while choosing the regulatory mechanism. The concept of undomination is also useful for refining the set of worst-case optimal mechanisms when it is not a singleton.

Our first main result shows that an undominated regulatory mechanism must be a {\sl floor-randomized mechanism}. A floor-randomized mechanism is defined by partitioning the type space into three disjoint intervals (with the possibility that some of these intervals may be empty) and a quantity floor $\hat{q}$, which does not depend on the marginal cost of the monopolist. In the left interval (containing the low-cost types), the monopolist (firm) operates for sure and produces quantity higher than the quantity floor $\hat{q}$. In the middle interval, the firm operates with positive probability (possibly less than one) and produces quantity $\hat{q}$. The firm is shut in the right interval, which contains high-cost types. Thus, the operation decision is stochastic only when the monopolist produces a quantity equal to the quantity floor $\hat{q}$.

A general incentive compatible (IC) and individually rational (IR) mechanism is characterized by the monotonicity of the product of quantity and operation probability, and an envelope formula specifying the transfer. On the other hand, floor-randomized mechanisms (which are IC and IR) have several structural properties that make them easier to handle: (i) whenever the monopolist is allowed to operate, the quantity produced is bounded below by $\hat{q}$; (ii) operation decision is stochastic only when the quantity produced equals $\hat{q}$; (iii) both quantity and operation decisions are monotone. 

Using these properties of floor-randomized mechanisms, we derive two further necessary conditions for undomination. The first condition is what we call {\sl downward distortion}. It requires that the quantity produced at every type be lower than the efficient quantity at that type. Thus, there is a type-dependent upper bound on the quantity to be produced and a type-independent lower bound $\hat{q}$. The second condition is technical and requires the product of the quantity and the operation decisions to be left continuous. Thus, we show that every undominated mechanism is a floor-randomized mechanism satisfying downward distortion and left-continuity. Conversely, every floor-randomized mechanism satisfying left-continuity and a stricter version of downward distortion is undominated.\footnote{The stricter version of downward distortion requires that the quantity produced at every type must be {\sl strictly} lower than the efficient quantity at that type.} Thus, undomination, despite being a weak criterion, puts considerable structure on the set of IC and IR mechanisms.

The weight put on the monopolist's payoff by the regulator captures the intensity of the regulator's distributional preferences. A larger weight corresponds to a weaker distributional preference (or stronger preference for efficiency). We show that the set of undominated mechanisms expands as the regulator's distributional preferences weaken. However, our necessary and sufficient conditions are invariant in this preference parameter. This highlights that the set of undominated mechanisms can be divided into two subsets. One that satisfies our sufficient conditions and is unaffected by the regulator's distributional preferences, and the other does not satisfy our sufficient conditions but expands as the regulator puts more weight on the monopolist's profit. Surprisingly, the efficient mechanism, where the monopolist always operates and produces efficient quantity (maximizing total surplus), belongs to neither of these subsets as it is dominated.

Our analysis has several implications for regulation design. The first is for an ambiguity-averse regulator who evaluates any mechanism by the worst-case expected payoff it generates over a set of priors on the monopolist's marginal cost. A {\sl max-min optimal} mechanism has the highest worst-case payoff. We establish that if the regulator's set of priors contains a prior that (first-order) stochastically dominates every other prior in the set, then the max-min optimal mechanism is the optimal mechanism for that prior. 

As an essential step in obtaining the max-min optimal regulatory mechanism, we show that undominated mechanisms are {\sl monotonic}. The regulator's expected payoff increases as the distribution over marginal cost decreases in the first-order stochastic dominance sense. In general, we show that an arbitrary IC regulatory mechanism need not be monotonic. This observation contrasts sharply with IC mechanisms for selling a single good. As \cite{HR15} shows, every IC  mechanism for selling a single good is monotonic.

The second implication of our analysis is that every undominated mechanism satisfying a mild continuity property maximizes the expected regulator surplus for some prior of the regulator. As a result, using an undominated mechanism can be justified as if the regulator had a prior over the marginal cost of the monopolist and was maximizing the expected surplus with respect to this prior as in \citet{BM82}.

The third implication of our results is for an optimal regulatory mechanism that maximizes the expected payoff of the regulator for a given prior. As it is without loss to search for an optimal mechanism in the set of undominated mechanisms, an optimal mechanism exists that satisfies downward distortion and has a quantity floor. Thus, our analysis identifies prior-invariant properties of optimal mechanisms. The optimal mechanism identified in \cite{BM82} satisfies these properties. However, they do not explicitly refer to the quantity floor, which is hidden in the shut-down condition of their optimal mechanism.

Finally, the set of undominated mechanisms contains stochastic mechanisms. However, we show that for every IC and IR mechanism $M$ and for every prior over the marginal costs, there exists a {\sl deterministic} IC and IR mechanism that generates more expected payoff for the regulator than $M$. This provides a foundation for deterministic mechanisms in our model and explains why the optimal regulatory mechanism in \citet{BM82} is deterministic.

Section \ref{sec:lit} discusses related literature in detail. However, to put our contribution in perspective, we now contrast it with \cite{GS24}, an important contribution to prior-free regulation design. They analyze robust regulatory mechanisms in a setup where the regulator does not know the inverse demand function or the monopolist's cost function. In their paper, the regulator evaluates mechanisms by their worst-case regret, which is the difference between the complete information and the actual payoffs of the regulator. They show that a price-cap mechanism is robustly optimal even in a setting where transfers are feasible. Our approach complements their analysis as it is also non-Bayesian but differs in the robustness criterion. Undomination, despite being a very weak robustness criterion, puts significant restrictions like floor randomization and downward distortion on the set of IC and IR mechanisms. Further, the monotonicity properties of undominated mechanisms has implications for the design of max-min optimal mechanisms. Recent literature on max-min optimal mechanism design is extensively reviewed in \citet{C19} -- also see \citet{C17} who find a max-min optimal mechanism in the multidimensional screening model.\footnote{Other recent approaches to robustness include \citet{BHM23}, where a buyer evaluates a procurement mechanism based on the ratio of actual value from the mechanism to its value in the complete information benchmark. In \citet{MPP25}, the buyer, unsure of his model, employs a lexicographic approach: first, he creates a shortlist of all max-min optimal mechanisms, and then, he uses his model to select a mechanism from that shortlist.} 

The rest of the paper is organized as follows. Section \ref{sec:model} describes the model. In Section \ref{sec:floor_rand}, we discuss floor-randomized mechanisms and show it is without loss to restrict our search of undominated mechanisms in the class of floor-randomized mechanisms. Section \ref{sec:characterization} gives a near-complete characterization of undominated mechanisms. We discuss the implications of our results for  regulation design in Section \ref{sec:implications}. Finally, related literature is discussed in Section \ref{sec:lit}, and Appendices \ref{app:floor_rand}-\ref{app:implications} contain all the proofs, and Appendix \ref{sec:fixed} provides an extension.

\section{Model}
\label{sec:model}
A regulator oversees a market with a unit mass of consumers and a monopolist selling a single product. We begin by describing the market and then discuss the regulatory mechanisms. 

\subsection{Market}

The total cost incurred by the monopolist in producing a quantity $q >0$ of the product is given by $(c + \theta q)$. The parameter $c > 0$ denotes the fixed cost which is incurred only when $q > 0$. The parameter $\theta > 0$ denotes the marginal cost of production for the monopolist. The monopolist is perfectly informed about its fixed cost and marginal cost.

Let $P:\Re_{+} \rightarrow \Re_{+}$ be a strictly decreasing inverse demand function that represents the preferences of the unit mass of consumers. If the monopolist charges a uniform price of $P(q)$, where $q>0$, exactly $q$ units of the product are sold in the market, and the consumers get a total value of
\begin{align*}
V(q) &= \int \limits_0^q P(z)dz.
\end{align*}
The associated consumer surplus (CS) is given by $\big(V(q) - q P(q)\big)$ and the profit  of the monopolist is $\big(q P(q) -  c-\theta q \big)$. Finally, the total surplus (TS) is the sum of consumer surplus and profit of the monopolist, i.e., 
\begin{align*} 
\textsc{TS}(\theta,q)= V(q) - c-\theta q.
\end{align*}
If zero units are sold, i.e. $q=0$, the consumer surplus, monopolist's profit, and total surplus are all zero.

Throughout the paper, we impose the following assumptions on the primitives. We assume that the marginal cost $\theta$ can take values in the interval $\Theta \equiv [\underline{\theta},\overline{\theta}]$, where $0 < \underline{\theta} < \overline{\theta}$. Further, $P$ satisfies the following conditions.
\begin{enumerate} 
\item[A1.] $P$ is continuous and $P(\overline{q})=0$ for some $\overline{q}>0$.
\item[A2.] $c$ and $\overline{\theta}$ are such that $\textsc{TS}\big(\overline{\theta},P^{-1}(\overline{\theta})\big) > 0$.
\end{enumerate}
Assumptions [A1] is a technical assumption on $P$. Assumption [A2] is a joint condition on $c$, $\overline{\theta}$ and $P^{-1}$. Intuitively, it says that it is socially optimal to let the most inefficient type operate in the complete information benchmark.

\subsection{Regulatory mechanisms}

The regulator perfectly knows the fixed cost of the monopolist and the inverse demand function. However, when it comes to the marginal cost, the regulator only knows that it belongs to $\Theta$. 

The regulatory apparatus employed in disciplining the behavior of the monopolist can be summarized through a regulatory mechanism $M$. 
By revelation principle, we focus on direct regulatory mechanisms in which the monopolist first reports the marginal cost $\theta$ to the regulator, who then enforces two decisions $r(\theta)$ and $q(\theta)$ on the monopolist and transfers a contingent subsidy $s(\theta)$ to the monopolist. More specifically, a regulatory mechanism is given by $M \equiv (r, q, s)$, where

\begin{itemize}

\item {\sl Operation decision.} The map $r: \Theta \rightarrow [0,1]$ denotes the probability with which the monopolist is allowed to operate.

\item {\sl Quantity rule.} If the monopolist is allowed to operate, the map $q: \Theta \rightarrow [0,\bar{q}]$ prescribes the quantity to be produced by the monopolist.

\item {\sl Subsidy rule.} If the monopolist is allowed to operate, the map $s: \Theta \rightarrow \Re$ represents the subsidy each type of the monopolist receives.

\end{itemize}
Throughout the paper, we focus attention on regulatory mechanisms such that $q(\theta)=0$ if and only if $r(\theta)=0$. This restriction is without loss and will simplify the exposition.\footnote{If $r(\theta)=0$, the monopolist is not allowed to operate indicating that value of $q(\theta)$ is inconsequential in determining the regulator's surplus which will be given by equation (\ref{eq:RS}). Similarly, if $q(\theta)=0$, the total cost of production as well as value to consumers are both $0$. Thus, for all values of $r(\theta)$ the regulator gets the same payoff.} Given such a regulatory mechanism, $M \equiv (r,q,s)$, the payoff of a type $\theta$ monopolist from truthfully reporting its type is 
\begin{align} \label{eq:profit}
u(\theta) := \Big[ q(\theta) P(q(\theta)) - c-\theta q(\theta) + s(\theta) \Big] r(\theta).
\end{align}
On the other hand, the corresponding {\sl consumer surplus (CS)} is
\begin{align}\label{eq:cs}
\textsc{CS}(\theta,M) &= \Big( V(q(\theta)) - q(\theta)P(q(\theta)) - s(\theta) \Big)r(\theta) \nonumber \\
&= \Big( V(q(\theta)) - c - \theta q(\theta)\Big)r(\theta) - u(\theta) 
\end{align}
where (\ref{eq:cs}) is obtained by substituting $s(\theta)$ from equation (\ref{eq:profit}). Observe, equation (\ref{eq:profit}) uniquely defines the subsidy $s$ for given $(r,q,u)$. Thus, it is without loss of generality to write a mechanism as $M \equiv (r,q,u)$. On the other hand, we denote a \textbf{deterministic} mechanism, i.e., a mechanism where $r(\theta) \in \{0,1\}$ for all $\theta$, as $M\equiv (q,u)$ by suppressing $r$. We follow this convention throughout the paper.

A regulatory mechanism $(r,q,u)$ is {\bf incentive compatible (IC)} if for all $\theta,\theta' \in \Theta$
\begin{align*} 
u(\theta) \ge \Big[ q(\theta') P(q(\theta')) - c - \theta q(\theta') + s(\theta') \Big] r(\theta') = u(\theta') + (\theta' - \theta) q(\theta')r(\theta').
\end{align*}
Furthermore, the regulatory mechanism $M$ is {\bf individually rational (IR)} if $u(\theta) \ge 0$ for all $\theta \in \Theta$. We now state a lemma characterizing incentive compatibility and individual rationality of the mechanism $M \equiv (r,q,u)$ in terms of its properties.

\begin{lemma}[\citet{BM82}]
\label{lem:myerson}
A mechanism $(r,q,u)$ is IC if and only if  $q(\theta)r(\theta)$ is decreasing and for all $\theta \in \Theta$ 
\begin{align*} 
u(\theta) &= u(\bar{\theta}) + \int \limits_{\theta}^{\bar{\theta}} q(z)r(z) dz.
\end{align*}
Further, $(r,q,u)$ is IR if and only if $u(\overline{\theta}) \ge 0$.
\end{lemma}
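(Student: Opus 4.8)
The plan is to run the classical Myerson-style argument, working throughout with the reduced allocation $y(\theta):=q(\theta)r(\theta)\in[0,\bar q]$, in terms of which the IC inequality (as already rewritten in the text preceding the lemma) reads $u(\theta)\ge u(\theta')+(\theta'-\theta)y(\theta')$ for all $\theta,\theta'\in\Theta$. \textbf{Necessity.} Assuming $(r,q,u)$ is IC, I would first add the IC inequality for ``$\theta$ reports $\theta'$'' to the one for ``$\theta'$ reports $\theta$''; the $u$-terms cancel, leaving $(\theta'-\theta)\big(y(\theta')-y(\theta)\big)\le 0$, so $y$ is decreasing. Next, the same two inequalities applied to nearby types squeeze the difference quotients of $u$ between $y(\theta')$ and $y(\theta)$; since $y$ is bounded (by $\bar q$), $u$ is Lipschitz, hence absolutely continuous, and at every point where $y$ is continuous — all but countably many, as $y$ is monotone — $u$ is differentiable with $u'(\theta)=-y(\theta)$. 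Applying the fundamental theorem of calculus for absolutely continuous functions on $[\theta,\overline{\theta}]$ then yields $u(\theta)=u(\overline{\theta})+\int_\theta^{\overline{\theta}}y(z)\,dz$.

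\textbf{Sufficiency.} Conversely, suppose $y=qr$ is decreasing and $u$ is given by the displayed integral formula. Subtracting the formula at $\theta$ and at $\theta'$ gives $u(\theta)-u(\theta')=\int_\theta^{\theta'}y(z)\,dz$. If $\theta<\theta'$, monotonicity gives $y(z)\ge y(\theta')$ on $[\theta,\theta']$, so $\int_\theta^{\theta'}y(z)\,dz\ge(\theta'-\theta)y(\theta')$; if $\theta>\theta'$, monotonicity gives $y(z)\le y(\theta')$ on $[\theta',\theta]$, so $\int_\theta^{\theta'}y(z)\,dz=-\int_{\theta'}^{\theta}y(z)\,dz\ge-(\theta-\theta')y(\theta')=(\theta'-\theta)y(\theta')$. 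In either case $u(\theta)\ge u(\theta')+(\theta'-\theta)y(\theta')$, i.e.\ the mechanism is IC. \textbf{IR.} Since $y\ge 0$, the integral formula shows $u$ is nonincreasing in $\theta$, so $\min_{\theta\in\Theta}u(\theta)=u(\overline{\theta})$; hence $u(\theta)\ge 0$ for all $\theta$ if and only if $u(\overline{\theta})\ge 0$.

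The only delicate points are the measure-theoretic steps in the necessity direction — deducing absolute continuity of $u$ from the Lipschitz bound and identifying $u'=-y$ almost everywhere — together with keeping the orientation of the integral consistent (integrating up to $\overline{\theta}$, with $y$ decreasing rather than increasing, which is why the representation has a plus sign). Neither is a genuine obstacle, since $y$ is bounded and monotone, so I do not expect any hard step; this is essentially the envelope/monotonicity characterization of \citet{BM82}, adapted to the present notation.
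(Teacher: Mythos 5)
Your proof is correct. The paper itself gives no proof of this lemma --- it is stated as a citation to \citet{BM82} and the appendices begin with Theorem \ref{theo:fr} --- so there is nothing to compare against except the standard envelope/monotonicity argument, which is exactly what you supply: the two-sided IC inequalities give monotonicity of $y=qr$ and the squeeze on difference quotients, boundedness of $y$ gives Lipschitz continuity of $u$, and the fundamental theorem of calculus for absolutely continuous functions yields the integral representation; the converse and the IR claim follow as you state. All steps, including the measure-theoretic ones and the sign convention from integrating toward $\overline{\theta}$ with a decreasing $y$, check out.
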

In standard screening problems, a single monotone function characterizes incentive compatibility (for instance, in single-object auctions, it is the increasing allocation rule). By contrast, in our setup, the incentive compatibility is characterized by the product $q(\theta)r(\theta)$ of two potentially non-monotonic functions $q$ and $r$. By Lemma \ref{lem:myerson}, this product must decrease in $\theta$ for a mechanism $(q,r,u)$ to be IC.

\subsection{Regulator's surplus and undominated mechanisms}
If the monopolist is of type $\theta$, the regulator's surplus from an IC and IR mechanism $M \equiv (r,q,u)$ is
\begin{align*} 
\textsc{RS}_{\alpha}(\theta,M) &= \textsc{CS}(\theta,M) + \alpha u(\theta)
\end{align*}
where $\alpha \in [0,1)$ is the weight the regulator puts on the monopolist's payoff. We can further simplify the regulator's surplus using equation (\ref{eq:cs}) to 
\begin{align} 
\textsc{RS}_{\alpha}(\theta,M) &= r(\theta)\Big( V(q(\theta)) - c - \theta q(\theta)\Big) - (1-\alpha) u(\theta) \label{eq:RS}\\
&= r(\theta)\textsc{TS}(\theta,q(\theta)) - (1-\alpha) u(\theta). \label{eq:RS2}
\end{align}
Thus, the regulator's surplus  can be separated into two parts: the expected total surplus, which is independent of $\alpha$, and the weighted disutility from the rent paid to the monopolist. The regulator is concerned about expected total surplus as well as how the expected total surplus is distributed across the consumers and the monopolist. As $\alpha$ decreases, the distributional concerns of the regulator get stronger. The extreme scenario is when $\alpha=0$, where the regulator's surplus equals consumer surplus.\footnote{If $\alpha =1$, the regulator's surplus reduces to the total surplus. In this case, the regulator is concerned only about efficiency, and it is well known that the efficient quantity is produced in the optimal mechanism.}


We now introduce a prior-free notion of comparing two IC and IR mechanisms.
\begin{defn} 
An IC and IR mechanism $\widetilde{M}$ {\bf dominates} an IC and IR mechanism $M$ if 
\begin{align*}
\textsc{RS}_{\alpha}(\theta,\widetilde{M}) \ge \textsc{RS}_{\alpha}(\theta,M)~\qquad~\forall~\theta \in \Theta
\end{align*}
with strict inequality holding for some $\theta$. In this case, we say $M$ is {\bf dominated}.

A mechanism $M$ is {\bf undominated} if it is IC and IR and not dominated.
\end{defn}

In what follows, we aim to characterize the set of undominated mechanisms.

\subsection{Remarks on the model}
Some remarks are in order about the modelling of regulatory mechanisms. First, it appears that the regulatory mechanism only controls quantity, and price is determined from inverse demand function. However, we can recast the mechanism to control price, and quantity can be determined by the demand function (as is done in \citet{AS04}).
    
Second, our regulatory mechanisms use transfers. These transfers come from the consumers and go to the monopolist (if they are negative, these are taxes collected from the monopolist that go to the consumers). Implicitly, we assume {\sl a budget-balance constraint} for the regulator. This is standard in the literature that follows \citet{BM82} - for departures, see \citet{LT86}. 

Our results extend to cost functions of the form $k_0 + (c_0+c_1 \theta)q$, where $c_1 > 0$. However, for simplicity of exposition we set $c_0 = 0$, $c_1 = 1$, and $k_0=c$ for the rest of the paper. Similar to \citet{AB22}, this simplification allows us to transparently convey the intuition for our results.\footnote{\citet{BM82} assumes a more general cost function, $k_0 + k_1 \theta + (c_0+c_1 \theta)q$, where private information affects both the marginal and fixed costs. Intuitively, in this case, it is not without loss of generality to restrict the search for the undominated mechanisms to a smaller, well-behaved subset of IC and IR mechanisms, further complicating the analysis. In Appendix \ref{sec:fixed}, we discuss these challenges in detail and exhibit they can be overcome in a polar case of the bilinear cost function $\theta + cq$, where the fixed cost $\theta$ is the private information of the monopolist, and $c > 0$ is the commonly known marginal cost.}

\section{Floor-randomized mechanisms}\label{sec:floor_rand}

This section introduces a subclass of IC and IR mechanisms called {\sl floor-randomized mechanisms}.
Every IC and IR mechanism can be transformed into a unique floor-randomized mechanism. Our main result will show that converting an arbitrary mechanism to a floor-randomized mechanism improves the regulator's surplus at every type.

As an essential step in defining a floor-randomized mechanism, we first describe the quantity floor, $\hat{q}$, which is the unique solution to the following equation:
\begin{align}\label{eq:hatq}
    V(q) - q P(q) &= c.
\end{align}
The LHS is the consumer surplus when the monopolist produces quantity $q$ in absence of any regulation. The equation solves for quantity $\hat{q}$ where this consumer surplus equals fixed cost. To see that a unique positive solution exists, note that $V(q)-qP(q)$ is a strictly increasing function with $\lim_{q \rightarrow 0} V(q) - qP(q)$ equals zero and $V(\overline{q})-\overline{q} P(\overline{q}) > c$, where the latter inequality follows from Assumption [A2].

Next, we define a floor-randomized mechanism where we use the notation $I \succ I'$ for any pair of disjoint intervals $I,I' \subseteq \Theta$ to mean $x > y$ for every $x \in I$ and every $y \in I'$.

\begin{defn}\label{def:fr}
A mechanism $(r,q,u)$ is {\bf floor randomized} if it is IC, IR, $u(\bar{\theta})=0$, and the type space $\Theta$ can be partitioned into three disjoint intervals $\Theta_1,\Theta_{01},\Theta_0$ such that $\Theta_0 \succ \Theta_{01} \succ \Theta_1$ and
\begin{align*}
    q(\theta) &\ge \hat{q}~~{\rm and}~~r(\theta)=1~\qquad~\forall~\theta \in \Theta_1 \\
    q(\theta) &= \hat{q}~~{\rm and}~~r(\theta) \in (0,1)~\qquad~\forall~\theta \in \Theta_{01} \\
    q(\theta) &= 0~~{\rm and}~~r(\theta)=0~\qquad~\forall~\theta \in \Theta_0.
\end{align*}
\end{defn}

Figure \ref{fig:floor_randomized} plots $q(\theta)$ (solid line) and $r(\theta)$ (dash-dotted line) on the vertical axes and illustrates the interval partitioning of the type space in a floor-randomized mechanism. It also sheds light on several other features of a floor-randomized mechanism: 
\begin{itemize}
\item {\sl quantity floor}: whenever the monopolist is allowed to operate with a positive probability, it produces a quantity weakly above $\hat{q}$. Thus, if $\Theta_{01}$ is empty, a deterministic mechanism with quantity floor $\hat{q}$ is also floor randomized.

\item {\sl floor randomization}: if the operation decision is random for some type of the monopolist, this type is prescribed quantity equals the floor $\hat{q}$.

\item {\sl individual monotonicity}: in a floor-randomized mechanism both $q$ and $r$ are decreasing.
\end{itemize}

Starting with an arbitrary IC and IR mechanisms $M \equiv (r,q,u)$, which need not be floor randomized, we can obtain a floor-randomized mechanism as follows: Let $\tilde{q}(\theta)\equiv q(\theta)r(\theta)$, which decreases in $\theta$ because $M$ is IC, and define
\begin{align*}
    q^T(\theta) = \begin{cases}\max(\tilde{q}(\theta),\hat{q}) & \textrm{$\tilde{q}(\theta)>0$} \\
        0 & {\rm otherwise}
    \end{cases} & &  r^T(\theta) = \begin{cases}\min(\frac{\tilde{q}(\theta)}{\hat{q}},1)& \textrm{$\tilde{q}(\theta)>0$} \\
    0 & {\rm otherwise}
    \end{cases}
\end{align*}
At every $\theta$, we have $q^T(\theta)r^T(\theta)$ equals $\tilde{q}(\theta)$. Consequently, the mechanism $M^T\equiv(r^T,q^T,u^T)$, where $u^T(\theta):= \int_{\theta}^{\overline{\theta}}\tilde{q}(z)dz$ for all $\theta$, is incentive compatible and individually rational with $u^T(\overline{\theta})=0$. Moreover, it is also floor randomized with $\Theta_{1} = \{\theta: \tilde{q}(\theta) \ge \hat{q}\}$ and $\Theta_{01} = \{\theta: \hat{q} > \tilde{q}(\theta) > 0\}$. Notice that $r^T(\theta) \in (0,1)$ is true only when $q^T(\theta)=\hat{q}$. Our first main result shows that the transformed mechanism $M^T$ improves the regulator's surplus at every type. Formally, 

\begin{figure}
    \centering
    \includegraphics[width=3in]{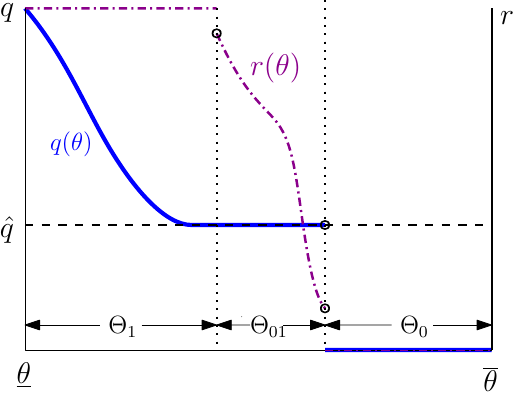}
    \caption{Graphical illustration of a floor-randomized mechanism}
    \label{fig:floor_randomized}
\end{figure}

\begin{theorem}\label{theo:fr}
Any IC and IR mechanism that is not floor randomized is dominated by a floor-randomized mechanism.
\end{theorem}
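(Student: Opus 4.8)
The goal is to show that for any IC and IR mechanism $M \equiv (r,q,u)$ that is not floor randomized, the transformed mechanism $M^T \equiv (r^T, q^T, u^T)$ constructed just before the theorem dominates $M$ — i.e., $\textsc{RS}_\alpha(\theta, M^T) \ge \textsc{RS}_\alpha(\theta, M)$ for all $\theta$, with strict inequality somewhere. The excerpt already verifies that $M^T$ is IC, IR, and floor randomized, so the work is entirely in comparing regulator surpluses type by type. I would start from the representation (\ref{eq:RS2}): $\textsc{RS}_\alpha(\theta, M) = r(\theta)\textsc{TS}(\theta, q(\theta)) - (1-\alpha) u(\theta)$. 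The key observation is that the rent term is already pinned down: since $\tilde q(\theta) = q(\theta) r(\theta) = q^T(\theta) r^T(\theta)$ by construction, Lemma \ref{lem:myerson} gives $u(\theta) = u(\bar\theta) + \int_\theta^{\bar\theta}\tilde q(z)\,dz \ge \int_\theta^{\bar\theta}\tilde q(z)\,dz = u^T(\theta)$, so the transformation weakly decreases the rent at every type (strictly if $u(\bar\theta) > 0$), which already helps the regulator since $(1-\alpha) > 0$.

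**The main step.** It remains to compare the total-surplus terms $r^T(\theta)\,\textsc{TS}(\theta, q^T(\theta))$ versus $r(\theta)\,\textsc{TS}(\theta, q(\theta))$. Write $\textsc{TS}(\theta, q) = V(q) - c - \theta q$, so $r\,\textsc{TS}(\theta,q) = r\big(V(q) - qP(q)\big) + r\big(qP(q) - c - \theta q\big)$; the second bracket is $rqP(q) - rc - \theta (rq)$. Using $rq = \tilde q = r^T q^T$ holds for both mechanisms, the $\theta(rq)$ pieces cancel in the difference. So the difference in total-surplus terms reduces to comparing $r\big(V(q) - qP(q)\big) + rqP(q) - rc$ against $r^T\big(V(q^T) - q^T P(q^T)\big) + r^T q^T P(q^T) - r^T c$. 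Fix a type $\theta$ with $\tilde q(\theta) > 0$ (if $\tilde q(\theta) = 0$ both mechanisms shut down and the surpluses coincide at $0$) and split into two cases. \emph{Case 1: $\tilde q(\theta) \ge \hat q$.} Then $q^T(\theta) = \tilde q(\theta)$ and $r^T(\theta) = 1$. \emph{Case 2: $0 < \tilde q(\theta) < \hat q$.} Then $q^T(\theta) = \hat q$ and $r^T(\theta) = \tilde q(\theta)/\hat q < 1$, and crucially $\hat q$ is defined by $V(\hat q) - \hat q P(\hat q) = c$, so the ``net consumer value'' term $r^T(V(\hat q) - \hat q P(\hat q) - c) = 0$. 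In each case I would show the inequality $r^T\,\textsc{TS}(\theta, q^T) - (1-\alpha)u^T(\theta) \ge r\,\textsc{TS}(\theta, q) - (1-\alpha)u(\theta)$ directly; the natural tool is that $V(q) - qP(q)$ is strictly increasing and convex-like with value $0$ at $0$ and value $c$ at $\hat q$, combined with $r \le 1$, and that the map $x \mapsto \max(x,\hat q)$ paired with $x \mapsto \min(x/\hat q, 1)$ moves mass toward producing exactly $\hat q$ at full-equivalent scale without changing $\tilde q$.

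**Where the difficulty lies.** The obstacle is Case 2, where the original mechanism runs the firm at a low quantity $q(\theta)$ possibly above or below $\hat q$ but with the product $\tilde q(\theta)$ small: here one must show that ``smearing out'' the operation probability and raising the quantity to $\hat q$ does not destroy total surplus. The essential inequality is something like $r\big(V(q) - c - \theta q\big) \le r^T\big(V(\hat q) - c - \theta \hat q\big)$ when $rq = r^T \hat q = \tilde q$ and $r^T < 1$; equivalently, writing $r = \tilde q / q$ and $r^T = \tilde q/\hat q$, one needs $\frac{V(q) - c}{q} \le \frac{V(\hat q) - c}{\hat q}$ for all relevant $q$, i.e. that $\frac{V(q)-c}{q}$ is maximized (over the operating region) at $q = \hat q$. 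This should follow from $\hat q$ being exactly the point where average net consumer value equals $P(\hat q)$: indeed $V(\hat q) - \hat q P(\hat q) = c$ rearranges to $\frac{V(\hat q) - c}{\hat q} = P(\hat q)$, and since $P$ is strictly decreasing while $\frac{V(q)-c}{q}$ has a single interior maximizer characterized by $\frac{d}{dq}\frac{V(q)-c}{q} = 0 \iff P(q) = \frac{V(q)-c}{q}$, the maximizer is precisely $\hat q$. I would nail this down with a one-line first-order argument or a direct integral comparison using monotonicity of $P$, then combine with the rent inequality and check that strict domination holds because $M$ not being floor randomized forces either $u(\bar\theta) > 0$, or some type to violate one of the three structural conditions of Definition \ref{def:fr}, each of which yields a strict gain at some $\theta$.
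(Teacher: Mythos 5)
Your proposal is correct and reaches the theorem by essentially the same route as the paper: the same transformation $M^T$, the same reduction of the surplus difference to $r^T\big(V(q^T(\theta))-c\big)-r(\theta)\big(V(q(\theta))-c\big)$ after the rent term is handled by the envelope formula and the $\theta\,q(\theta)r(\theta)$ pieces cancel, and the same two cases according to whether $\tilde q(\theta)\ge\hat q$. The only substantive difference is packaging. The paper's key object is the concave closure of $\widetilde V(q)=(V(q)-c)\mathbf{1}_{q>0}$, which is linear with slope $P(\hat q)$ on $[0,\hat q]$ and equals $V(q)-c$ beyond; yours is the single-peakedness of the average value $A(q):=(V(q)-c)/q$, whose unique maximizer is $\hat q$ because $A'(q)$ has the sign of $c-\big(V(q)-qP(q)\big)$. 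These are the same fact (the linear segment of the closure has slope $A(\hat q)=P(\hat q)$, and it lies above $\widetilde V$ precisely because $A(q)\le A(\hat q)$), and your first-order argument for it is valid under [A1]. One caution: you locate the difficulty in Case 2, but Case 1 is the place where a careless argument fails --- concavity of $V$ alone gives only $V(qr)\ge rV(q)$, which falls short of the needed $V(qr)-c\ge r(V(q)-c)$ by the term $(1-r)c$. Your own lemma closes this: writing both sides as $\tilde q(\theta)\,A(\cdot)$, Case 1 needs $A(qr)\ge A(q)$ with $\hat q\le qr\le q$, which is exactly the decreasing branch of $A$ on $[\hat q,\infty)$; you should invoke it explicitly rather than gesture at ``convex-like'' behavior. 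Your treatment of strictness is, if anything, more careful than the paper's, since you separately account for the possibility that $M$ fails to be floor randomized only because $u(\bar\theta)>0$.
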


Thus, the set of undominated mechanisms is a subset of the set of floor-randomized mechanisms. The proof of Theorem \ref{theo:fr} is in Appendix \ref{app:floor_rand}. To see the main idea behind this proof, consider a deterministic IC and IR mechanisms $M$ in which every type is allowed to operate and produces a strictly positive quantity $q^\dag <\hat{q}$. Furthermore, assume that $u(\overline{\theta})=0$. Clearly, $M$ is not floor randomized. In its floor-randomized transformation, $M^T$, every type is allowed to operate with probability $q^\dag/\hat{q}$ and produces a quantity $\hat{q}$. We now argue that moving from deterministic operation to stochastic operation can make the regulator better off at every $\theta$. To see why, consider the difference
\begin{align*}
\textsc{RS}_{\alpha}(\theta, M^T)-\textsc{RS}_{\alpha}(\theta, M) &= \frac{q^\dag}{\hat{q}}\Big(V(\hat{q})-c\Big)- \Big(V(q^\dag)-c\Big)\\
&= \frac{q^\dag}{\hat{q}}\widetilde{V}(\hat{q}) + \left(1-\frac{q^\dag}{\hat{q}}\right)\widetilde{V}(0) - \widetilde{V}(q^\dag)
\end{align*}
where $\widetilde{V}(q) = \Big(V(q)-c\Big)\mathbf{1}_{\{q>0\}}$. If $\widetilde{V}(q)$ was a concave function, then Jensen's inequality would have implied that the difference is negative. However, $\widetilde{V}(q)$ is not concave because of the discontinuity at $0$. Figure \ref{fig:concav_0} plots  $\widetilde{V}(q)$, and pictorially depicts both the terms in the difference. As can be seen, the expected value of $\widetilde{V}(q)$ is denoted by the ($\blacksquare$). On the other hand, the value of $\widetilde{V}(q^\dag)$ is denoted by ($\times$), and lies at a lower level. Thus, randomization is valuable to the regulator for quantities $q < \hat{q}$ as it reduces the occurrence of the fixed cost and simultaneously increases the value to the consumers relative to a deterministic mechanism. 

\begin{figure}
    \centering
    \includegraphics[width=4.5in]{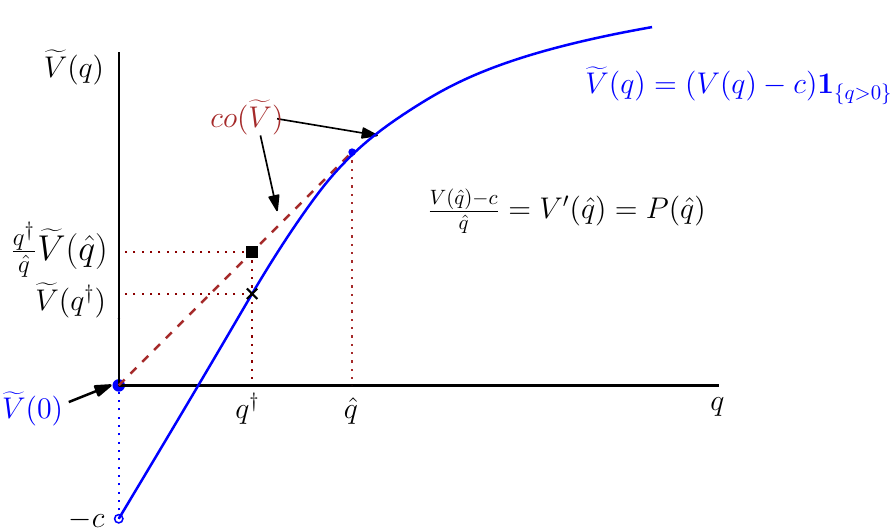}
    \caption{Graphical illustration of $\widetilde{V}$}
    \label{fig:concav_0}
\end{figure}

\section{Undominated mechanisms}\label{sec:characterization}

In this section, we provide a near-complete characterization of undominated mechanisms, and show that not every floor-randomized mechanism is undominated. We begin by introducing the {\bf efficient quantity rule}, $q_e$, which assigns every marginal cost $\theta$ the quantity that maximizes $\textsc{TS}(\theta,q)$. Thus, 
\begin{align*} 
P(q_e(\theta)) = \theta~~~\textrm{or}~~~q_e (\theta) = P^{-1}(\theta)~\qquad~\forall~\theta.
\end{align*}
Because $P$ is strictly decreasing $q_e$ is single valued, positive, 
and strictly decreasing.\footnote{At a given $\theta$, $q_e(\theta)$ is the unique maximizer of the total surplus $\textsc{TS}(\theta,q) = V(q)-c-\theta q$ because it is the only quantity satisfying the first-order condition and $\textsc{TS}(\theta,q_e(\theta)) > 0$. To see why the latter inequality holds, observe 
\begin{align*}
V(q_e(\theta)) - \theta q_e(\theta)  \ge V(q_e(\overline{\theta})) - \theta q_e(\overline{\theta})  \ge V(q_e(\overline{\theta})) - \overline{\theta} q_e(\overline{\theta})  > c, 
\end{align*} 
where the first inequality follows from $q_e(\theta)$ being strictly decreasing and the last one holds because of assumption [A2]. 
} Moreover, continuity and monotonicity of $P$ (assumption [A1]) ensures that $q_e$ is continuous.
The next observation compares $q_e(\theta)$ to the quantity floor $\hat{q}$.
\begin{obs}\label{ob:hatq_bounds}
For every $\theta$, we have $q_e(\theta)> \hat{q}~~{\rm and}~~\textsc{TS}(\theta, \hat{q})>0$.
\end{obs}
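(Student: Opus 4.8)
The plan is to reduce both inequalities to their values at the most inefficient type $\overline{\theta}$ by monotonicity, and then combine Assumption [A2] with the defining equation \eqref{eq:hatq} of the quantity floor. The two parts are not independent: the bound $\textsc{TS}(\theta,\hat{q})>0$ will be obtained from $q_e(\theta)>\hat{q}$, so I would prove the latter first.

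\textbf{Step 1: $q_e(\theta) > \hat{q}$ for all $\theta$.} Since $q_e$ is strictly decreasing, it suffices to show $q_e(\overline{\theta}) > \hat{q}$. Recall from the discussion following \eqref{eq:hatq} that $g(q) := V(q) - q P(q)$ is strictly increasing with $g(\hat{q}) = c$; hence it is enough to verify $g\big(q_e(\overline{\theta})\big) > c$. By definition of the efficient quantity, $P(q_e(\overline{\theta})) = \overline{\theta}$, so $g\big(q_e(\overline{\theta})\big) = V(q_e(\overline{\theta})) - \overline{\theta}\, q_e(\overline{\theta}) = \textsc{TS}\big(\overline{\theta}, q_e(\overline{\theta})\big) + c$. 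Assumption [A2] says $\textsc{TS}\big(\overline{\theta}, q_e(\overline{\theta})\big) > 0$, so $g(q_e(\overline{\theta})) > c = g(\hat{q})$, and strict monotonicity of $g$ gives $q_e(\overline{\theta}) > \hat{q}$. Monotonicity of $q_e$ then yields $q_e(\theta) \ge q_e(\overline{\theta}) > \hat{q}$ for every $\theta$.

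\textbf{Step 2: $\textsc{TS}(\theta, \hat{q}) > 0$ for all $\theta$.} Since $\theta \mapsto \textsc{TS}(\theta, \hat{q}) = V(\hat{q}) - c - \theta\hat{q}$ is decreasing (because $\hat{q} > 0$), it suffices to check $\textsc{TS}(\overline{\theta}, \hat{q}) > 0$. Substituting $V(\hat{q}) - c = \hat{q}P(\hat{q})$ from \eqref{eq:hatq} gives $\textsc{TS}(\overline{\theta}, \hat{q}) = \hat{q}\big(P(\hat{q}) - \overline{\theta}\big)$, so it remains to show $P(\hat{q}) > \overline{\theta}$. From Step 1, $\hat{q} < q_e(\overline{\theta}) = P^{-1}(\overline{\theta})$, and since $P$ is strictly decreasing, $P(\hat{q}) > P\big(q_e(\overline{\theta})\big) = \overline{\theta}$, as desired.

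I do not expect any real obstacle here; the argument is short. The only points requiring a bit of care are bookkeeping ones: carrying out Step 1 before Step 2 (since the latter invokes $\hat q < q_e(\overline\theta)$), and justifying the two reductions to the endpoint $\overline{\theta}$ — $q_e$ being strictly decreasing is already recorded in the excerpt, and $\textsc{TS}(\cdot,\hat{q})$ is decreasing simply because $\hat{q}>0$.
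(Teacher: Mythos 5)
Your proof is correct and follows essentially the same route as the paper: the paper works with $\Delta(q)=V(q)-c-qP(q)$ (your $g(q)-c$), reduces both inequalities to the endpoint $\overline{\theta}$ via monotonicity, invokes [A2] to get $q_e(\overline{\theta})>\hat{q}$, and then uses the defining equation of $\hat{q}$ together with $P(\hat{q})>\overline{\theta}$ for the total-surplus bound. No differences worth noting.
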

This observation, which is proven in Appendix \ref{app:characterization}, states that the quantity floor is strictly below the efficient quantity for every type. We further partition the set of floor-randomized mechanisms based on whether or not the quantity produced is weakly below the efficient quantity at all types.

\begin{defn} 
A mechanism $(r,q,u)$ satisfies {\bf downward distortion (DD)} if 
\begin{align*} 
 q(\theta) \le q_e(\theta)~\qquad~\forall~\theta 
\end{align*}
with an equality at $\underline{\theta}$. If the inequality is strict for every $\theta > \underline{\theta}$, then we say that mechanism $(r,q,u)$ satisfies {\bf strict downward distortion (strict DD)}.
\end{defn}

As we establish later in this section, the quantity rule of every undominated mechanism will satisfy DD. In addition, undominated mechanisms must also be left continuous. 

\begin{defn}
    A mechanism $(r,q,u)$ is {\bf left continuous} if the product $qr$ is left continuous.
\end{defn}

Our next main results shows that undominated mechanisms satisfy downward distortion and left-continuity.

\begin{theorem} 
\label{theo:undnec}
Suppose $M \equiv (r,q,u)$ is an IC and IR mechanism. Then, the following statements are true:
\begin{enumerate}
\item If $M$ is undominated, then it is floor randomized, left continuous, and satisfies DD.

\item If $M$ is dominated, then it is dominated by a floor-randomized and left-continuous mechanism satisfying DD.

\end{enumerate}
\end{theorem}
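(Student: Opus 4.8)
The plan is to prove both statements simultaneously by a sequence of domination arguments, each of which either establishes one of the desired properties or replaces the mechanism by a dominated-by mechanism that has it. By Theorem~\ref{theo:fr}, it suffices to work within the class of floor-randomized mechanisms: if $M$ is undominated it is already floor randomized, and if $M$ is dominated it is dominated by some floor-randomized $M^T$, so in either case we may assume we are dealing with a floor-randomized mechanism (and for statement~2 it is enough to show that an arbitrary floor-randomized mechanism is either undominated or dominated by a floor-randomized, left-continuous, DD mechanism). So fix a floor-randomized $M \equiv (r,q,u)$ with partition $\Theta_0 \succ \Theta_{01} \succ \Theta_1$ and $u(\bar\theta)=0$, and recall $\textsc{RS}_\alpha(\theta,M) = r(\theta)\textsc{TS}(\theta,q(\theta)) - (1-\alpha)u(\theta)$ with $u(\theta)=\int_\theta^{\bar\theta} q(z)r(z)\,dz$.

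\textbf{Step 1: Downward distortion at the top of the active region.} First I would show that if $q(\theta) > q_e(\theta)$ on a set of positive measure, $M$ is dominated. The idea: capping the quantity at $q_e(\theta)$ wherever $q(\theta) > q_e(\theta)$ weakly increases $\textsc{TS}(\theta,q(\theta))$ pointwise (since $q_e(\theta)$ maximizes $\textsc{TS}(\theta,\cdot)$ and $\textsc{TS}(\theta,\cdot)$ is single-peaked), and it weakly decreases the product $q(\theta)r(\theta)$, hence weakly decreases $u(\theta') = \int_{\theta'}^{\bar\theta} q(z)r(z)\,dz$ for every $\theta'$ — so the rent term $-(1-\alpha)u(\theta')$ weakly increases everywhere. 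One must check that the capped quantity schedule, together with a suitably renormalized $r$, remains a valid floor-randomized IC--IR mechanism: the product $q r$ stays decreasing, it only shrinks, and on $\Theta_1$ capping keeps $q \ge \hat q$ by Observation~\ref{ob:hatq_bounds} (which gives $q_e(\theta) > \hat q$). If any strict inequality is picked up at some type — which happens precisely when the cap is binding on a positive-measure set and strictly lowers $u$ somewhere, or strictly raises $\textsc{TS}$ somewhere — then $M$ is dominated; otherwise $q \le q_e$ a.e., and a further small adjustment at countably many points (or invoking left-continuity from Step~2) pins down $q(\theta)\le q_e(\theta)$ everywhere on the active region. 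The equality $q(\underline\theta) = q_e(\underline\theta)$ in the definition of DD must be handled separately: if $q(\underline\theta) < q_e(\underline\theta)$, raising the quantity slightly near $\underline\theta$ increases $\textsc{TS}$ at low types with only a higher-order effect on rents — I would argue this is a strict improvement at $\underline\theta$ (or that the mechanism is dominated unless this equality already holds), so WLOG $q(\underline\theta)=q_e(\underline\theta)$.

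\textbf{Step 2: Left-continuity.} Next I would show that if the product $qr$ (equivalently $\tilde q = qr$, which is decreasing) has a downward jump at some $\theta^*$ with $qr$ not left continuous there, $M$ is dominated. The decreasing function $\tilde q$ has at most countably many jumps; at a jump $\theta^*$ the right limit is strictly below the left limit, and if we redefine $\tilde q(\theta^*)$ to equal its left limit — i.e. make $\tilde q$ left continuous — this changes $\tilde q$ at a single point, hence does not change $u(\cdot) = \int_\cdot^{\bar\theta}\tilde q$, and changes $\textsc{RS}_\alpha(\theta^*,\cdot)$ from $r(\theta^*)\textsc{TS}(\theta^*,q(\theta^*))$ to the value using the left-limit quantity/probability. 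The content is that this single-point change weakly improves $\textsc{RS}_\alpha(\theta^*,M)$; this uses the floor-randomized structure and the single-peakedness of $\textsc{TS}(\theta^*,\cdot)$ on $[\hat q,\infty)$ together with $\textsc{TS}(\theta^*,\hat q)>0$ (Observation~\ref{ob:hatq_bounds}), so that producing the larger left-limit quantity (or running the firm at the larger left-limit probability when at the floor) yields at least as much total surplus. If it strictly improves at some such $\theta^*$, $M$ is dominated; doing this at every discontinuity — a countable operation that can be done simultaneously since it only changes $\tilde q$ on a countable set and leaves $u$ untouched — yields a left-continuous mechanism weakly dominating $M$. (If $M$ was undominated, no such strict improvement exists, forcing $M$ itself to be left continuous.)

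\textbf{Step 3: Assembling the two statements.} Combining: starting from a floor-randomized $M$, Step~1 produces a DD mechanism weakly dominating it, and Step~2 produces a left-continuous mechanism weakly dominating that; both operations preserve floor-randomization, and one checks they compose (capping quantities doesn't create new discontinuities going the wrong way; fixing discontinuities doesn't violate DD since it only pushes $\tilde q$ up to a left limit of values already $\le q_e$, and $q_e$ is continuous). If $M$ is dominated, the end product is the desired dominating mechanism, giving statement~2; if $M$ is undominated, every step must have been vacuous (no strict improvement anywhere), so $M$ is already floor randomized, left continuous, and DD, giving statement~1. \textbf{The main obstacle} I anticipate is Step~1's bookkeeping: verifying that capping $q$ at $q_e$ and renormalizing $r$ keeps the mechanism in the floor-randomized class with the product $qr$ still monotone \emph{and} simultaneously yields a strict improvement at \emph{some} type whenever the original violates DD on a positive-measure set (the improvement could be "hidden" entirely in the rent term, requiring care to locate the type where it strictly helps), together with correctly handling the boundary condition $q(\underline\theta)=q_e(\underline\theta)$, which is a genuinely separate argument from the interior inequality. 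A secondary subtlety is ensuring the countably-many single-point fixes in Step~2 can be performed all at once without disturbing the envelope formula for $u$ — this works because changing a decreasing function on a countable set leaves its integral unchanged, but it should be stated explicitly.
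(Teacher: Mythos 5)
Your proposal follows essentially the same route as the paper: reduce to floor-randomized mechanisms via Theorem~\ref{theo:fr}, then establish two lemmas showing that a floor-randomized mechanism violating DD (resp.\ left-continuity) is dominated by one satisfying it, with the DD step done first so that the left-continuity fix (raising $qr$ to its left limits on a countable, hence rent-neutral, set) is guaranteed to weakly raise total surplus. The one place you overcomplicate is the boundary condition $q(\underline{\theta})=q_e(\underline{\theta})$: the paper simply redefines $q$ and $r$ at the single point $\underline{\theta}$, which leaves every rent $u(\theta)=\int_{\theta}^{\overline{\theta}}q(z)r(z)dz$ unchanged and weakly raises total surplus there, so no neighborhood perturbation or higher-order rent accounting is needed (and no renormalization of $r$ is needed in the capping step either, since $q(\theta)>q_e(\theta)>\hat{q}$ forces $r(\theta)=1$, nor can the strict improvement be ``hidden'' in the rent term, since the cap strictly raises total surplus pointwise wherever it binds).
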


\begin{figure}[ht]
    \centering
    \includegraphics[width=3.8in]{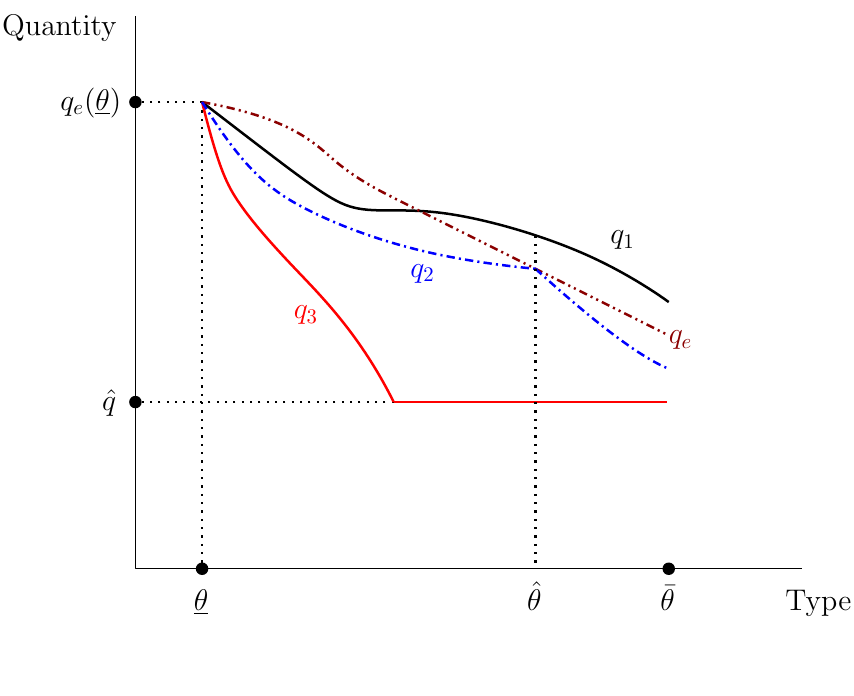}
    \caption{Illustration of necessary and sufficient conditions}
    \label{fig:Theorem2}
\end{figure}

Figure \ref{fig:Theorem2} plots quantity rules $q_1, q_2$, and $q_3$ for three floor-randomized mechanisms in which the monopolist always operates. The mechanism $M_1$ corresponding to the quantity rule $q_1$ violates DD, and by part (1) of Theorem \ref{theo:undnec}, it is dominated. To see the intuition, consider another floor-randomized mechanism $M \equiv (q, u)$ where $q(\theta) = \min \{q_1(\theta),q_e(\theta)\}$. For $\hat{\theta}$, which is shown in the figure, the total surplus is strictly larger under $M$ because $q(\theta) =q_e(\theta) < q_1(\theta)$. On the other hand, the rent paid to the monopolist is smaller because $q(\theta) \le q_1(\theta) $ for all $\theta$. The proof in the Appendix \ref{app:characterization} generalizes this intuition beyond deterministic floor-randomized mechanisms. Moreover, it  shows why left-continuity is also necessary.

Next, we show that strict DD, a stronger version of DD, and the remaining necessary conditions imply undomination. 

\begin{theorem}\label{theo:undsuff}
A floor-randomized, left-continuous mechanism satisfying strict DD is undominated.
\end{theorem}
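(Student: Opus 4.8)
The plan is to argue by contraposition: assume a floor-randomized, left-continuous mechanism $M \equiv (r,q,u)$ satisfying strict DD is dominated by some IC and IR mechanism $\widetilde{M} \equiv (\tilde r, \tilde q, \tilde u)$, and derive a contradiction. By Theorem~\ref{theo:fr} we may take $\widetilde{M}$ to be floor randomized (if the dominating mechanism is not, replace it by its floor-randomized transformation, which still dominates $M$). Write $m(\theta) := q(\theta)r(\theta)$ and $\tilde m(\theta) := \tilde q(\theta)\tilde r(\theta)$; both are decreasing by Lemma~\ref{lem:myerson}. Using the envelope formula, $u(\theta) = \int_\theta^{\bar\theta} m(z)\,dz$ and $\tilde u(\theta) = \int_\theta^{\bar\theta} \tilde m(z)\,dz$ (recall $u(\bar\theta)=0$ for floor-randomized mechanisms). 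The domination inequality $\textsc{RS}_\alpha(\theta,\widetilde M) \ge \textsc{RS}_\alpha(\theta,M)$ for all $\theta$, rewritten via (\ref{eq:RS2}), becomes
\[
\tilde r(\theta)\,\textsc{TS}(\theta,\tilde q(\theta)) - r(\theta)\,\textsc{TS}(\theta,q(\theta)) \ge (1-\alpha)\Big(\tilde u(\theta) - u(\theta)\Big) = (1-\alpha)\int_\theta^{\bar\theta}\big(\tilde m(z) - m(z)\big)\,dz,
\]
with strict inequality somewhere.

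The key step is to analyze this at the type(s) where $m$ and $\tilde m$ differ. First I would show $m = \tilde m$ almost everywhere is impossible together with strict domination somewhere: since both mechanisms are floor randomized and left-continuous, $m = \tilde m$ a.e. forces $m = \tilde m$ everywhere (left-continuous decreasing functions agreeing a.e. agree everywhere), hence $u = \tilde u$ and, on the operating region, the total-surplus terms must satisfy $\tilde r\,\textsc{TS}(\theta,\tilde q) = r\,\textsc{TS}(\theta,q)$ with strict inequality somewhere — but floor randomization pins down $(\tilde r,\tilde q)$ from $\tilde m$ exactly as it pins down $(r,q)$ from $m$ (the transformation in Section~\ref{sec:floor_rand} is the unique floor-randomized representative of a given product function), so $\tilde m = m$ gives $\tilde r = r$, $\tilde q = q$, a contradiction. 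Therefore $\tilde m \ne m$ on a set of positive measure. Let $\theta^\ast$ be the infimum of types where they differ; by left-continuity and monotonicity one can locate a point (or one-sided neighborhood) where, say, $\tilde m(\theta) > m(\theta)$ or $\tilde m(\theta) < m(\theta)$ strictly, and I would run the two cases.

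In the case $\tilde m(\theta) < m(\theta)$ on an interval just above some point while $\tilde m = m$ below it, the right-hand side $(1-\alpha)\int_\theta^{\bar\theta}(\tilde m - m)$ is strictly negative for $\theta$ near that point, so domination would require the total-surplus term to compensate — but since $M$ satisfies \emph{strict} DD, $q(\theta) < q_e(\theta)$, and one checks that lowering the product (hence $\tilde q \le q$ pointwise on the operating range, by the floor-randomized structure) strictly \emph{decreases} $\textsc{TS}(\theta,\cdot)$ at types below $q_e$, because $\textsc{TS}(\theta,\cdot)$ is strictly increasing on $[0,q_e(\theta)]$; so the total-surplus term moves the wrong way too, contradicting domination. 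The symmetric case $\tilde m(\theta) > m(\theta)$ somewhere is the genuine obstacle: here $\tilde q$ may exceed $q$, so the total-surplus term \emph{could} increase, but then $\tilde q(\theta')$ could exceed $q_e(\theta')$ at nearby types (since $q$ was already close to $q_e$ is not assumed — only $q<q_e$), and crucially the extra rent $(1-\alpha)\int(\tilde m - m)$ at all lower types must be outweighed. The standard device is an integration-by-parts / aggregate argument: integrate the domination inequality against a suitable signed measure, or localize using left-continuity to reduce to an infinitesimal perturbation $\tilde m = m + \varepsilon\mathbf{1}_{[\theta_0,\theta_1]}$ and show the first-order effect on $\textsc{RS}_\alpha$ at $\theta \le \theta_0$ is $\big(\text{TS-gain bounded by }\varepsilon(P(q)-\theta)\big) - (1-\alpha)\varepsilon(\theta_1-\theta_0) < 0$, the strict inequality coming from $P(q(\theta))-\theta > 0$ being \emph{bounded away from forcing the sign} precisely when combined with strict DD at the margin; pushing $\varepsilon$ perturbations and using that $r\le 1$ caps the TS-gain. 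I expect this balancing — showing the rent cost at infra-marginal types always strictly dominates the total-surplus gain under strict DD and left-continuity — to be the main obstacle, and I would handle it by the localization-to-a-small-block argument combined with the uniqueness of the floor-randomized representative.
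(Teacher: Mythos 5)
There is a genuine gap, and it sits exactly where the theorem's content lies. In your first case ($\tilde m<m$ on an interval), you argue that since the right-hand side $(1-\alpha)\int_\theta^{\bar\theta}(\tilde m-m)$ is negative and lowering the product also lowers $\textsc{TS}$, ``the total-surplus term moves the wrong way too, contradicting domination.'' This is a non-sequitur: when $\widetilde M$ reduces the product, the rent term moves in the regulator's \emph{favor} (the RHS of your displayed inequality is negative), so the total-surplus term is \emph{permitted} to fall --- domination only fails if the TS loss strictly exceeds the rent saving, and you never make that comparison. The efficient mechanism is a concrete counterexample to your reasoning: it is floor-randomized, left-continuous, satisfies DD, and $\textsc{TS}(\theta,\cdot)$ is strictly increasing below $q_e(\theta)$, yet Theorem \ref{theo:efficient} dominates it by exactly the perturbation you claim is impossible, namely lowering $q$ on an interval. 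What strict DD actually buys is that the marginal TS loss per unit of quantity reduction, $P(q(\theta))-\theta$, is bounded away from zero, so that \emph{near the top} of the reduction region the first-order TS loss beats the rent saving, which is only of order $(\theta_h-\theta)$. This is why the paper's proof takes $\theta_h=\sup\{\theta:\tilde q(\theta)\tilde r(\theta)<q(\theta)r(\theta)\}$ (not the infimum of the set where the mechanisms differ, as you propose) and then runs an explicit $\Delta$--$\epsilon$ estimate in a left neighbourhood of $\theta_h$; at the infimum the rent saving is large and no contradiction is available.

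Two further points. First, you invoke only Theorem \ref{theo:fr} to make $\widetilde M$ floor-randomized, but you also need $\widetilde M$ to satisfy DD and left-continuity (part (2) of Theorem \ref{theo:undnec}): DD at $\underline{\theta}$ for \emph{both} mechanisms is what forces equality of total surplus there and hence $\tilde u(\underline{\theta})\le u(\underline{\theta})$, which is how one shows the set $\{\tilde q\tilde r<qr\}$ --- and not merely $\{\tilde m\neq m\}$ --- has positive measure (Lemma \ref{lem:Dqr}). Second, the case you flag as ``the genuine obstacle,'' $\tilde m>m$, is in fact the easy one: it is absorbed into the rent comparison at $\underline{\theta}$ just described and into the sign of $\int_{\theta_h}^{\bar\theta}(qr-\tilde q\tilde r)\le 0$ in the main estimate. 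As written, your sketch identifies the correct trade-off but resolves both cases by assertion rather than by the quantitative argument that the theorem requires.
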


In Figure \ref{fig:Theorem2}, the quantity rule $q_2$ satisfies DD but violates strict DD. Hence, Theorem \ref{theo:undsuff} is silent whether the mechanism corresponding to $q_2$ is dominated or not. On the other hand the quantity rule $q_3$ satisfies strict DD and is left continuous. Hence, the corresponding floor-randomized mechanism is undominated.

\begin{figure}
    \centering
    \includegraphics[width=4.5in]{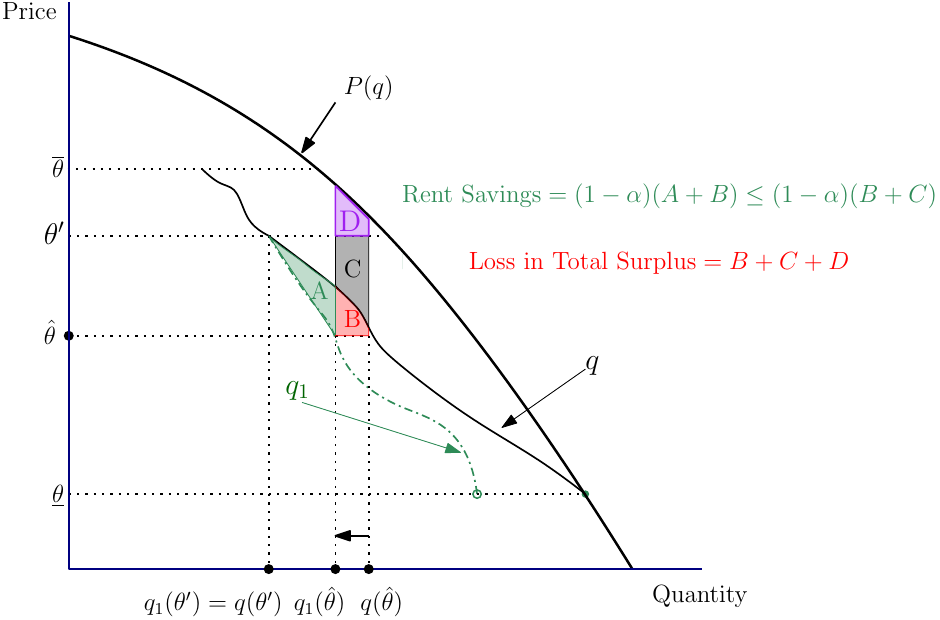}
    \caption{Strict DD and sufficiency}
    \label{fig:illu}
\end{figure}

The proof of Theorem \ref{theo:undsuff} is in Appendix \ref{app:characterization}. Here, we discuss the role of strict DD in the proof. Consider a quantity rule $q$, which satisfies strict DD and corresponds to a deterministic floor-randomized mechanism $M$. Such a $q$ is depicted in Figure \ref{fig:illu}. The price-axis also plots the types, whereas the horizontal axis plots quantity. To show that $q$ is undominated, it must be that any other mechanism cannot improve the regulator's surplus at some type without lowering it at other types. In particular, the quantity rule $q_1$, which is obtained by lowering $q$ at some types, should not be able to improve the regulator's surplus at $\hat{\theta}$. The rent saving corresponding to $q_1$ is given by $(1-\alpha)(A+B)$. Whereas the corresponding loss in the total surplus is given by $(B+C+D)$. The strict DD ensures that $D>0$, and therefore, even if $\alpha=0$ and area $A$ equals area $C$, the rent saving is always strictly smaller than the loss in the total surplus.\footnote{Observe, in the context of $\hat{\theta}$ shown in Figure \ref{fig:illu}, we always have $(A+B) \le (B+C)$. This is because $(A+B)$ is the area of a triangular region that has the same base and same height as the rectangle with area $(B+C)$.} Thus, $q_1$ can never dominate $q$.

\subsection{Discussion}
There is a gap between our necessary (Theorem \ref{theo:undnec}) and sufficient (Theorem \ref{theo:undsuff}) conditions. Moreover, none of these conditions depend on $\alpha$, the parameter that captures the regulator's distributional preferences. These observations raise several questions: First, what kind of undominated mechanisms satisfy the necessary conditions but not the sufficient conditions? Does the set of such undominated mechanisms vary with $\alpha$? Finally, is there a significant loss in terms of expected regulator's surplus if we restrict attention to undominated mechanisms that satisfy our sufficient conditions? This section answers these questions.

To answer the first question, consider an IC and IR mechanism $M_e \equiv (q_e,u_e)$ in which each type $\theta$ produces quantity $q_e(\theta)$ and $u_e(\overline{\theta})=0$. We call $M_e$ the {\sl efficient mechanism}. Clearly, the efficient mechanism is floor randomized because it is deterministic and $q_e(\theta) > \hat{q}$ for all $\theta$ (Observation \ref{ob:hatq_bounds}). Moreover, it is continuous, and satisfies DD. However, it violates strict DD, and therefore, Theorem \ref{theo:undsuff} is silent on whether $M_e$ is undominated. The next result shows that $M_e$ is dominated, no matter what $\alpha \in [0,1)$ is, when $P$ is twice continuously differentiable.

\begin{theorem}
    \label{theo:efficient}
    If $P$ is twice continuously differentiable, then the efficient mechanism is dominated.
\end{theorem}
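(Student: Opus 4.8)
The plan is to show that the efficient mechanism $M_e$ is dominated by exhibiting an explicit floor-randomized, left-continuous mechanism that matches $M_e$ at $\underline\theta$ but strictly beats it elsewhere. The natural candidate is a deterministic mechanism $M \equiv (q, u)$ with a quantity rule $q$ that agrees with $q_e$ at $\underline\theta$, lies strictly below $q_e$ for all $\theta > \underline\theta$, and stays weakly above $\hat q$ (so that, by Observation \ref{ob:hatq_bounds}, the mechanism is genuinely floor randomized with $\Theta_{01} = \Theta_0 = \emptyset$). A convenient choice is something like $q(\theta) = q_e(\theta) - \varepsilon\,(\theta - \underline\theta)^2$ for a small $\varepsilon > 0$, or more robustly $q(\theta) = (1-\delta) q_e(\theta) + \delta\, q_e(\underline\theta)$ scaled so the pinning at $\underline\theta$ and the floor constraint both hold — but the perturbation must vanish fast enough at $\underline\theta$ that the total-surplus loss is second order there. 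Since $q$ is continuous and decreasing, $M$ is automatically IC, IR, left continuous, and satisfies DD.

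The core computation is to compare $\textsc{RS}_\alpha(\theta, M)$ with $\textsc{RS}_\alpha(\theta, M_e)$ using the decomposition \eqref{eq:RS2}: the difference splits into a total-surplus term $r(\theta)[\textsc{TS}(\theta, q(\theta)) - \textsc{TS}(\theta, q_e(\theta))]$, which is nonpositive (and typically strictly negative) since $q_e(\theta)$ maximizes $\textsc{TS}(\theta,\cdot)$, and a rent term $-(1-\alpha)[u(\theta) - u_e(\theta)]$, which is nonnegative since $q(z) \le q_e(z)$ for all $z$ forces $u(\theta) = \int_\theta^{\bar\theta} q(z)\,dz \le \int_\theta^{\bar\theta} q_e(z)\,dz = u_e(\theta)$. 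I would show the rent savings strictly dominate the total-surplus loss at every $\theta > \underline\theta$. The key quantitative fact here is that the per-type total-surplus loss is \emph{quadratic} in the perturbation $q_e(\theta) - q(\theta)$ — because $q_e(\theta)$ is an interior maximizer, a Taylor expansion gives $\textsc{TS}(\theta, q_e(\theta)) - \textsc{TS}(\theta, q(\theta)) \approx \tfrac12 (-P'(q_e(\theta)))(q_e(\theta) - q(\theta))^2$, and this is where $P \in C^2$ is used to control the remainder uniformly — whereas the rent saving $u_e(\theta) - u(\theta) = \int_\theta^{\bar\theta} (q_e(z) - q(z))\,dz$ is \emph{linear} in the perturbation and, crucially, is bounded below by an integral over the whole interval $[\theta, \bar\theta]$, not just a neighborhood of $\theta$. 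So for a small enough perturbation the linear gain beats the quadratic loss everywhere on $(\underline\theta, \bar\theta]$; at $\bar\theta$ itself one checks the rent term $u_e(\bar\theta) - u(\bar\theta) = 0$ but the TS loss is also $0$ only if $q(\bar\theta) = q_e(\bar\theta)$, so the perturbation should be chosen to keep a strict wedge near $\bar\theta$ (e.g. not have $q$ touch $q_e$ again), making the inequality strict there too; and at $\underline\theta$ both differences vanish, as required for "dominates" rather than contradicting DD's equality constraint.

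The main obstacle is making the comparison hold \emph{simultaneously at every} $\theta$, in particular near the two endpoints. Near $\bar\theta$, the rent saving $\int_\theta^{\bar\theta}(q_e - q)$ shrinks to zero, so one must ensure the total-surplus loss shrinks faster there — this is automatic if the perturbation $q_e - q$ is bounded away from zero on a fixed interval but itself small in sup norm, since then the rent saving is $\Theta(\bar\theta - \theta)$ while the TS loss is $O((\bar\theta-\theta))$ with a small constant; care is needed so the TS loss near $\bar\theta$ doesn't overwhelm. Near $\underline\theta$, the rent saving is again small (order $\bar\theta - \underline\theta$, hence bounded below, actually — this endpoint is the \emph{easy} one since the rent saving does not vanish at $\underline\theta$, only the requirement $q(\underline\theta) = q_e(\underline\theta)$ forces the TS loss to vanish there), so the real tension is purely at $\bar\theta$. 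I would resolve this by choosing the perturbation so that $q_e(\theta) - q(\theta)$ is, say, proportional to $(\theta - \underline\theta)$ near $\underline\theta$ (quadratic TS loss, linear-in-length rent gain — fine) and bounded below by a positive constant times $(\bar\theta - \theta)^0$, i.e. staying positive, up to $\bar\theta$; then uniformly over $\theta$ the ratio (TS loss)/(rent saving) is $O(\|q_e - q\|_\infty)$, which is made less than $(1-\alpha)$ — in fact less than $1$, uniformly in $\alpha < 1$ is too strong, so one fixes $\alpha$ and picks the perturbation small depending on $\alpha$, exactly matching the statement "no matter what $\alpha \in [0,1)$ is." The $C^2$ hypothesis enters precisely to get a uniform (over $\theta \in \Theta$) bound on the second-order Taylor remainder for $\textsc{TS}(\theta, \cdot)$ around $q_e(\theta)$, turning the heuristic "quadratic vs. linear" into a rigorous estimate.
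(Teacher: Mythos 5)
Your overall strategy --- perturb $q_e$ downward and trade a total-surplus loss that is quadratic in the perturbation against a rent saving that is linear in it --- is the same as the paper's, and your treatment of the left endpoint $\underline{\theta}$ is fine. The gap is at the right endpoint. You insist on keeping $q_e(\theta)-q(\theta)$ bounded away from zero all the way up to $\bar{\theta}$ (``keep a strict wedge near $\bar{\theta}$'') and assert that the total-surplus loss there is ``$O(\bar{\theta}-\theta)$ with a small constant.'' That is not so: the total-surplus loss at type $\theta$ is a pointwise quantity of order $(q_e(\theta)-q(\theta))^2$, which under your choice is bounded below by a positive constant near $\bar{\theta}$, while the rent saving $(1-\alpha)\int_\theta^{\bar{\theta}}(q_e(z)-q(z))\,dz$ tends to $0$ as $\theta\to\bar{\theta}$. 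Hence $\textsc{RS}_\alpha(\theta,M)<\textsc{RS}_\alpha(\theta,M_e)$ for all $\theta$ in a neighbourhood of $\bar{\theta}$, and at $\bar{\theta}$ itself the rent saving is exactly zero while $q(\bar{\theta})<q_e(\bar{\theta})$ forces a strictly positive surplus loss; so your mechanism does not dominate $M_e$. Your parenthetical ``making the inequality strict there too'' has the sign backwards.

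The repair is to make the perturbation vanish at the right end of its support, which is what the paper's proof does: it perturbs only on a subinterval $[\theta_\ell,\theta_h]$, setting $\tilde{q}(\theta)=q_e(\theta)-\epsilon(\theta_h-\theta)$ there and $\tilde{q}=q_e$ elsewhere. Then at each $\theta\in[\theta_\ell,\theta_h]$ the rent saving is $\tfrac{1}{2}\epsilon(\theta_h-\theta)^2$ while the total-surplus loss is at most $\epsilon^2 g(\cdot)(\theta_h-\theta)^2$ --- both quadratic in $(\theta_h-\theta)$, so the comparison reduces to $(1-\alpha)\epsilon/2$ versus $\epsilon^2 g$, which favours the perturbed mechanism once $\epsilon$ is small (depending on $\alpha$, as you anticipated); outside $[\theta_\ell,\theta_h]$ the two mechanisms give identical surplus. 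The subinterval is chosen, using $P\in C^2$, so that $P$ is concave or convex on the corresponding range of quantities, which is what lets the paper both verify that $\tilde{q}$ remains decreasing and bound the second-order term by a monotone subgradient; your candidate perturbations are automatically decreasing, so for you the $C^2$ hypothesis would only be needed for the uniform Taylor bound, but the endpoint defect above must be fixed first.
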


Theorem \ref{theo:efficient} highlights the limits of our sufficient (Theorem \ref{theo:undsuff}) conditions in identifying undominated mechanisms. It exhibits that some mechanisms satisfy the necessary conditions and are still dominated.\footnote{It is possible to strengthen Theorem \ref{theo:efficient} further. A mechanism is called {\sl partially efficient} if the associated quantity rule equals the efficient quantity rule for some interval of types. The proof of Theorem \ref{theo:efficient} can be adapted to show that every partially efficient mechanism is dominated.}

We now focus attention on mechanisms that satisfy the necessary conditions but do not satisfy the sufficient conditions. The following proposition establishes that the set of such mechanisms shrinks as $\alpha$ decreases.

\begin{proposition}\label{prop:comp_stat}
    Suppose $\alpha, \alpha' \in (0,1)$ with $\alpha > \alpha'$. If $\mathcal{M}_{\alpha}$ and $\mathcal{M}_{\alpha'}$ are the set of undominated mechanisms with weights $\alpha$ and $\alpha'$ respectively, then 
    \begin{align*}
        \mathcal{M}_{\alpha'} \subseteq \mathcal{M}_{\alpha}.
    \end{align*}
\end{proposition}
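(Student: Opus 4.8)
The plan is to show the contrapositive-style inclusion directly: take any mechanism $M \equiv (r,q,u)$ that is undominated under the smaller weight $\alpha'$ and show it is undominated under the larger weight $\alpha$. Equivalently, I will show that if $M$ is dominated under $\alpha$, then it is dominated under $\alpha'$. Fix a mechanism $\widetilde M \equiv (\tilde r, \tilde q, \tilde u)$ that dominates $M$ under weight $\alpha$, i.e. $\textsc{RS}_\alpha(\theta,\widetilde M) \ge \textsc{RS}_\alpha(\theta,M)$ for all $\theta$, with strict inequality somewhere. Using the decomposition (\ref{eq:RS2}), this inequality reads $\tilde r(\theta)\textsc{TS}(\theta,\tilde q(\theta)) - (1-\alpha)\tilde u(\theta) \ge r(\theta)\textsc{TS}(\theta,q(\theta)) - (1-\alpha)u(\theta)$. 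The key observation is that moving from $\alpha$ to $\alpha'$ only rescales the coefficient on the rent term $u$, so to compare the two mechanisms at weight $\alpha'$ I need to control the sign of the difference in rents $\tilde u(\theta) - u(\theta)$.

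The natural strategy is to first reduce to the case where $\widetilde M$ is floor-randomized and left-continuous (by Theorem \ref{theo:undnec}, part 2, if $M$ is dominated it is dominated by such a mechanism, and transitivity lets me assume $\widetilde M$ has this form). Then I would like to argue that $\tilde u(\theta) \le u(\theta)$ for all $\theta$: if the dominating mechanism pays weakly less rent at every type, then the difference in regulator surplus, written as $\big[\tilde r\,\textsc{TS}(\cdot,\tilde q) - r\,\textsc{TS}(\cdot,q)\big] + (1-\alpha)\big[u - \tilde u\big]$, has a nonnegative rent component, and increasing the multiplier from $(1-\alpha)$ to the larger $(1-\alpha')$ can only make the difference larger — so $\widetilde M$ also dominates $M$ at $\alpha'$. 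The main obstacle is precisely this rent-comparison step: a priori there is no reason a dominating mechanism pays less rent pointwise, since $\widetilde M$ could be paying more rent at some types to generate enough total surplus to compensate. To handle this I expect to need a further normalization of the dominating mechanism — replacing $\widetilde M$ by a pointwise "rent-minimizing" modification that keeps the same $\tilde q \tilde r$ product (hence the same total-surplus term and, by Lemma \ref{lem:myerson}, the same or lower $u$) while setting $\tilde u(\bar\theta)=0$; this can only help the regulator at weight $\alpha$, so it still dominates $M$, and now $\tilde u$ is the envelope integral $\int_\theta^{\bar\theta}\tilde q \tilde r$.

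So the crux becomes: given that $\tilde r\,\textsc{TS}(\cdot,\tilde q) - (1-\alpha)\tilde u \ge r\,\textsc{TS}(\cdot,q) - (1-\alpha)u$ everywhere with $\tilde u, u$ both given by envelope formulas with $u(\bar\theta)=\tilde u(\bar\theta)=0$, I must deduce $\tilde r\,\textsc{TS}(\cdot,\tilde q) - (1-\alpha')\tilde u \ge r\,\textsc{TS}(\cdot,q) - (1-\alpha')u$ everywhere. Writing $\Delta(\theta) := \tilde r(\theta)\textsc{TS}(\theta,\tilde q(\theta)) - r(\theta)\textsc{TS}(\theta,q(\theta))$ and $\delta(\theta) := \tilde u(\theta) - u(\theta) = \int_\theta^{\bar\theta}\big(\tilde q\tilde r - qr\big)dz$, the hypothesis is $\Delta(\theta) \ge (1-\alpha)\delta(\theta)$ and I want $\Delta(\theta)\ge (1-\alpha')\delta(\theta)$; since $\alpha>\alpha'$ means $(1-\alpha) < (1-\alpha')$, this follows immediately whenever $\delta(\theta)\le 0$, and when $\delta(\theta) > 0$ it is false in general — so the real content is to show $\delta(\theta)\le 0$ for all $\theta$, i.e. $\tilde u \le u$ pointwise. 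I expect this to follow from combining the domination inequality at $\alpha$ with the structural facts about floor-randomized mechanisms (monotonicity of $\tilde q\tilde r$ and $qr$, the quantity floor, and Observation \ref{ob:hatq_bounds} that $\textsc{TS}(\theta,\hat q)>0$), perhaps by integrating the domination inequality against a suitable measure or by an argument at $\bar\theta$ and propagating leftward using monotonicity; pinning down this last step cleanly is where the work lies, and it is the step I would flag as the main obstacle.
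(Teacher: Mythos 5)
Your reduction and your diagnosis of the crux are both right, but the proof is not complete: the step you yourself flag as ``the main obstacle'' --- establishing $\tilde u(\theta)\le u(\theta)$ pointwise, or some substitute for it --- is exactly the substantive content of the paper's argument, and you leave it unproven. The normalization you propose (resetting $\tilde u(\bar\theta)=0$ while keeping the product $\tilde q\tilde r$ fixed) does nothing here: by Theorem \ref{theo:undnec} the dominating mechanism can already be taken floor randomized with $\tilde u(\bar\theta)=0$, and the sign of $\delta(\theta)=\int_\theta^{\bar\theta}(\tilde q\tilde r-qr)$ is entirely determined by the product $\tilde q\tilde r$, which your modification leaves unchanged. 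So as written the argument is circular at the decisive point.

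The paper closes this gap differently: rather than proving $\tilde u\le u$ for the given dominating mechanism, it replaces $\widetilde M$ by a new mechanism $M'$ with $r'=\min(r,\tilde r)$, $q'=\min(q,\tilde q)$ and $u'$ given by the envelope formula (Theorem \ref{theo:comp_stat}). Because both mechanisms are floor randomized, $q'r'=\min(qr,\tilde q\tilde r)\le qr$ pointwise, which immediately yields $u'\le u$ and hence the monotonicity in $\alpha$ that you want. The nontrivial work is then to show that $M'$ still dominates $M$ at weight $\alpha$; on the set $D_{\tilde q\tilde r}=\{\theta:\tilde q(\theta)\tilde r(\theta)>q(\theta)r(\theta)\}$, where $M'$ coincides with $M$ in $(q,r)$, this requires showing $u(\theta)>\tilde u(\theta)$ (Lemma \ref{lem:D_1}), which is proved by locating a type $\theta_h$ just below a hypothetical violation and deriving a contradiction with domination of $M$ by $\widetilde M$ at $\theta_h$, using DD and the floor structure. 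Neither the min-construction nor this rent-comparison lemma appears in your proposal, so the argument has a genuine hole rather than a mere presentational difference.
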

\begin{proof}
Suppose $M \equiv (r, q,u) \in \mathcal{M}_{\alpha'}$. Assume for contradiction that $M \notin \mathcal{M}_{\alpha}$. Then it is dominated for weight $\alpha$ by a mechanism $M'$. By Theorem \ref{theo:undnec}, $M'$ must be floor randomized, left continuous and satisfy DD. The following theorem strengthens this further (proof is in Appendix \ref{app:implications}).
\begin{theorem}\label{theo:comp_stat}
Let $M \equiv (r, q, u)$ be a floor-randomized, left-continuous mechanism satisfying DD. If $M$ is dominated then there exists another floor-randomized, left-continuous mechanism $M'$ satisfying DD that dominates $M$,  and $q'(\theta)r'(\theta) \le q(\theta)r(\theta)$ for all $\theta$.
\end{theorem}

Theorem \ref{theo:comp_stat} strengthens part (2) in Theorem \ref{theo:undnec}. Therefore, for every $\theta$,
\begin{align*}
\textsc{RS}_{\alpha}(\theta,M') - \textsc{RS}_{\alpha}(\theta,M) &= r'(\theta)\left[V(q'(\theta))-c-\theta q'(\theta)\right] -r(\theta)\left[V(q(\theta))-c-\theta q(\theta)\right] \\
&~~\qquad\qquad- (1-\alpha) \int \limits_{\theta}^{\overline{\theta}}\big(q'(y)r'(y)-q(y)r(y)\big)dy \\
&\ge 0 
\end{align*}
with strict inequality holding for some $\theta \in \Theta$. The proof is completed by observing that if the inequality holds for $\alpha<1$, then it also holds for $\alpha'<\alpha$ as $q'(y)r'(y) \le q(y)r(y)$ for all $y$. Therefore, 
\begin{align*}
     \textsc{RS}_{\alpha'}(\theta,M') &\ge \textsc{RS}_{\alpha'}(\theta,M) \qquad \forall~\theta
\end{align*}
with strict inequality holding for some $\theta$. This contradicts the fact that $M \in \mathcal{M}_{\alpha'}$.
\end{proof}

Thus, as the regulator puts more weight on the profits of the monopolist, the set of undominated mechanisms expands. Equivalently, the set of associated undominated quantity rules also expands. 
However, it is worth noting that if $\alpha=1$, there is a ``discontinuity."  This is because the set of undominated quantity rules is a singleton for $\alpha=1$. To see why, observe that equation (\ref{eq:RS2}) implies that the regulator's surplus at type $\theta$ in a mechanism $(r,q,u)$ is $r(\theta)\textsc{TS}(\theta,q(\theta))$, which is uniquely maximized at $q_e(\theta)$ with $r(\theta)=1$.\footnote{$M_e$ is an undominated mechanism for $\alpha=1$. There are other undominated mechanisms. They all have the same quantity rule $q_e$, but differ from $M_e$ only in the value of $u_e(\bar{\theta})$. Because the regulator's surplus does not depend on the value of $u_e(\bar{\theta})$, any positive value for $u_e(\bar{\theta})$ ensures IR and defines a different undominated mechanism.}

Finally, to answer the third question, fix an undominated mechanism $M \equiv (r,q,u)$ that violates strict DD. Since $M$ is undominated, it is a floor-randomized mechanism satisfying DD and left-continuity. Let $M'\equiv (r,q',u')$ be another floor-randomized mechanism satisfying strict DD and left-continuity such that $q'$ is ``arbitrarily close" to $q$. Observe, $q'$ can be chosen arbitrarily close to $q$ because $M$ satisfies DD and $M'$ satisfies strict DD. By continuity of the regulator's surplus in $q$, the regulator's surplus under mechanism $M'$ also lies ``arbitrarily close" to that under $M$. This shows that although the set of undominated mechanisms varies with $\alpha$, it contains a strict subset invariant in $\alpha$, and it is almost without loss of generality to focus attention on this subset for a regulator who evaluates mechanisms by their expected regulator surplus.

\section{Implications for regulation}
\label{sec:implications}

This section discusses the implications of our findings for regulatory design. First, it identifies a max-min optimal mechanism for a regulator who entertains a set of priors over the marginal costs. Then, it shows that any undominated mechanism satisfying a mild continuity condition maximizes the regulator's surplus for some prior over marginal cost. Finally, it discusses why deterministic regulatory mechanisms are focal and identifies properties of optimal mechanisms (expected regulator surplus maximizing mechanism for a given prior) invariant to the details of the regulator's prior.

\subsection{Max-min optimal mechanism}
\label{sec:maxmin}
For every IC and IR mechanism $M$ and a prior $G$ over $\Theta$, define the expected regulator surplus as\begin{align*}\textsc{RS}_{\alpha}(M,G) &= \int \limits_{\underline{\theta}}^{\bar{\theta}} \textsc{RS}_{\alpha}(\theta,M) dG(\theta).\end{align*}
The optimal mechanism in \citet{BM82} maximizes the regulator's expected surplus for a given prior $G$ over the marginal cost. However, the regulator may face ambiguity regarding this distribution. For example, the regulatory body may consist of a committee of individuals, who may have different priors over the monopolist's marginal cost. 

To model this uncertainty, let $\mathcal{F}$ denote the set of all probability distributions over $\Theta$ and $\mathcal{G} \subseteq \mathcal{F}$ be the set of distributions that the regulator considers plausible. Suppose there exists $G^\star \in \mathcal{G}$ such that $G^\star \succ_{{\rm fosd}} G$ for every $G \in \mathcal{G}$, where $\succ_{{\rm fosd}}$ is the first-order stochastic domination relation. Below, we give examples of $\mathcal{G}$ where $G^\star$ exists. 
\begin{enumerate}
\item If $\mathcal{G}=\mathcal{F}$, then $G^\star$ is the distribution that puts all the probability mass at $\bar{\theta}$. 

\item If $\mathcal{G}$ is a subset of $\mathcal{F}$ such that for every $G, G' \in \mathcal{G}$, we have $\min(G,G') \in \mathcal{G}$, then there will exist a $G^\star$ that first-order stochastically dominates every distribution in $\mathcal{G}$.

\item If $\mathcal{G}$ is such that it contains $\underline{F}$ and $\overline{F}$ with $\underline{F}(\theta) \le \overline{F}(\theta)$ for all $\theta$. Further, $\mathcal{G}=\{G: \alpha\overline{F}(\theta)+(1-\alpha)\underline{F}(\theta)~\textrm{for some}~\alpha \in [0,1]\}$. Then, $G^\star=\underline{F}$. 
\end{enumerate}
For such $\mathcal{G}$, we solve for the  {\sl max-min optimal mechanism}.

\begin{defn}
    An IC and IR mechanism $M$ is {\bf max-min optimal} for $\mathcal{G}$ if for every IC and IR mechanism $M'$
    \begin{align*}
        \inf_{G \in \mathcal{G}} \textsc{RS}_{\alpha}(M,G) \ge \inf_{G \in \mathcal{G}} \textsc{RS}_{\alpha}(M',G).
    \end{align*}
\end{defn}

\begin{proposition}\label{prop:maxmin}
    The optimal mechanism for prior $G^\star$ is max-min optimal for $\mathcal{G}$.
\end{proposition}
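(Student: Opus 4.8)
\textbf{Proof plan for Proposition \ref{prop:maxmin}.}

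The plan is to show that the optimal mechanism $M^\star$ for the prior $G^\star$ attains the max-min value by sandwiching quantities between two inequalities. First I would establish the key monotonicity fact that is used as an ``essential step'' according to the introduction: for any undominated mechanism $M$, the map $G \mapsto \textsc{RS}_{\alpha}(M,G)$ is monotone in the first-order stochastic dominance order, in the sense that if $G \succ_{\rm fosd} G'$ then $\textsc{RS}_{\alpha}(M,G) \le \textsc{RS}_{\alpha}(M,G')$. This follows because, by Theorem \ref{theo:undnec}, an undominated $M$ is floor randomized, left continuous, and satisfies DD, and for such a mechanism the per-type regulator surplus $\textsc{RS}_{\alpha}(\theta,M) = r(\theta)\textsc{TS}(\theta,q(\theta)) - (1-\alpha)\int_\theta^{\overline\theta} q(z)r(z)\,dz$ is a decreasing function of $\theta$. (The rent term is decreasing because $\int_\theta^{\overline\theta} q r$ is decreasing in $\theta$; the total-surplus term $r(\theta)\textsc{TS}(\theta,q(\theta))$ is decreasing because $q r$ is decreasing, $q \le q_e$ by DD, $\textsc{TS}(\theta,\hat q)>0$ by Observation \ref{ob:hatq_bounds}, and $\textsc{TS}$ is decreasing in $\theta$ on the relevant range — this needs a short argument handling the three intervals $\Theta_1,\Theta_{01},\Theta_0$ of the floor-randomized structure.) Since the integrand is decreasing, integrating against a fosd-larger distribution yields a weakly smaller value, which is exactly the claimed monotonicity.

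Given monotonicity, the argument is a standard two-sided bound. Let $M^\star$ be the optimal mechanism for $G^\star$; it is without loss to take $M^\star$ undominated (any optimal mechanism for $G^\star$ is undominated, or can be replaced by an undominated one achieving the same expected surplus, using Theorem \ref{theo:fr} / Theorem \ref{theo:undnec}). For the \emph{lower bound}: since $G^\star \succ_{\rm fosd} G$ for every $G \in \mathcal{G}$ and $M^\star$ is undominated, monotonicity gives $\textsc{RS}_{\alpha}(M^\star,G) \ge \textsc{RS}_{\alpha}(M^\star,G^\star)$ for all $G \in \mathcal{G}$, so $\inf_{G \in \mathcal{G}} \textsc{RS}_{\alpha}(M^\star,G) = \textsc{RS}_{\alpha}(M^\star,G^\star)$ (the infimum is attained at $G^\star$ itself, which lies in $\mathcal{G}$). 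For the \emph{upper bound}: for any IC and IR mechanism $M'$, since $G^\star \in \mathcal{G}$ we trivially have $\inf_{G\in\mathcal{G}}\textsc{RS}_{\alpha}(M',G) \le \textsc{RS}_{\alpha}(M',G^\star) \le \textsc{RS}_{\alpha}(M^\star,G^\star)$, where the last inequality is optimality of $M^\star$ for the prior $G^\star$. Chaining the two bounds yields $\inf_{G\in\mathcal{G}}\textsc{RS}_{\alpha}(M^\star,G) = \textsc{RS}_{\alpha}(M^\star,G^\star) \ge \inf_{G\in\mathcal{G}}\textsc{RS}_{\alpha}(M',G)$ for every $M'$, which is the definition of max-min optimality.

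The main obstacle is the monotonicity lemma, i.e. proving that $\textsc{RS}_{\alpha}(\theta,M)$ is decreasing in $\theta$ for every undominated (equivalently, floor-randomized DD left-continuous) mechanism. The rent component is immediate from Lemma \ref{lem:myerson}, but the total-surplus component $r(\theta)\textsc{TS}(\theta,q(\theta))$ requires care: on $\Theta_1$ one uses that $q$ is decreasing, $\hat q \le q(\theta) \le q_e(\theta)$, and $\textsc{TS}(\theta,q)$ is decreasing in $\theta$ and single-peaked in $q$ at $q_e(\theta)$, so raising $\theta$ both shifts the peak left and lowers the surface; on $\Theta_{01}$ the quantity is fixed at $\hat q$ with $\textsc{TS}(\theta,\hat q)>0$ decreasing in $\theta$ and $r$ decreasing, so the product is decreasing; on $\Theta_0$ it is identically zero; and one checks the value does not jump up at the interval boundaries (using left-continuity of $qr$ and the definitions). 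I would also note that the proposition does not assert uniqueness of the max-min optimal mechanism, only that the $G^\star$-optimal one is among them, so no separate uniqueness argument is needed. The existence of an optimal mechanism for $G^\star$ can be invoked from the surrounding discussion (it is the Baron–Myerson optimal mechanism), so I would not reprove it here.
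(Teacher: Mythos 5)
Your overall architecture matches the paper's: reduce to floor-randomized, DD, left-continuous mechanisms (Theorem \ref{theo:undnec}), show that for such mechanisms the per-type surplus $\textsc{RS}_{\alpha}(\theta,M)$ is decreasing in $\theta$ so that the infimum over $\mathcal{G}$ is attained at $G^\star$, and then sandwich using optimality at $G^\star$. The two-sided bound in your second paragraph is exactly the paper's argument. The problem is your justification of the key monotonicity lemma (the paper's Lemma \ref{lem:mono}), where there is a genuine gap.

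You argue that $\textsc{RS}_{\alpha}(\theta,M)=r(\theta)\textsc{TS}(\theta,q(\theta))-(1-\alpha)u(\theta)$ is decreasing in $\theta$ because both ``components'' are decreasing. But the rent $u(\theta)=\int_{\theta}^{\overline{\theta}}q(z)r(z)\,dz$ enters with a \emph{negative} sign: since $u$ is decreasing, the term $-(1-\alpha)u(\theta)$ is \emph{increasing} in $\theta$, so it pushes against the monotonicity you need. For $\theta<\theta'$ one has
\begin{align*}
\textsc{RS}_{\alpha}(\theta,M)-\textsc{RS}_{\alpha}(\theta',M) = \Big[r(\theta)\textsc{TS}(\theta,q(\theta))-r(\theta')\textsc{TS}(\theta',q(\theta'))\Big]-(1-\alpha)\int_{\theta}^{\theta'}q(z)r(z)\,dz,
\end{align*}
a difference of two nonnegative quantities, and the whole content of the lemma is showing that the total-surplus gain at the lower type outweighs the extra information rent paid there. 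Your sketch never makes this comparison; the facts you list ($q$ decreasing, $q\le q_e$, $\textsc{TS}$ decreasing in $\theta$, the three-interval structure) only establish that the first bracket is nonnegative. The paper closes this gap by bounding the surplus gain from below by $r(\theta)\int_{q(\theta')}^{q(\theta)}(P(z)-\theta)\,dz$, rewriting that integral along the price axis, and using DD (so that $P(q(\theta))\ge\theta$ and $P(q(\theta'))\ge\theta'$) to show it dominates $\int_{\theta}^{\theta'}(q(\tilde{\theta})-q(\theta'))\,d\tilde{\theta}$, which in turn bounds the rent difference. That DD-based comparison is indispensable --- the paper notes that arbitrary IC mechanisms need not be monotonic --- so without it your proof of the lemma, and hence of the proposition, is incomplete.
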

If we further assume that $G^\star$ admits a continuous and positive density, then the optimal mechanism for $G^\star$ is the \cite{BM82} optimal mechanism. Consequently, max-min optimal mechanism is the \cite{BM82} optimal mechanism for $G^\star$.

The proof of Proposition \ref{prop:maxmin} in Appendix \ref{app:implications} is based on the following lemma which is of independent interest. It shows that for any floor-randomized mechanism satisfying DD is {\sl monotonic}: the expected regulator's surplus for prior $G$ is higher than that for prior $G'$, whenever $G' \succ_{{\rm fosd}} G$. Consequently, every undominated mechanism is also monotonic.\footnote{An example of a dominated mechanism that is not monotonic is available upon request.} This lemma leverages the unique feature of floor-randomized mechanisms that operation decisions are monotone functions.
\begin{lemma}\label{lem:mono}
    If $M \equiv (r, q, u)$ is a floor-randomized mechanism that satisfies DD, then for all $G, G' \in \mathcal{F}$ such that $G' \succ_{{\rm fosd}} G$,
    \begin{align*}
    \textsc{RS}_{\alpha}(M, G) \ge  \textsc{RS}_{\alpha}(M, G').
    \end{align*}
\end{lemma}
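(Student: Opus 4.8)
The plan is to prove the inequality directly by writing the difference $\textsc{RS}_{\alpha}(M,G)-\textsc{RS}_{\alpha}(M,G')$ as the integral of a nonnegative function against a nonnegative measure. First, using (\ref{eq:RS2}), the fact that $u(\theta)=\int_{\theta}^{\bar\theta}q(z)r(z)\,dz$ (Lemma \ref{lem:myerson} together with $u(\bar\theta)=0$ for a floor-randomized mechanism), and Fubini's theorem on the rent term, write
\[
\textsc{RS}_{\alpha}(M,G)=\int_{\underline\theta}^{\bar\theta}\psi(\theta)\,dG(\theta)-(1-\alpha)\int_{\underline\theta}^{\bar\theta}q(z)r(z)\,G(z)\,dz,\qquad \psi(\theta):=r(\theta)\,\textsc{TS}(\theta,q(\theta)).
\]
Set $H:=G-G'$; first-order stochastic dominance gives $H\ge 0$ pointwise and $H(\bar\theta)=0$.

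The next step records the behaviour of $\psi$ on the three intervals $\Theta_1,\Theta_{01},\Theta_0$ of the floor-randomized mechanism. On $\Theta_1$ we have $r\equiv 1$ and $\hat q\le q(\theta)\le q_e(\theta)$ (the floor together with DD), so $\psi(\theta)=\textsc{TS}(\theta,q(\theta))\ge \textsc{TS}(\theta,\hat q)>0$ using concavity of $\textsc{TS}(\theta,\cdot)$ and Observation \ref{ob:hatq_bounds}; on $\Theta_{01}$ we have $q\equiv\hat q$, so $\psi(\theta)=r(\theta)\,\textsc{TS}(\theta,\hat q)>0$; on $\Theta_0$, $\psi\equiv 0$. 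Applying the Stieltjes product and chain rules (valid because $V$ is $C^1$ with $V'=P$, and $q,r$ are monotone, hence of bounded variation), one obtains, as measures,
\[
d(-\psi)=\big(P(q(\theta))-\theta\big)\,d(-q)(\theta)+q(\theta)\,d\theta\ \ \text{on }\Theta_1,\qquad d(-\psi)=\textsc{TS}(\theta,\hat q)\,d(-r)(\theta)+\hat q\,r(\theta)\,d\theta\ \ \text{on }\Theta_{01},
\]
and $d(-\psi)=0$ on $\Theta_0$. Since DD gives $P(q(\theta))\ge\theta$ on $\Theta_1$, Observation \ref{ob:hatq_bounds} gives $\textsc{TS}(\theta,\hat q)>0$ on $\Theta_{01}$, and monotonicity of $q$ and $r$ gives $d(-q),d(-r)\ge 0$, it follows that $d(-\psi)\ge q(\theta)r(\theta)\,d\theta$ on all of $\Theta$; the at most two junction points between the intervals contribute only nonnegative atoms to $d(-\psi)$ and none to $q r\,d\theta$ (here one also checks $\psi$ has no upward jump across a junction, using DD, continuity of $q_e$, and $r\le 1$). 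In particular $-\psi$ is nondecreasing, so $d(-\psi)$ is a genuine nonnegative measure.

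For the final step, write $\psi(\theta)=\psi(\bar\theta)+\int_{(\theta,\bar\theta]}d(-\psi)$ and apply Fubini once more to get $\int\psi\,d(G-G')=\int_{\underline\theta}^{\bar\theta}H(t^-)\,d(-\psi)(t)$, where $H(t^-)=G(t^-)-G'(t^-)\ge 0$; also, $H$ being of bounded variation has at most countably many discontinuities, so $\int q r\,H\,dz=\int q r\,H(z^-)\,dz$. Hence
\[
\textsc{RS}_{\alpha}(M,G)-\textsc{RS}_{\alpha}(M,G')=\int_{\underline\theta}^{\bar\theta}H(t^-)\Big[\,d(-\psi)(t)-(1-\alpha)\,q(t)r(t)\,dt\,\Big]\ \ge\ 0,
\]
since $0\le 1-\alpha\le 1$ makes the bracket a nonnegative measure and $H(t^-)\ge 0$. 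The main obstacle is the sign of this bracket: the rent term pushes $\textsc{RS}_{\alpha}(\cdot,G)$ below $\textsc{RS}_{\alpha}(\cdot,G')$, and downward distortion is exactly the condition ensuring that the marginal total-surplus gain from shifting mass toward lower types, $d(-\psi)$, dominates the marginal rent cost $(1-\alpha)q r\,dt$; this is also where the monotonicity of both $q$ and $r$ — special to floor-randomized mechanisms — is needed to give the Stieltjes differentials the correct signs, consistent with the text's remark that an arbitrary IC mechanism need not be monotonic.
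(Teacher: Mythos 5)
Your proof is correct, and the underlying engine is the same as the paper's --- downward distortion gives $P(q(\theta))\ge\theta$, monotonicity of $q$ and $r$ gives the Stieltjes differentials the right signs, and Observation \ref{ob:hatq_bounds} handles the randomization region --- but the packaging is genuinely different. The paper reduces the claim to the pointwise statement that $\textsc{RS}_\alpha(\theta,M)$ is nonincreasing in $\theta$ and proves that by a global two-point comparison ($\theta<\theta'$), using the envelope formula, $r(\theta)\ge r(\theta')$, and a change of variables to the price axis to show $\int_{q(\theta')}^{q(\theta)}(P(z)-\theta)\,dz\ge\int_\theta^{\theta'}(q(y)-q(\theta'))\,dy$; the FOSD step is then the standard fact about expectations of decreasing functions. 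You instead prove the infinitesimal version of the same monotonicity, $d(-\psi)\ge(1-\alpha)qr\,dt$ as measures, and carry out the FOSD comparison explicitly via Fubini/integration by parts against $H=G-G'$. Your route is more systematic and makes visible exactly where each hypothesis enters (DD signs the coefficient of $d(-q)$, Observation \ref{ob:hatq_bounds} signs the coefficient of $d(-r)$, and $1-\alpha\le 1$ closes the gap between the total-surplus measure and the rent density); the paper's route avoids Lebesgue--Stieltjes calculus entirely and is more elementary. Two points you should tighten if you write this up: (i) the chain and product rules you invoke are not literally $P(q)\,dq$ and $\textsc{TS}\,dr$ at jump points of $q$ inside $\Theta_1$ or of $r$ inside $\Theta_{01}$ (not only at the two junctions) --- the correct atom of $d(-\psi)$ at an interior jump of $q$ is $\int_{q(\theta_0^+)}^{q(\theta_0^-)}(P(z)-\theta_0)\,dz$, which is still nonnegative because DD and continuity of $q_e$ give $P(z)\ge\theta_0$ on that range, so the conclusion survives but the displayed identities need the atomic corrections spelled out; (ii) the identity $\int\psi\,d(G-G')=\int H(t^-)\,d(-\psi)(t)$ requires a convention when an atom of $G$ or $G'$ sits exactly at a discontinuity of $\psi$ where $\psi(\theta_0)$ lies strictly between its one-sided limits --- the inequality still holds because $\psi$ remains monotone through the jump, but the clean equality does not. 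Neither issue is a gap in substance.
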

The proof of Lemma \ref{lem:mono} is in Appendix \ref{app:implications}. If we treat a distribution in $\mathcal{F}$ as a ``technology", then $G' \succ_{{\rm fosd}} G$ has the interpretation that $G$ is a better technology than $G'$ (because $G$ puts more probability mass on lower marginal costs). A consequence of Lemma \ref{lem:mono} is that as the technology gets better the regulator's expected surplus improves from a floor-randomized mechanism satisfying DD.

\subsection{Undomination and optimality}
\label{sec:expundom}

Our notion of undomination implies that when the regulator's prior has full support, an optimal mechanism that is undominated exists. This observation leads to a natural question: for a given undominated mechanism, does it maximize the regulator's expected surplus for some prior? We answer this question below.

Let $\mathcal{F}$ be the set of priors over $\Theta$ such that every $F\in \mathcal{F}$ has a bounded density function $f: \Theta \rightarrow \Re_+$ associated with it, where $\int \limits_{\underline{\theta}}^{\overline{\theta}} f(\theta)d\theta = 1$. For any  prior  $F \in \mathcal{F}$, the regulator's expected surplus from mechanism $M$ can be written as
\[
\textsc{RS}_\alpha(M,F) := \int \limits_{\underline{\theta}}^{\overline{\theta}} \textsc{RS}_\alpha(\theta, M) f(\theta) d\theta
\]
where $f$ is some density function associated with $F$.

\begin{proposition}
    \label{prop:max}
Suppose $M \equiv (r,q,u)$ is an undominated IC and IR mechanism such that $\textsc{RS}_{\alpha}(\theta,M)$ is continuous at $\underline{\theta}$. Then there exists $F \in \mathcal{F}$ such that
\begin{align}\label{eq:undomnew}
    \textsc{RS}_\alpha(M,F) &\ge \textsc{RS}_\alpha(M',F)
\end{align}
for every IC and IR mechanism $M'$.
\end{proposition}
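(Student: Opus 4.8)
The statement to prove is Proposition \ref{prop:max}: every undominated mechanism $M$ with $\textsc{RS}_\alpha(\cdot, M)$ continuous at $\underline{\theta}$ maximizes the regulator's expected surplus for \emph{some} prior $F \in \mathcal{F}$. The natural approach is a separation argument in an appropriate function space. Consider the set $\mathcal{V}$ of functions $\theta \mapsto \textsc{RS}_\alpha(\theta, M') - \textsc{RS}_\alpha(\theta, M)$ as $M'$ ranges over all IC and IR mechanisms, together with its convex hull. Because $M$ is undominated, no element of this set is pointwise $\ge 0$ with strict inequality somewhere; combined with the fact that $0 \in \mathcal{V}$ (take $M' = M$), this says $0$ is a maximal element of $\mathcal{V}$ in the pointwise order. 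I would then look for a positive linear functional (a measure, hence after normalization a prior $F$) that is nonpositive on all of $\mathcal{V}$, which is exactly \eqref{eq:undomnew}. The cleanest route is to work with a suitable space of functions on $\Theta$ (e.g. a space where these surplus-difference functions live and are bounded), take the closure of the convex cone generated by $\mathcal{V}$ and the nonpositive orthant, verify it does not contain any strictly positive function, and invoke a Hahn--Banach / Riesz-representation separating hyperplane theorem to extract the separating measure; continuity at $\underline{\theta}$ is what guarantees the separating functional has no atom ``at the boundary'' that would prevent it from being represented by a genuine density in $\mathcal{F}$.

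\textbf{Steps in order.} First I would restrict attention to floor-randomized, left-continuous mechanisms satisfying DD, since by Theorem \ref{theo:undnec} both $M$ and any relevant competitor $M'$ can be taken in this class (an arbitrary $M'$ is dominated by such a mechanism, so it suffices to separate from this restricted family); this also gives uniform boundedness of the surplus functions via Observation \ref{ob:hatq_bounds} and Assumption [A1]. Second, I would set up the convex cone $\mathcal{C} = \overline{\{\, \lambda(\textsc{RS}_\alpha(\cdot,M') - \textsc{RS}_\alpha(\cdot,M)) - h : \lambda \ge 0,\ M' \text{ IC, IR},\ h \ge 0 \,\}}$ in an ambient Banach space such as $L^\infty(\Theta)$ or the space of bounded functions with sup norm, and argue from undomination that $\mathcal{C}$ contains no function bounded below by a positive constant on a set of positive measure. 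Third, apply a separation theorem to separate $\mathcal{C}$ from the interior point $\mathbf{1}$ (the constant function $1$), obtaining a nonzero continuous linear functional $\phi$ with $\phi \le 0$ on $\mathcal{C}$ and $\phi(\mathbf{1}) > 0$. Fourth, extract from $\phi \le 0$ on the nonpositive orthant that $\phi$ is a positive functional, hence represented (after dealing with the potential finitely-additive part using continuity at $\underline\theta$ and the bounded-density structure) by integration against a density $f \ge 0$; normalize so that $\int f = 1$, yielding $F \in \mathcal{F}$. Fifth, read off that $\phi \le 0$ on the surplus-difference functions is precisely $\textsc{RS}_\alpha(M,F) \ge \textsc{RS}_\alpha(M',F)$ for all $M'$.

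\textbf{Main obstacle.} The delicate point is the representation of the separating functional as a genuine prior in $\mathcal{F}$, i.e. a countably-additive measure with a \emph{bounded} density and no mass escaping to the boundary point $\underline{\theta}$. A separating functional on $L^\infty$ is only guaranteed to be a finitely additive measure, and even a countably additive one need not have a bounded density; moreover there is a real economic subtlety at $\underline\theta$, where $\textsc{RS}_\alpha(\theta,M)$ has its largest value (the quantity equals $q_e(\underline\theta)$ by DD), so a Dirac mass at $\underline\theta$ would trivially be ``optimized'' by $M$ but is not in $\mathcal{F}$. This is exactly why the hypothesis ``$\textsc{RS}_\alpha(\theta,M)$ is continuous at $\underline\theta$'' is imposed: it should let me push any boundary atom slightly into the interior and smooth it into a bounded density while only perturbing $\textsc{RS}_\alpha(M,F)$ and each $\textsc{RS}_\alpha(M',F)$ by an arbitrarily small amount, preserving the inequality in the limit (or, alternatively, it lets me run the separation in a space of continuous-at-$\underline\theta$ functions from the start). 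I would expect the bulk of the technical work, and the place where the stated hypothesis is genuinely used, to be this regularization-of-the-separating-measure step; the separation itself and the translation back to the expected-surplus inequality are routine.
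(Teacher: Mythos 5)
Your architecture (a direct separating-hyperplane argument over priors) is different from the paper's, which argues by contradiction: it assumes that for \emph{every} density $f$ there is a mechanism $M_f$ with $\textsc{RS}_\alpha(M_f,f)>\textsc{RS}_\alpha(M,f)$, uses weak* compactness to extract finitely many mechanisms $M_1,\dots,M_N$ that collectively beat $M$ on every density, and then separates in $\Re^N$ to obtain weights $\gamma$ on \emph{mechanisms} (not on types). Either architecture could in principle work, but your proposal has a genuine gap at its central step, and it misidentifies where the continuity hypothesis enters.

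The gap is this: you pass from the set $\mathcal{V}$ of surplus-difference functions to its convex hull/cone and assert that undomination implies the cone contains no function that is nonnegative everywhere and positive somewhere (or bounded below by a positive constant). Undomination only rules this out for individual elements of $\mathcal{V}$. Because $\textsc{RS}_\alpha(\theta,\cdot)$ is \emph{not} linear in the mechanism --- $r(\theta)\bigl(V(q(\theta))-c-\theta q(\theta)\bigr)$ is not linear in $(r,q)$ --- a convex combination $\sum_i\lambda_i\bigl(\textsc{RS}_\alpha(\cdot,M_i')-\textsc{RS}_\alpha(\cdot,M)\bigr)$ that is pointwise dominating is not the surplus difference of any mechanism, so it does not by itself contradict undomination of $M$. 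Closing this hole is exactly the content of the paper's Lemma \ref{lem:floor}: given $M_1,\dots,M_N$ and weights $\gamma$, the floor-randomized transformation of the averaged allocation $\tilde q=\sum_i\gamma_i q_i r_i$ yields a single IC and IR mechanism $M^T$ with $\textsc{RS}_\alpha(\theta,M^T)\ge\sum_i\gamma_i\textsc{RS}_\alpha(\theta,M_i)$ for all $\theta$ (the concavification $co(\widetilde V)$ does the work). This is the ``additional step relative to \citet{MV07}'' that the paper flags, and your plan never supplies it. Relatedly, even with that lemma one only gets $\textsc{RS}_\alpha(\theta,M^T)\ge\textsc{RS}_\alpha(\theta,M)$ for \emph{almost all} $\theta$; the paper upgrades this to all $\theta>\underline\theta$ via left-continuity (Lemma \ref{lem:newl1}) and to $\theta=\underline\theta$ via the hypothesis that $\textsc{RS}_\alpha(\cdot,M)$ is continuous at $\underline\theta$ together with monotonicity of $\textsc{RS}_\alpha(\cdot,M^T)$ from Lemma \ref{lem:mono}. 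So the continuity-at-$\underline\theta$ assumption is used to convert an a.e.\ comparison into genuine domination, not (as you conjecture) to kill a boundary atom of the separating measure. Your other announced obstacle --- that a separating functional on $L^\infty$ may be only finitely additive and need not admit a bounded density --- is a real difficulty for your route, and the paper avoids it entirely by doing the separation in $\Re^N$ after the compactness/finite-subcover step; if you insist on the direct route you would have to resolve that representation problem yourself, and the continuity hypothesis as stated is not obviously sufficient for it.
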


Proposition \ref{prop:max} has two implications. First, a regulator using an undominated mechanism (with regulator surplus being continuous at the lowest type) can always come up with some prior and justify it as an optimal mechanism for that prior. Second, we can define a new notion of undomination where an IC and IR mechanism $M$ is undominated if there exists an $F$ such that (\ref{eq:undomnew}) holds for every IC and IR mechanism $M'$. Proposition \ref{prop:max} shows that if $M$ is undominated in our sense and the regulator surplus is continuous at the lowest type, it is also undominated in the new sense. 

The proof of Proposition \ref{prop:max} is in the Appendix \ref{app:implications}. \citet{MV07} prove a related result in the context of undominated mechanisms for a monopolist selling multiple goods to a buyer. Our proof builds on their separating hyperplane argument, however, we require two additional steps as the regulator's objective function is not linear in $q(\theta)$ for a given $\theta$.

\subsection{Deterministic regulatory mechanisms}

In this section, we use the properties of undominated mechanisms stated in Theorem \ref{theo:fr} and \ref{theo:undnec} to establish that focusing attention on deterministic mechanisms is without loss of generality in the following sense.
\begin{proposition}
    \label{prop:domdet}
    For every IC and IR mechanism $\widetilde{M}$ and every prior $G$ over $\Theta$, there exists a deterministic mechanism $M^\star$ such that 
    \begin{align*}
        \textsc{RS}_{\alpha}(M^\star,G) &\ge \textsc{RS}_{\alpha}(\widetilde{M},G).
    \end{align*}
\end{proposition}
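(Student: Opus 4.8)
The goal is to show that for any IC/IR mechanism $\widetilde M$ and any prior $G$, some deterministic mechanism does at least as well in expectation. The natural strategy is to exploit the structure already established: by Theorem \ref{theo:fr} it is without loss to assume $\widetilde M$ is floor randomized, so $\widetilde M$ is described by a partition $\Theta_1 \succ \Theta_{01} \succ \Theta_0$ with $q(\theta)\ge\hat q, r=1$ on $\Theta_1$; $q=\hat q, r\in(0,1)$ on $\Theta_{01}$; and $q=r=0$ on $\Theta_0$. (Here I use ``$\succ$'' in the sense reversed from Definition \ref{def:fr}: low types in $\Theta_1$.) The only randomization lives on the middle interval $\Theta_{01}$, where the allocation is $r(\theta)\hat q = \tilde q(\theta)$ for a decreasing function $\tilde q$ taking values in $(0,\hat q)$. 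So the problem reduces to: replace the randomized middle block by a deterministic rule without lowering expected regulator surplus.

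\textbf{Key steps.} First, rewrite $\textsc{RS}_\alpha(\widetilde M, G)$ using equation (\ref{eq:RS2}) together with the envelope formula of Lemma \ref{lem:myerson} with $u(\bar\theta)=0$, integrating by parts so that the objective becomes a functional of the single decreasing ``effective allocation'' $\tilde q(\theta)=q(\theta)r(\theta)$. Concretely,
\begin{align*}
\textsc{RS}_\alpha(\widetilde M,G) &= \int_{\underline\theta}^{\bar\theta} r(\theta)\textsc{TS}(\theta,q(\theta))\,dG(\theta) - (1-\alpha)\int_{\underline\theta}^{\bar\theta}\Big(\int_\theta^{\bar\theta}\tilde q(z)\,dz\Big)dG(\theta).
\end{align*}
On $\Theta_1$ and $\Theta_0$ the mechanism is already deterministic, so the only term that is not linear-and-deterministic is $\int_{\Theta_{01}} \frac{\tilde q(\theta)}{\hat q}\big(V(\hat q)-c\big)\,dG(\theta)$ coming from the first integral (using $r(\theta)=\tilde q(\theta)/\hat q$ and $q(\theta)=\hat q$ there). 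Thus on $\Theta_{01}$ the regulator surplus is \emph{linear} in $\tilde q$: the coefficient of $\tilde q(\theta)$ in the integrand, from the $V(\hat q)-c>0$ term and the $(1-\alpha)$ rent term, is $\frac{V(\hat q)-c}{\hat q}g(\theta) - (1-\alpha)G(\theta)$ after the integration by parts (where $g$ is the density if it exists; in general one works with $G$ directly). So the whole objective is an affine functional of a monotone function $\tilde q:\Theta_{01}\to[0,\hat q]$ subject only to the monotonicity constraint and boundary matching with $\Theta_1$ and $\Theta_0$.

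\textbf{Second step: extreme-point / bang-bang argument.} The set of decreasing functions from an interval into $[0,\hat q]$ with prescribed boundary values is convex, and its extreme points are exactly the ``step'' functions that equal $\hat q$ on a left sub-block and $0$ on the complementary right sub-block — i.e., single-threshold rules. An affine functional on a convex set attains its max at an extreme point; a cleaner self-contained version: among all monotone $\tilde q$ achieving a given value of $\int_{\Theta_{01}}\tilde q\,dG$ (which is what feeds the rent term), pushing mass of $\tilde q$ toward lower types (where the coefficient $\frac{V(\hat q)-c}{\hat q}g-(1-\alpha)G$ is larger, since $g\ge 0$ is handled via the bounded density assumption and $G$ is increasing... actually the coefficient need not be monotone, so I instead argue directly by extreme points of the polytope). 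So there is a threshold $\theta^\star\in\overline{\Theta_{01}}$ such that the deterministic rule $r=1,q=\hat q$ for $\theta<\theta^\star$ and $r=0$ for $\theta>\theta^\star$, spliced with the original $\Theta_1$ and $\Theta_0$ parts, does at least as well. Check this spliced mechanism is still IC/IR: the effective allocation $q^\star r^\star$ is still decreasing (we only replaced a decreasing middle piece valued in $[0,\hat q]$ by a decreasing step function with the same range and matching endpoints), and set $u^\star(\bar\theta)=0$; IR follows from Lemma \ref{lem:myerson}. This yields a deterministic $M^\star$ with $\textsc{RS}_\alpha(M^\star,G)\ge\textsc{RS}_\alpha(\widetilde M,G)$.

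\textbf{Main obstacle.} The delicate point is the extreme-point argument in step two when $G$ has atoms or no density: one must argue that the map $\tilde q\mapsto\textsc{RS}_\alpha$ is genuinely affine (not merely concave) on $\Theta_{01}$ — which it is, because on that block $q$ is pinned at $\hat q$ and only $r$ varies — and then invoke that a bounded affine functional on the (weak-$*$ compact, convex) set of monotone $[0,\hat q]$-valued functions with fixed endpoints attains its supremum at an extreme point, which is a threshold rule. Handling the endpoints of $\Theta_{01}$ (whether the intervals are open or closed, and continuity at the junctions) and confirming monotonicity of the spliced allocation across all three blocks is the routine-but-careful part. I expect the affineness observation to be the conceptual crux; everything else is bookkeeping already licensed by Theorems \ref{theo:fr} and the structure of floor-randomized mechanisms.
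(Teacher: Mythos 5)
Your proposal is correct and follows essentially the same route as the paper: reduce to a floor-randomized mechanism via Theorem \ref{theo:fr} (and Theorem \ref{theo:undnec} for the dominated case), observe that for a fixed quantity rule the expected regulator surplus is affine in the operation decision, and conclude via an extreme-point argument that the maximum is attained at a deterministic (threshold) rule — the paper packages this as Krein--Milman applied to $\mathcal{M}(q)$ together with Lemma \ref{lem:extreme}, whose content is exactly your claim that the extreme points of the set of decreasing $[0,\hat q]$-valued effective allocations on $\Theta_{01}$ are step functions. The minor imprecision in your written coefficient (dropping the $-\theta\hat q$ term from $\textsc{TS}(\theta,\hat q)$) is immaterial since that term is also linear in $\tilde q(\theta)$ and you correctly fall back on the extreme-point argument rather than a sign/monotonicity argument on the coefficient.
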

To prove this, fix an IC and IR mechanism $\widetilde{M}$ and a prior $G$. First, either $\widetilde{M}$ is undominated or by Theorem \ref{theo:undnec}, there exists a floor-randomized mechanism $M$ that dominates $\widetilde{M}$. By Theorem \ref{theo:fr}, if $\widetilde{M}$ is undominated, it is also floor randomized. Hence, we conclude that there exists a floor-randomized mechanism $M$ (where $M$ may be same as $\widetilde{M}$) such that 
\begin{align}\label{eq:domdet1}
        \textsc{RS}_{\alpha}(M,G) &\ge \textsc{RS}_{\alpha}(\widetilde{M},G)
\end{align}
Now, let $M \equiv (r,q,u)$. Let $\mathcal{M}(q)$ be the set of all floor-randomized mechanisms for the quantity rule $q$:
\begin{align*}
    \mathcal{M}(q):=\{(r^\dag,q,u^\dag): (r^\dag,q,u^\dag)~\textrm{is floor randomized}\}.
\end{align*}
Clearly, $\mathcal{M}(q)$ is non empty because it contains $M$. In fact, it contains multiple elements when $q(\theta)=\hat{q}$ over an interval of $\theta$s. Observe that $\mathcal{M}(q)$ is a convex and compact set: if $(r^\dag,q,u^\dag)$ and $(r^\ddag,q,u^\ddag)$ are two mechanisms in $\mathcal{M}(q)$, their convex combination gives another floor-randomized mechanism. Hence, by Krien-Milman theorem, the set $\mathcal{M}(q)$ is convex hull of its {\sl extreme points}, where we define an extreme point as follows.
\begin{defn}
    A mechanism $(r^\dag,q,u^\dag) \in \mathcal{M}(q)$ is an {\bf extreme point} if there does not exist two other mechanisms $(r',q,u'), (r'',q,u'') \in \mathcal{M}(q)$ such that for some $\lambda \in (0,1)$
    \begin{align*}
        (r^\dag(\theta),u^\dag(\theta)) &= \lambda (r'(\theta),u'(\theta)) + (1-\lambda)(r''(\theta),u''(\theta))~\qquad~\forall~\theta \in \Theta.
    \end{align*}
\end{defn}

The next lemma shows that every extreme point of $\mathcal{M}(q)$ is a deterministic mechanism.

\begin{lemma}
    \label{lem:extreme}
    Every extreme point of $\mathcal{M}(q)$ is deterministic.
\end{lemma}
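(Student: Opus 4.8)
The plan is to show the contrapositive: if a mechanism $(r^\dag, q, u^\dag) \in \mathcal{M}(q)$ is not deterministic, then it is not an extreme point. So suppose $r^\dag(\theta) \in (0,1)$ for some $\theta$. By Definition~\ref{def:fr}, the set $\Theta_{01} := \{\theta : r^\dag(\theta) \in (0,1)\}$ is an interval on which $q(\theta) = \hat q$, and by Lemma~\ref{lem:myerson} the product $q(\theta)r^\dag(\theta) = \hat q \, r^\dag(\theta)$ is decreasing there, so $r^\dag$ restricted to $\Theta_{01}$ is a decreasing function taking values strictly between $0$ and $1$. The idea is to perturb $r^\dag$ on a subinterval of $\Theta_{01}$ by a small amount $\pm\varepsilon$ while preserving monotonicity of $\hat q r^\dag$, IC/IR, $u^\dag(\overline\theta)=0$, and the floor-randomized structure (which holds automatically because the perturbation keeps $q=\hat q$ and $r \in (0,1)$ on $\Theta_{01}$ and leaves $r$ unchanged — equal to $1$ or $0$ — outside).

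First I would pick an interval $[a,b] \subseteq \Theta_{01}$ with $a < b$ on which $r^\dag$ is bounded away from $0$ and $1$, say $r^\dag(\theta) \in [\delta, 1-\delta]$ for all $\theta \in [a,b]$; such an interval exists because $r^\dag$ is decreasing and not everywhere $0$ or $1$ on $\Theta_{01}$. Then for small $\varepsilon > 0$ define two perturbed operation rules $r'$ and $r''$ that agree with $r^\dag$ outside $[a,b]$ and on $[a,b]$ satisfy $r'(\theta) = r^\dag(\theta) + \varepsilon \phi(\theta)$ and $r''(\theta) = r^\dag(\theta) - \varepsilon \phi(\theta)$, where $\phi$ is chosen so that $\hat q r'$ and $\hat q r''$ remain decreasing on all of $\Theta$ and continuous/monotone across the endpoints $a$ and $b$. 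The simplest choice is to make the perturbation ``pivot'' appropriately: e.g. take $\phi$ affine and decreasing, or more robustly subtract off a term proportional to $\int_\theta^{\overline\theta}$ of an indicator so that the modification to $q r$ is itself a monotone perturbation. Then set $u'(\theta) := \int_\theta^{\overline\theta} \hat q r'(z)\,dz$ and $u''(\theta) := \int_\theta^{\overline\theta} \hat q r''(z)\,dz$, so by Lemma~\ref{lem:myerson} both $(r',q,u')$ and $(r'',q,u'')$ are IC and IR with $u'(\overline\theta) = u''(\overline\theta) = 0$, hence both lie in $\mathcal{M}(q)$; and by construction $(r^\dag, u^\dag) = \tfrac12(r',u') + \tfrac12(r'',u'')$ pointwise, so $(r^\dag, q, u^\dag)$ is not an extreme point.

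The main obstacle is engineering the perturbation $\phi$ so that monotonicity of $\hat q r$ is preserved \emph{globally} — in particular at the junction points $a$ and $b$ where the perturbed rule must still dominate $r^\dag(b)$ from the right and be dominated by $r^\dag(a)$ from the left, and also across the boundary of $\Theta_{01}$ with $\Theta_1$ (where $r = 1$) and $\Theta_0$ (where $r = 0$). Choosing $[a,b]$ in the interior of $\Theta_{01}$ with $r^\dag$ bounded strictly inside $(0,1)$ on $[a,b]$ handles the value constraints for small $\varepsilon$; for monotonicity I would take $\phi$ to be a fixed decreasing Lipschitz ``bump'' supported on $[a,b]$ (continuous, zero at the endpoints, with controlled slope) so that for $\varepsilon$ small enough the slope of $r^\dag + \varepsilon\phi$ stays negative wherever $r^\dag$ has slope bounded away from zero, and one must separately verify that where $r^\dag$ is locally flat the bump's monotonicity keeps the sum decreasing — this is where care is needed, but it is arranged by requiring $\phi$ itself to be strictly decreasing on its support while $\varepsilon$ is small. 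A cleaner alternative that sidesteps junction issues entirely: since $\hat q r^\dag$ is a decreasing function on $\Theta$ that is not $\{0,\hat q\}$-valued, pick a single point $\theta_0 \in \Theta_{01}$ where $r^\dag$ is continuous and strictly between $0$ and $1$, and perturb $\hat q r^\dag$ by $\pm \varepsilon \psi$ where $\psi$ is a small decreasing function vanishing outside a tiny neighborhood of $\theta_0$; monotonicity and the range constraint then hold automatically for small $\varepsilon$, and the two resulting mechanisms average back to $(r^\dag, q, u^\dag)$, contradicting extremality.
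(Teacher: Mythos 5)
Your overall strategy (show a non-deterministic element of $\mathcal{M}(q)$ is a midpoint of two distinct elements) matches the paper's, but the perturbation you propose has a genuine gap that the paper's construction is specifically designed to avoid. The problem is that you need \emph{both} $r^\dag+\varepsilon\phi$ and $r^\dag-\varepsilon\phi$ to be decreasing. On any interval where $r^\dag$ is constant, this forces $\phi$ to be constant there as well (if $\phi$ strictly decreases somewhere on a flat piece of $r^\dag$, then $r^\dag-\varepsilon\phi$ strictly increases there). Since a locally supported $\phi$ vanishing off $[a,b]$ would then have to vanish identically whenever $r^\dag$ is constant on a neighborhood of $[a,b]$, your construction produces nothing in the most typical case: $r^\dag$ a step function, e.g.\ $r^\dag\equiv 1/2$ on all of $\Theta_{01}$. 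Your proposed fix (``require $\phi$ strictly decreasing on its support'') makes matters worse for the minus branch, and your ``cleaner alternative'' is vacuous: a globally decreasing function that vanishes outside a bounded neighborhood of $\theta_0$ is identically zero on that neighborhood (it is squeezed between its zero values on either side). Your slope-based argument only covers the case where $r^\dag$ has slope bounded away from zero on some subinterval, which a decreasing function need not have anywhere.

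The paper sidesteps all of this with a global, non-local decomposition: $\hat{r}=\max(2r-1,0)$ and $\tilde{r}=\min(2r,1)$, so that $r=\tfrac12(\hat r+\tilde r)$ pointwise. Both are decreasing whenever $r$ is (they are increasing functions of $r$ composed with $r$), both take values in $[0,1]$, they agree with $r$ on $\{r=0\}$ and $\{r=1\}$, and they differ from each other wherever $r\in(0,1)$. No $\varepsilon$, no continuity, and no case analysis on the local behavior of $r$ is needed. If you want to keep a perturbation-style argument, the perturbation must be a function of the \emph{value} of $r$ (e.g.\ $\phi=\varepsilon(r-c)$ truncated appropriately) rather than of $\theta$, which is essentially what the paper's formula does.
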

The proof of the above Lemma is similar to the proof of Lemma 4 in \citet{MV07} and is given in Appendix \ref{app:implications}.\footnote{Lemma 4 in \citet{MV07} shows that in a model where a seller is selling a single object to a single buyer, the extreme points of the set of IC mechanisms are deterministic mechanisms. In our model, this is not enough to guarantee the centrality of deterministic mechanism. Further, neither the set of undominated mechanisms nor the set of floor-randomized mechanisms form a convex set. However, fixing a floor-randomized mechanism $(r,q,u)$, the set $\mathcal{M}(q)$ is convex.} It shows that any mechanism $(r,q,u) \in \mathcal{M}(q)$ involving randomization, that is $r(\theta) \in (0,1)$ for some $\theta$, can be written as a convex combination of two mechanisms in the set $\mathcal{M}(q)$.

Thus, any objective function that is linear in $r$ will attain its maximum at a deterministic mechanism. In particular, fixing $q$, consider the following objective function
\[
\max_{(r,u)} \int \limits_{\Theta}\left(r(\theta)\textsc{TS}(\theta, q(\theta))-(1-\alpha)u(\theta)\right)dG(\theta).
\]
IC and IR implies $u(\theta)=u(\bar{\theta}) + \int \limits_{\theta}^{\bar{\theta}}q(y)r(y)dy$, where $u(\bar{\theta}) \ge 0$. Thus, to maximize the objective function, we can set $u(\bar{\theta})=0$ and solve
\[
\max_{r} \int \limits_{\Theta}\left(r(\theta)\textsc{TS}(\theta, q(\theta))-(1-\alpha)\int \limits_{\theta}^{\bar{\theta}}q(y)r(y)dy\right)dG(\theta).
\]This is clearly linear in $r$. As a result, Lemma \ref{lem:extreme} implies that a maximum over all $\mathcal{M}(q)$ can be achieved at a deterministic mechanism $M^\star$. Hence, there exists a deterministic mechanism $M^\star$ such that 
\begin{align*}
        \textsc{RS}_{\alpha}(M^\star,G) &\ge \textsc{RS}_{\alpha}(M,G)
\end{align*}
where we used the fact that $M \in \mathcal{M}(q)$. Using (\ref{eq:domdet1}), we get the desired inequality of Proposition \ref{prop:domdet}. The pictorial depiction of the discussion so far is given in Figure \ref{fig:domdet}.

\begin{figure}[!hbt]
    \centering
    \includegraphics[width=3in]{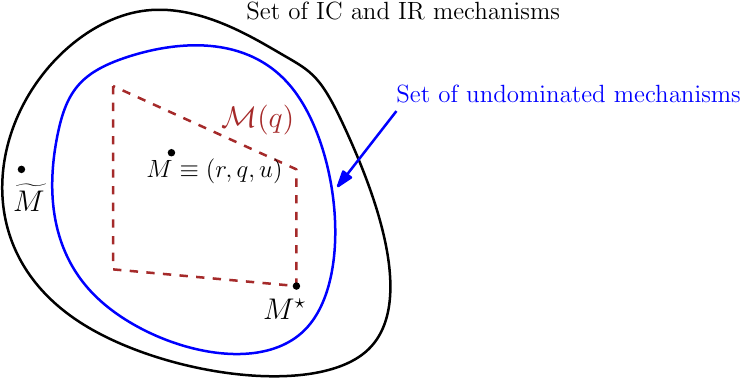}
    \caption{Pictorial depiction of Proposition \ref{prop:domdet}}
    \label{fig:domdet}
\end{figure}

An important implication of Proposition \ref{prop:domdet} concerns optimal regulation design for a given prior. 
We say a mechanism $M^{\rm{OPT}}\equiv (r^{\rm{OPT}},q^{\rm{OPT}},u^{\rm{OPT}})$ is {\sl optimal} for a prior $G$ over $\Theta$ if for every IC and IR mechanism $M$,
\begin{align*}
    \textsc{RS}_{\alpha}(M^{\rm OPT},G) &\ge \textsc{RS}_{\alpha}(M,G).
\end{align*}
We say $M^{\rm{OPT}}$ has a {\sl quantity floor} is $q^{\rm{OPT}}(\theta) \ge \hat{q}$ whenever $q^{\rm{OPT}}(\theta) > 0$. 
\begin{corollary}\label{cor:impli_opt}
For every prior, there exists a deterministic optimal mechanism that satisfies DD and has quantity floor.
\end{corollary}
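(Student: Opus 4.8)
The plan is to combine the three structural results already established in the paper. The key observation is that Corollary \ref{cor:impli_opt} follows almost immediately by chaining together Theorem \ref{theo:undnec}, Proposition \ref{prop:domdet}, and Lemma \ref{lem:extreme}, but we must be careful about one subtle point: showing that the deterministic mechanism we extract inherits \emph{both} the DD property \emph{and} the quantity floor. Fix a prior $G$ over $\Theta$. First I would invoke the existence of an optimal mechanism among undominated mechanisms: by Theorem \ref{theo:undnec}, any optimal mechanism is either undominated itself or dominated by a floor-randomized, left-continuous mechanism satisfying DD, and domination weakly increases $\textsc{RS}_\alpha(\cdot,G)$ pointwise hence in $G$-expectation. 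Since the paper has already argued that it is without loss to search within undominated mechanisms (and by Theorem \ref{theo:fr} an undominated mechanism is floor randomized), we obtain an optimal mechanism $M \equiv (r,q,u)$ that is floor randomized and satisfies DD, with quantity rule $q$ satisfying $q(\theta) \ge \hat q$ whenever $q(\theta) > 0$.

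Next I would run the convexification argument of Proposition \ref{prop:domdet}. Fixing the quantity rule $q$ of this optimal $M$, the set $\mathcal{M}(q)$ of floor-randomized mechanisms with quantity rule $q$ is convex and compact, the objective $\textsc{RS}_\alpha(\cdot,G)$ is linear in $(r,u)$, so it attains its maximum over $\mathcal{M}(q)$ at an extreme point; by Lemma \ref{lem:extreme} that extreme point is a deterministic mechanism $M^\star \equiv (q^\star,u^\star)$. The crucial point here is that \emph{every} member of $\mathcal{M}(q)$, including $M^\star$, has the same quantity rule $q$ only on the set $\Theta_1$; on $\Theta_{01}$ the deterministic extreme point replaces the pair $(\hat q, r(\theta))$ by either $(\hat q, 1)$ (operate for sure producing $\hat q$) or $(0,0)$ (shut down). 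In all three cases the resulting quantity rule $q^\star$ still satisfies $q^\star(\theta) \ge \hat q$ whenever $q^\star(\theta) > 0$ — the quantity floor is preserved — and $q^\star(\theta) \le q(\theta) \le q_e(\theta)$ for all $\theta$, so DD is preserved. Since $M^\star \in \mathcal{M}(q)$ and $M \in \mathcal{M}(q)$, we get $\textsc{RS}_\alpha(M^\star,G) \ge \textsc{RS}_\alpha(M,G) = \max_{M'} \textsc{RS}_\alpha(M',G)$, so $M^\star$ is itself optimal.

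The one place where I expect to need care — the main (mild) obstacle — is the very first step: asserting that among the optimal mechanisms for $G$ there is one that is floor randomized and satisfies DD. This is not literally a restatement of Theorem \ref{theo:undnec}, because an optimal mechanism for a fixed prior need not be undominated (it could be tied with a dominating mechanism in $G$-expectation while being pointwise dominated). The clean way to handle this is: take any optimal $\widetilde M$; if it is dominated, replace it by the floor-randomized, left-continuous, DD mechanism that dominates it (Theorem \ref{theo:undnec}(2)), which is weakly better in $G$-expectation, hence also optimal; if it is undominated, then by Theorem \ref{theo:fr} it is floor randomized, and by Theorem \ref{theo:undnec}(1) it satisfies DD. Either way we land on an optimal floor-randomized mechanism satisfying DD, and then the convexification step above finishes the proof. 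Thus the corollary is a corollary indeed, stitching Theorems \ref{theo:fr}, \ref{theo:undnec} and Proposition \ref{prop:domdet} together, with the only genuine content being the verification that passing to a deterministic extreme point of $\mathcal{M}(q)$ preserves both DD and the quantity floor.
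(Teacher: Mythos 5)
Your proof is correct and follows essentially the same route as the paper: pass to an optimal floor-randomized mechanism satisfying DD via Theorems \ref{theo:fr} and \ref{theo:undnec}, then extract a deterministic optimum from the extreme points of $\mathcal{M}(q)$ via Lemma \ref{lem:extreme} and the linearity of the objective in $r$. Your explicit verification that the deterministic extreme point inherits DD and the quantity floor is in fact slightly more careful than the paper's one-line appeal to ``every undominated mechanism has these properties,'' since $M^\star$ itself need not be undominated.
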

Existence of deterministic optimal mechanism follows from Proposition \ref{prop:domdet}, which implies it is without loss of generality to focus attention on deterministic undominated mechanisms while searching for the optimal mechanism. The other two properties, DD and quantity floor, are guaranteed because every undominated mechanism has these properties. Corollary \ref{cor:impli_opt} identifies properties of optimal mechanisms that are invariant to the regulator's prior.\footnote{The optimal mechanism in \citet{BM82} is deterministic, satisfies DD, and has a quantity floor. However, they do not explicitly refer to the quantity floor, and their shut-down condition, when the cost structure is $c+ \theta q$, reduces to equation (\ref{eq:hatq}) that defines $\hat{q}$. In any optimal mechanism, the prior plays role only in determining the extent of downward distortion, and therefore, determines the smallest type which is allowed to operate and produce $\hat{q}$.} Moreover, it does not require any assumptions on the regulator's prior, and therefore, generalizes a finding of \citet{BM82}, which states that for any distribution that admits positive density, the optimal mechanism is deterministic.

\subsection{Comparative statics of changing demand and fixed cost}
\label{sec:cst}
This section examines how the set of deterministic undominated mechanisms changes with fixed cost and the inverse demand function. It focuses on the set of deterministic undominated mechanisms for two reasons: (a) this set exhibits sharp monotonicity properties with respect to changes in the exogenous variables, and (b) this set is pivotal for optimization over the class of undominated mechanisms (Proposition \ref{prop:domdet}).

It is almost immediate from equation (\ref{eq:hatq}) that, as the fixed cost $c$ increases, the quantity floor $\hat{q}$ increases, and therefore, the set of deterministic undominated mechanisms shrinks. A similar observation is true when the inverse demand curve $P$ is ``rotated" counter-clockwise around the point $(P^{-1}(\underline{\theta}),\underline{\theta})$.\footnote{It is necessary that the efficient quantity for the smallest marginal cost $\underline{\theta}$ is the same under two inverse demand functions, otherwise the corresponding sets of undominated mechanisms have an empty intersection.} Such rotations capture changes in the dispersion of consumer valuations due to product design and marketing strategies of the monopolist (see \cite{JM6}), and can be easily formalized for familiar parametric families of inverse demand curves as illustrated through the following examples.
\begin{enumerate}
    \item Linear inverse demand curves: Let $P(q) = a-bq$ with $b > 0$. A rotation of such inverse demand curve $P_\textrm{n}(q)=a_\textrm{n}-b_\textrm{n}q$, where subscript $\textrm{n}$ stands for $\textrm{n}$ew, is obtained by choosing $b_\textrm{n} < b$ and setting $a_\textrm{n}$ using the equation $b(a_\textrm{n}-\underline{\theta}) = b_\textrm{n}(a-\underline{\theta})$. 
    

    \item Inverse demand curves resulting from a discrete choice model: Consider an inverse demand curve $P(q) = V - \beta \ln \left(q/(1-q)\right)$, where $q \in (0,1)$. It is obtained from a discrete choice model where a consumer's utility from consuming the monopolist's product priced at $P$ is $V-P+\epsilon$ with $\epsilon$ being distributed according to the Logistic distribution with mean $0$ and scale parameter $\beta > 0$. If the consumer does not buy the monopolist's product, the consumer gets a utility of $0$. A rotation of such an inverse demand curve is given by $P_\textrm{n}(q) = V_\textrm{n} - \beta_\textrm{n} \ln \left(q/(1-q)\right)$, where $\beta_\textrm{n} < \beta$ and $V_\textrm{n}$ is determined by $\beta (V_\textrm{n}-\underline{\theta}) = \beta_\textrm{n}(V-\underline{\theta})$. 
\end{enumerate}

To state the result formally, let $\mathcal{D}(P)$ be the set of {\bf deterministic} undominated mechanisms when the inverse demand function is $P$.

\begin{proposition}
    \label{prop:pshrink}
    Suppose $P$ and $P_{{\rm n}}$ are strictly decreasing inverse demand functions satisfying [A1] and [A2]. Furthermore, for all $q$ and $q'< q$,
    \begin{align}
    \label{eq:decdiff}
        P(q) - P_{{\rm n}} (q) < P(q') - P_{{\rm n}} (q')
    \end{align}
    and $P_{{\rm n}}(P^{-1}(\underline{\theta})) = \underline{\theta}$. Then, $\mathcal{D}(P_{{\rm n}}) \subseteq \mathcal{D}(P)$.
\end{proposition}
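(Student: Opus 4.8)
The plan is to show that $\mathcal{D}(P_{\rm n}) \subseteq \mathcal{D}(P)$ by characterizing the deterministic undominated mechanisms under each demand function and comparing them. Let me denote by $\hat q_P$ and $\hat q_{P_{\rm n}}$ the respective quantity floors solving \eqref{eq:hatq}, and by $q_e^P, q_e^{P_{\rm n}}$ the respective efficient quantity rules. The key structural facts I would exploit are: (i) by Theorem \ref{theo:undnec} and Theorem \ref{theo:undsuff}, a deterministic mechanism $(q,u)$ with $u(\bar\theta)=0$ is undominated (under demand $P$) essentially iff $q$ is a decreasing left-continuous quantity rule with $\hat q_P \le q(\theta) \le q_e^P(\theta)$ for all operating types, with equality $q(\underline\theta)=q_e^P(\underline\theta)$ and — modulo the small gap between necessary DD and sufficient strict DD — strict inequality for $\theta>\underline\theta$; (ii) the hypotheses $P_{\rm n}(P^{-1}(\underline\theta))=\underline\theta$ and the decreasing-differences condition \eqref{eq:decdiff} should force $q_e^{P_{\rm n}}(\underline\theta)=q_e^P(\underline\theta)$ and $q_e^{P_{\rm n}}(\theta) < q_e^P(\theta)$ for $\theta>\underline\theta$, and should also force $\hat q_{P_{\rm n}} \ge \hat q_P$ (as noted in the text's informal remark about counter-clockwise rotation raising the floor). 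So the picture is: moving from $P$ to $P_{\rm n}$ raises the floor and lowers the efficient-quantity ceiling (pinned together at $\underline\theta$), squeezing the admissible band.

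First I would establish the two comparative facts about the primitives. For the efficient quantities: $q_e^P(\theta)=P^{-1}(\theta)$ and $q_e^{P_{\rm n}}(\theta)=P_{\rm n}^{-1}(\theta)$; at $\theta=\underline\theta$ they coincide by hypothesis, and for $\theta>\underline\theta$ the decreasing-differences inequality \eqref{eq:decdiff} (applied with $q=q_e^P(\underline\theta)$ and the relevant smaller quantities) yields $P_{\rm n}(q) > P(q)$ for all $q < q_e^P(\underline\theta)$, hence $P_{\rm n}^{-1}(\theta) < P^{-1}(\theta)$ for $\theta \in (\underline\theta,\bar\theta]$; that is, $q_e^{P_{\rm n}} < q_e^P$ pointwise off $\underline\theta$. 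For the floor: writing the consumer surplus $V_P(q)-qP(q)$ and $V_{P_{\rm n}}(q)-qP_{\rm n}(q)$, I would show \eqref{eq:decdiff} implies the $P_{\rm n}$-consumer-surplus is pointwise below the $P$-consumer-surplus on the relevant range (intuitively a counter-clockwise rotation pivoting at a point with $q$-coordinate $q_e^P(\underline\theta)$ shrinks the area under the demand curve to the left of the pivot), so the solution $\hat q_{P_{\rm n}}$ of $V_{P_{\rm n}}(q)-qP_{\rm n}(q)=c$ satisfies $\hat q_{P_{\rm n}} \ge \hat q_P$; this is exactly the informal claim in Section \ref{sec:cst} that I would need to make precise. (One caveat: I need $\hat q_{P_{\rm n}} \le q_e^{P_{\rm n}}(\bar\theta)$ so that $\mathcal{D}(P_{\rm n})$ is nonempty — this should follow from the analogue of Observation \ref{ob:hatq_bounds} applied to $P_{\rm n}$, using that $P_{\rm n}$ is assumed to satisfy [A1]–[A2].)

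Next I would take an arbitrary $M \equiv (q,u) \in \mathcal{D}(P_{\rm n})$ and show $M \in \mathcal{D}(P)$. Since $M$ is a deterministic undominated mechanism under $P_{\rm n}$, it is floor-randomized (hence $q$ decreasing, $r\equiv 1$ on operating types, $u(\bar\theta)=0$), left-continuous, and satisfies DD under $P_{\rm n}$, so $\hat q_{P_{\rm n}} \le q(\theta) \le q_e^{P_{\rm n}}(\theta)$ on the operating region with $q(\underline\theta)=q_e^{P_{\rm n}}(\underline\theta)$. Using the two facts from the previous paragraph: $q(\theta) \ge \hat q_{P_{\rm n}} \ge \hat q_P$, so $M$ has the $P$-quantity floor; and $q(\theta) \le q_e^{P_{\rm n}}(\theta) \le q_e^P(\theta)$ with equality chain at $\underline\theta$ giving $q(\underline\theta)=q_e^P(\underline\theta)$, so $M$ satisfies DD under $P$; left-continuity and monotonicity are properties of the rule $q$ itself, independent of which demand function is in force. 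Thus $M$ is a deterministic, left-continuous, floor-randomized (w.r.t.\ $\hat q_P$) mechanism satisfying DD under $P$ — which, modulo the strict-DD gap, is undominated under $P$ by Theorem \ref{theo:undsuff}.

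The main obstacle I anticipate is precisely the gap between the necessary condition DD (Theorem \ref{theo:undnec}) and the sufficient condition strict DD (Theorem \ref{theo:undsuff}): an $M \in \mathcal{D}(P_{\rm n})$ might satisfy DD but not strict DD under $P_{\rm n}$ (i.e.\ $q(\theta)=q_e^{P_{\rm n}}(\theta)$ on some set of types $>\underline\theta$), so Theorem \ref{theo:undsuff} does not directly certify it as undominated under $P$. However, here the strict inequality $q_e^{P_{\rm n}}(\theta) < q_e^P(\theta)$ for $\theta > \underline\theta$ saves the argument: even where $q$ touches the old ceiling $q_e^{P_{\rm n}}$, it stays strictly below the new ceiling $q_e^P$, so $M$ automatically satisfies \emph{strict} DD under $P$ (for $\theta>\underline\theta$), and Theorem \ref{theo:undsuff} applies cleanly. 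The only remaining point of care is the boundary behavior near $\underline\theta$ (where $q(\underline\theta)=q_e^P(\underline\theta)$ is required by DD and is compatible with strict DD away from $\underline\theta$); this matches exactly the definition of strict DD in the paper, so no extra work is needed. I would also double-check the edge case where $\mathcal{D}(P_{\rm n})$ or $\mathcal{D}(P)$ is empty or a single shut-everything-down mechanism, though [A2] for both demand functions rules this out.
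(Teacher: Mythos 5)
Your proposal is correct and follows essentially the same route as the paper: establish $\hat q_{P_{\rm n}} > \hat q_P$ from the consumer-surplus identity, then transfer the necessary conditions of Theorem \ref{theo:undnec} under $P_{\rm n}$ into the sufficient conditions of Theorem \ref{theo:undsuff} under $P$, using exactly the observation that the strict gap $q_e^{P_{\rm n}}(\theta)<q_e^{P}(\theta)$ for $\theta>\underline{\theta}$ upgrades DD under $P_{\rm n}$ to strict DD under $P$. One small sign slip: evaluating (\ref{eq:decdiff}) at $q=P^{-1}(\underline{\theta})$ gives $P(q')>P_{\rm n}(q')$ for $q'<P^{-1}(\underline{\theta})$ (not $P_{\rm n}(q')>P(q')$ as written), and it is this correct direction that yields your stated conclusion $P_{\rm n}^{-1}(\theta)<P^{-1}(\theta)$.
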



\section{Related literature}
\label{sec:lit}

There is an extensive theoretical literature on mechanism design approach to regulation. \cite{AS06} and \cite{AS04} are excellent surveys, where the latter provides a unified treatment of various results in the literature. We work in the model of \cite{BM82}, which derives the optimal regulatory mechanism for a regulator who does not observe the marginal cost. The optimal regulatory mechanism depends on the prior. In contrast, our analysis is prior-free and seeks to understand undominated mechanisms.

While we, like \cite{BM82}, assume that the inverse demand function is known to the regulator, \cite{LS88AER} analyzes a model where the regulator observes the cost function of the monopolist but not the inverse demand function. \cite{LS88Rand, A99} analyze optimal regulatory mechanisms when the regulator does not know the cost and inverse demand functions.\footnote{\cite{D93,AR99} study the multidimensional versions of the problem. Using ideas from the delegation literature, \cite{AB22} derives optimal regulatory mechanisms in the single-dimensional case when the regulator cannot use transfers. \citet{YZ23} show that the optimal regulation for oligopolistic competition is a yardstick price cap mechanism. Also, there are papers that consider the moral hazard aspect of regulation. For instance, \cite{LT86} considers a model where the regulator observes the fixed cost and the marginal cost, but the monopolist can put unobservable effort into lowering these costs.}

Undominated mechanism design has been studied in other models of mechanism design. For instance, in the multidimensional screening problem (where a monopolist is selling multiple goods to a buyer with multidimensional private values), \cite{MV07} consider undominated selling mechanisms for the monopolist and show that they are optimal for {\sl some} prior distribution over values of the buyer. Unlike us, they do not give necessary and sufficient conditions for a mechanism to be undominated. 

For allocating a single object to multiple agents (with quasilinear utility), \cite{S13} characterizes undominated mechanisms in the class of strategy-proof, ex-post individually rational, feasible (weak budget-balance), anonymous, and envy-free mechanisms -- see similar characterizations in \cite{AV21} in the context of a public good problem. When selling multiple objects to a set of buyers who demand at most one object, \cite{KMS20} show that a generalization of the Vickrey-Clarke-Groves (VCG) mechanism is the unique undominated mechanism among the class of strategy-proof and ex-post individually rational mechanisms satisfying {\sl no-subsidy, no-wastage, and equal treatment of equals}.\footnote{The generalization of VCG is necessary because they allow agents to have non-quasilinear preferences.} As is apparent, this literature focuses on settings with many agents and characterizes undominated mechanisms under additional constraints than IC and IR. A recent work by \cite{BLW2024} analyzes undominated mechanisms in a single object private values auction model. It shows that any second-price auction with a deterministic positive reserve price is undominated, but any second-price auction with a non-degenerate random reserve price is dominated.

As alluded in the introduction, an arbitrary IC and IR regulatory mechanism need not be monotonic, that is, the expected regulator surplus may reduce even when the distribution over marginal cost gets better as per first-order stochastic dominance relation. However, undominated regulatory mechanisms, as shown by our Lemma \ref{lem:mono}, are always monotone. This idea of considering monotone mechanisms (i.e., designer's payoff increases as the prior distribution over the agent's valuation gets better) was first introduced in the mechanism design literature by \cite{HR15}. They show that deterministic and symmetric IC mechanisms are monotone if the monopolist is selling multiple goods, but every IC mechanism is monotone if the monopolist is selling a single good.

\newpage
\appendix

\section{Proofs of Section \ref{sec:floor_rand}}
\label{app:floor_rand}

\noindent {\sc \textbf{Proof of Theorem \ref{theo:fr}.}} Let $M \equiv (r,q,u)$ be an IC and IR mechanism, and $M^T \equiv (r^T,q^T,u^T)$ be its unique floor randomized transformation. Recall, for every $\theta$, we have $q^T(\theta)r^T(\theta)=q(\theta)r(\theta)$. Thus, using the notation $\overline{V}(q)=V(q)-c$, we can write
\begin{align*}
 \textsc{RS}_\alpha(\theta, M^T) & =   r^T(\theta)\overline{V}(q^T(\theta)) -\theta q^T(\theta)r^T(\theta) - (1-\alpha) u^T(\theta) \\
&= r(\theta)\overline{V}(q(\theta)) - \theta q(\theta)r(\theta) - (1-\alpha) \big(u(\theta)-u(\overline{\theta})\big) \\
&\qquad + \Big(r^T(\theta)\overline{V}(q^T(\theta)) - r(\theta)\overline{V}(q(\theta))\Big) \\
&= \textsc{RS}_\alpha(\theta, M) + (1-\alpha)u(\overline{\theta})+ \Big(r^T(\theta)\overline{V}(q^T(\theta)) - r(\theta)\overline{V}(q(\theta))\Big)
\end{align*}
Because $M$ is an IR mechanism, $u(\overline{\theta}) \ge 0$, and therefore,
\begin{align}\label{eq:diff}
\textsc{RS}_\alpha(\theta, M^T) - \textsc{RS}_\alpha(\theta, M) &\ge r^T(\theta)\Big(V(q^T(\theta))-c\Big) -r(\theta)\Big(V(q(\theta))-c\Big).
\end{align}
To complete the proof, it suffices to show that the RHS of inequality (\ref{eq:diff}) is non-negative for all $\theta$, and it is positive for some $\theta$. Before we show that,
consider the second term on the RHS of inequality (\ref{eq:diff}). If $q(\theta)=0$, then $r(\theta)=0$, and therefore, the whole term is zero. Thus, this term can be thought of as a convex combination of points on the graph of the function $\widetilde{V}(q)= \big(V(q) - c\big)\mathbf{1}_{q > 0}$. More formally, if $q(\theta)>0$, then
\[
r(\theta)\Big(V(q(\theta))-c\Big) = r(\theta)\widetilde{V}(q(\theta)) + (1-r(\theta))\widetilde{V}(0).
\]
A similar interpretation also applied to the first term on the RHS. Thus, to evaluate the RHS of inequality (\ref{eq:diff}), we need to consider the {\sl concave closure} of $\widetilde{V}(q)$, which is the smallest concave function weakly above $\widetilde{V}$.  

Observe $\widetilde{V}(q)$ is strictly increasing and strictly concave for all $0<q<\bar{q}$, where $\bar{q}$ is the unique $q$ such that $P(q)=0$. This is because the derivative of $\widetilde{V}(q)$ at $q > 0$ is $P(q)$ which is positive and strictly decreasing. However, $\widetilde{V}$ has a discontinuity at $q=0$ because $\widetilde{V}(0) = 0$ and $\lim_{q \downarrow 0}\widetilde{V}(q)=-c$. This means that $\widetilde{V}$ is not concave. The {\bf concave closure} of function $\widetilde{V}$ is denoted by $co(\widetilde{V})$ and is given by
\begin{align*}
co(\widetilde{V})(q) = \begin{cases}
    q P(\hat{q}) = \frac{\overline{V}(\hat{q})}{\hat{q}}q & \textrm{if} \ q  < \hat{q}\\
    \overline{V}(q) & \textrm{if} \ q  \ge \hat{q}
\end{cases}
\end{align*} 
Since $\hat{q}$ is such that $\overline{V}(\hat{q})=\hat{q}P(\hat{q})$, it follows that $co(\widetilde{V})$ defines the concave envelope of $\widetilde{V}$. Thus, $co(\widetilde{V})(q)$ is linearly increasing in $q$ till $\hat{q}$. For $q \ge \hat{q}$, $co(\widetilde{V})(q)$ is strictly concave and equals $\overline{V}(q) $. Figure \ref{fig:concav_0} in the main text graphically illustrates $co(\widetilde{V})$. 

Pick $\theta \in \Theta$. If $q(\theta)r(\theta)=0$, then the regulator's surplus is identical under both $M$ and $M^T$. Below, we consider two cases and argue that the RHS of the inequality (\ref{eq:diff}) is non-negative for all $\theta$.

\noindent \textsc{\textbf{Case 1}}. Let $\theta$ be such that $q(\theta)r(\theta) \ge \hat{q}$. Hence, by definition $r^T(\theta)=1$ and $q^T(\theta)=q(\theta)r(\theta)$. Then,
\begin{align*}
   \textsc{RS}_{\alpha}(\theta,M^T)-\textsc{RS}_{\alpha}(\theta,M) &\ge   r^T(\theta)\overline{V}(q^T(\theta)) - r(\theta)\overline{V}(q(\theta)) \\
    &= co(\widetilde{V})(q(\theta)r(\theta))-r(\theta)co(\widetilde{V})(q(\theta))\\
    &\ge 0
\end{align*}
where the inequality follows from concavity of $co(\widetilde{V})$ and $co(\widetilde{V})(0)=0$. If $r(\theta) < 1$, then inequality is strict because of the strict concavity of $co(\widetilde{V})$ on $q \ge \hat{q}$.

\noindent \textsc{\textbf{Case 2}}. Let $\theta$ be such that $0< q(\theta)r(\theta) < \hat{q}$. Then, $q^T(\theta)=\hat{q}$ and $r^T(\theta)=q(\theta)r(\theta)/\hat{q}$. Moreover,
\begin{align*}
   \textsc{RS}_{\alpha}(\theta,M^T)-\textsc{RS}_{\alpha}(\theta,M) &\ge   r^T(\theta)\overline{V}(q^T(\theta)) - r(\theta)\overline{V}(q(\theta)) \\
    &= \frac{q(\theta)r(\theta)}{\hat{q}}\overline{V}(\hat{q})-r(\theta)\overline{V}(q(\theta))\\
    &= co(\widetilde{V})(q(\theta)r(\theta))-r(\theta)\overline{V}(q(\theta))\\
    &\ge co(\widetilde{V})(q(\theta)r(\theta))-r(\theta)co(\widetilde{V})(q(\theta)) \\
    &\ge 0
\end{align*}
where the second equality follows from the definition of $co(\widetilde{V})$, the second inequality follows because $co(\widetilde{V})(q) \ge \overline{V}(q)$ for all $q$, and the last inequality follows from concavity of $co(\widetilde{V})$ and $co(\widetilde{V})(0)=0$. If $q(\theta) \neq \hat{q}$, this inequality is strict.

To complete the proof, observe that $M$ is not floor randomized, and therefore, we must have at least one $\theta$ where $q(\theta)r(\theta) \ge \hat{q}$ but $r(\theta) < 1$ or $0 < q(\theta)r(\theta) \le \hat{q}$ and $q(\theta) \neq \hat{q}$. That is, the RHS of inequality (\ref{eq:diff}) is positive at this $\theta$.{\hfill $\blacksquare{}$ \vspace{12pt}}

\section{Proofs of Section \ref{sec:characterization}}\label{app:characterization}

\noindent {\sc \textbf{Proof of Observation \ref{ob:hatq_bounds}.}} Let $\Delta(q) = \overline{V}(q) - P(q)q$. Notice that $\Delta(q)$ is strictly increasing and continuous in $q$ as $\Delta'(q) = -P'(q)q > 0$. Further, by definition of $\hat{q}$, we have $\Delta(\hat{q})=0$.
Now, 
\begin{align*}
\Delta(q_e(\bar{\theta})) & = \overline{V}(P^{-1}(\bar{\theta})) - \bar{\theta}P^{-1}(\bar{\theta}) >0
\end{align*}
where the inequality follows from assumption $[A2]$. By continuity and monotonicity of $\Delta$, we have $\hat{q} < q_e(\overline{\theta})$. This further implies $\hat{q} < q_e(\theta)$ for all $\theta$ because $q_e$ decreases in $\theta$. Finally, consider
\begin{align*}
    \textsc{TS}(\theta, \hat{q}) = \overline{V}(\hat{q})-\theta \hat{q} \ge \overline{V}(\hat{q})-\overline{\theta} \hat{q} = \big(P(\hat{q})-\overline{\theta}\big)\hat{q} >0
\end{align*}
where the last inequality holds because $\hat{q} < q_e(\overline{\theta})$ and $P(\cdot)$ is strictly decreasing.{\hfill $\blacksquare{}$ \vspace{12pt}}

\noindent {\sc \textbf{Proof of Theorem \ref{theo:undnec}.}} Suppose $M$ is undominated. By Theorem \ref{theo:fr}, it is floor randomized. Lemma \ref{lem:dd} and Lemma \ref{lem:lc}, which are stated and proved below, establish that a floor-randomized mechanism violating DD or left-continuity is dominated. Thus, completing the proof of Part 1.
     
As for Part 2, consider a dominated IC and IR mechanism $M$, which is not floor randomized. By Theorem \ref{theo:fr}, $M$ is dominated by its floor randomized transformation $M^T$. We can further use Lemma \ref{lem:dd} and Lemma \ref{lem:lc} to transform $M^T$ into another floor-randomized mechanism $\widetilde{M}$ that dominates $M$ and satisfies DD and left-continuity.{\hfill $\blacksquare{}$ \vspace{12pt}}

\begin{lemma} \label{lem:dd}
Suppose $M \equiv (r,q,u)$ is a floor-randomized mechanism. Then, there exists a floor-randomized mechanism $\widetilde{M} \equiv (\tilde{r},\tilde{q},\tilde{u})$ satisfying DD such that
    \begin{align} 
        \tilde{q}(\theta) &\le q(\theta) &\forall~\theta > \underline{\theta} \label{eq:uu3} \\
        \tilde{u}(\theta) &\le u(\theta) &\forall~\theta \nonumber \\
        \textsc{RS}_{\alpha} (\theta,\widetilde{M}) &\ge \textsc{RS}_{\alpha}(\theta,M) &\forall~\theta.\label{eq:uu4}
    \end{align}
Inequalities (\ref{eq:uu3}) and (\ref{eq:uu4}) are strict for all $\theta$ where $M$ violates DD. 
\end{lemma}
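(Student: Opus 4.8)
The plan is to construct $\widetilde{M}$ explicitly from $M$ by ``capping'' the quantity rule at the efficient level $q_e$, while preserving the floor-randomized structure. Since $M$ is floor randomized, $\Theta$ partitions into $\Theta_1 \succ^{-1}$-ordered intervals $\Theta_1, \Theta_{01}, \Theta_0$ with $q \ge \hat q, r=1$ on $\Theta_1$, $q = \hat q, r \in (0,1)$ on $\Theta_{01}$, and $q = r = 0$ on $\Theta_0$. On $\Theta_{01} \cup \Theta_0$ we have $q(\theta) \le \hat q < q_e(\theta)$ already (Observation \ref{ob:hatq_bounds}), so DD can only fail on $\Theta_1$, and only where $q(\theta) > q_e(\theta)$. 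First I would define $\tilde q(\theta) := \min\{q(\theta), q_e(\theta)\}$ on $\Theta_1$ and leave $q,r$ unchanged elsewhere, setting $\tilde r = r$ throughout; one checks $\tilde q(\underline\theta) = q_e(\underline\theta)$ (if $M$ violates DD at $\underline\theta$ this forces the equality; if $M$ already satisfied it, it is unchanged — a small case split is needed here, possibly lowering $\tilde q(\underline\theta)$ to $q_e(\underline\theta)$ by fiat since DD demands equality at $\underline\theta$). Then $\tilde u(\theta) := \int_\theta^{\bar\theta} \tilde q(z)\tilde r(z)\,dz$, so that $\tilde u(\bar\theta) = 0$.

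Next I would verify $\widetilde{M}$ is a legitimate floor-randomized mechanism: $\tilde q(\theta)\tilde r(\theta) = \min\{q(\theta), q_e(\theta)\} r(\theta)$ on $\Theta_1$ is still decreasing because $q(\theta) r(\theta)$ is decreasing (Lemma \ref{lem:myerson}), $q_e$ is strictly decreasing, and the pointwise minimum of two decreasing functions is decreasing; hence $\widetilde{M}$ is IC, and IR holds since $\tilde u(\bar\theta) = 0$. On $\Theta_1$ we retain $\tilde r = 1$ and $\tilde q = \min\{q, q_e\} \ge \hat q$ using Observation \ref{ob:hatq_bounds} (both $q(\theta) \ge \hat q$ and $q_e(\theta) > \hat q$), so the floor-randomized partition structure is inherited with the same three intervals. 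Inequality (\ref{eq:uu3}) is immediate from $\tilde q \le q$, and $\tilde u \le u$ follows by integrating (\ref{eq:uu3}) against $r = \tilde r$.

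The remaining and main point is inequality (\ref{eq:uu4}). Using (\ref{eq:RS2}), $\textsc{RS}_\alpha(\theta, \widetilde M) - \textsc{RS}_\alpha(\theta, M) = r(\theta)\big[\textsc{TS}(\theta, \tilde q(\theta)) - \textsc{TS}(\theta, q(\theta))\big] - (1-\alpha)\big[\tilde u(\theta) - u(\theta)\big]$. The second bracket is $\le 0$ (so $-(1-\alpha)[\cdot] \ge 0$), and the first bracket is $\ge 0$ because $\tilde q(\theta) = \min\{q(\theta), q_e(\theta)\}$ is weakly closer to the unique total-surplus maximizer $q_e(\theta)$ than $q(\theta)$ is, and $\textsc{TS}(\theta, \cdot)$ is strictly concave with maximizer $q_e(\theta)$ — indeed either $\tilde q(\theta) = q(\theta)$, giving equality, or $q(\theta) > q_e(\theta) = \tilde q(\theta)$, giving a strict increase. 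So the difference is a sum of two non-negative terms, proving (\ref{eq:uu4}); and at any $\theta > \underline\theta$ where $M$ violates DD we have $q(\theta) > q_e(\theta)$, so $\theta \in \Theta_1$, $r(\theta) = 1$, and the first bracket is strictly positive, making (\ref{eq:uu4}) strict, while (\ref{eq:uu3}) is also strict there. The one obstacle to watch is the boundary behavior at $\underline\theta$: DD requires exact equality $\tilde q(\underline\theta) = q_e(\underline\theta)$, so if $M$ originally had $q(\underline\theta) < q_e(\underline\theta)$ one must additionally raise $\tilde q(\underline\theta)$ to $q_e(\underline\theta)$ (a measure-zero change that affects neither $\tilde u$ nor the integral, hence none of the three displayed inequalities), and one should note the ``strict at violation points'' clause is about $\theta > \underline\theta$ anyway, so the boundary adjustment is harmless.
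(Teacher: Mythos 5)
Your construction is essentially identical to the paper's: the paper defines $\tilde q(\theta)=q_e(\theta)$ wherever $\theta=\underline\theta$ or $q(\theta)>q_e(\theta)$ and $\tilde q(\theta)=q(\theta)$ otherwise (i.e.\ $\min\{q,q_e\}$ with the forced equality at $\underline\theta$), takes $\tilde u(\theta)=\int_\theta^{\bar\theta}\tilde q\tilde r$, verifies IC via monotonicity of the capped product, and splits the surplus comparison into the total-surplus gain (from $q_e$ maximizing $\textsc{TS}(\theta,\cdot)$) and the rent reduction, exactly as you do.

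There is one small omission at the boundary: the paper also sets $\tilde r(\underline\theta)=1$, whereas you keep $\tilde r=r$ everywhere. This matters when $\underline\theta\notin\Theta_1$ under $M$ (e.g.\ $r(\underline\theta)\in(0,1)$ with $q(\underline\theta)=\hat q$): after you raise $\tilde q(\underline\theta)$ to $q_e(\underline\theta)>\hat q$ while leaving $\tilde r(\underline\theta)=r(\underline\theta)<1$, the point $\underline\theta$ fits none of the three cells of Definition~\ref{def:fr} ($q>\hat q$ but $r\notin\{0,1\}$ and $q\neq\hat q$), so $\widetilde M$ would not be floor randomized as defined. Setting $\tilde r(\underline\theta)=1$ repairs this; it is a measure-zero change that leaves $\tilde u$ intact, preserves monotonicity of $\tilde q\tilde r$, and only (weakly) increases $\textsc{RS}_\alpha(\underline\theta,\widetilde M)$ since $\textsc{TS}(\underline\theta,q_e(\underline\theta))>0$. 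Everything else in your argument — the case analysis showing DD can only fail on $\Theta_1$, the monotonicity and floor checks, and the two-term decomposition of the surplus difference with strictness where $q(\theta)>q_e(\theta)$ — is correct and matches the paper.
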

\begin{proof} Define a new quantity rule $\tilde{q}$ as follows 
    \begin{align*} 
        \tilde{q}(\theta) &:= 
        \begin{cases}
            q_e(\theta) & \textrm{if $\theta = \underline{\theta}$ or $q(\theta) > q_e(\theta)$} \\
            q(\theta) & \textrm{otherwise}   
        \end{cases}
    \end{align*} 
Consider mechanism $\widetilde{M} = (\tilde{r},\tilde{q},\tilde{u})$, where $\tilde{r}(\underline{\theta})=1$ and $\tilde{r}(\theta)=r(\theta)$ for all $\theta \neq \underline{\theta}$. Moreover,
\begin{align}\label{eq:def_tilde{u}} 
            \tilde{u}(\theta) &:= \int \limits_{\theta}^{\overline{\theta}} \tilde{q}(y)\tilde{r}(y)dy~\qquad~\forall~\theta. 
\end{align}
Both $q$ and $r$ are decreasing in $\theta$ because $M$ is a floor-randomized mechanism. Moreover, $q_e$ also decreases in $\theta$. Thus, $\tilde{q}$ as well as $\tilde{r}$ decrease in $\theta$. Monotonicity of $\tilde{q}(\theta)\tilde{r}(\theta)$ in $\theta$ along with equation (\ref{eq:def_tilde{u}}) implies $\widetilde{M}$ is IC. It is also IR with $\tilde{u}(\overline{\theta})=0$. To see why 
$\widetilde{M}$ is a floor-randomized mechanisms, consider the following two cases.\\
\noindent \textsc{\textbf{Case 1}}. Suppose $\Theta^M_1 = \emptyset$. In this case, $q(\theta) \in \{0, \hat{q}\}$ for all $\theta$. Thus, $\tilde{q}$ differs from $q$ only at $\underline{\theta}$, implying that $\Theta^{\widetilde{M}}_{1} = \{\underline{\theta}\} $. Furthermore,
\begin{align*}
\Theta^{\widetilde{M}}_{01} &= \begin{cases}
\Theta^M_{01} & \textrm{if $\Theta_{01}^M = \emptyset$} \\
\Theta^{M}_{01}\setminus\{\underline{\theta}\}   &\textrm{otherwise}
\end{cases} &
\Theta^{\widetilde{M}}_{0} &= \begin{cases}
\Theta^M_{0}\setminus \{\underline{\theta}\} & \textrm{if $\Theta^M_{01}=\emptyset$} \\
\Theta^M_{0} & \textrm{otherwise}
\end{cases} 
\end{align*}

\noindent \textsc{\textbf{Case 2}}. If $\Theta^M_1 \neq \emptyset$, then $\tilde{q}$ differs from $q$ only on a subset of $\Theta^M_{1}$. For every $\theta$ in this subset, $\tilde{q}(\theta) = q_e(\theta) > \hat{q}$, and therefore, $\tilde{r}(\theta)=1$. Consequently,
\begin{align*}
\Theta^{\widetilde{M}}_{1} & = \Theta^M_1 & \Theta^{\widetilde{M}}_{01} &= \Theta^M_{01} &\Theta^{\widetilde{M}}_{0} &= \Theta^M_0.
\end{align*}

We now argue that mechanism $\widetilde{M}$ satisfies DD and all the relevant inequalities in the statement of Lemma \ref{lem:dd} hold. Observe that $\tilde{q}(\theta) \le \min\{q_e(\theta),q(\theta)\}$ for all $\theta$, and thus, $\tilde{q}(\theta)$ satisfies DD. Moreover, $\tilde{q}(\theta) \le q(\theta) $ for all $\theta$ with a strict inequality where $q$ violates DD. We also have  
        \begin{align*} 
        \tilde{u}(\theta) = \int \limits_{\theta}^{\overline{\theta}} \tilde{q}(y)\tilde{r}(y)dy \le \int \limits_{\theta}^{\overline{\theta}} q(y)r(y)dy = u(\theta) \qquad \forall~\theta. \label{eq:u1}
        \end{align*}
        Thus, to complete the proof it suffice to show that the total surplus under $\widetilde{M}$ is larger than that under $M$ at every $\theta$. Total surplus is unchanged for $\theta$ if $\theta \ne \underline{\theta}$ and $q(\theta) \le q_e(\theta)$. For $\theta$ such that $\theta = \underline{\theta}$ or $q(\theta) > q_e(\theta)$, we have $\tilde{q}(\theta)=q_e(\theta)$. By definition of $q_e$, total surplus is higher in $\widetilde{M}$ than in $M$.
\end{proof}

\begin{lemma} 
\label{lem:lc}
Suppose $M \equiv (r,q,u)$ is a floor-randomized mechanism satisfying DD. Then, there exists another floor-randomized mechanism $\widetilde{M} \equiv (\tilde{r}, \tilde{q},u)$ satisfying DD and left-continuity such that 
\begin{align*}
    \tilde{q}(\theta)\tilde{r}(\theta) &\ge q(\theta)r(\theta) &\forall~\theta \\
    \textsc{CS}(\theta,\widetilde{M}) &\ge \textsc{CS}(\theta,M) &\forall~\theta\\
    \textsc{RS}_{\alpha} (\theta,\widetilde{M}) &\ge \textsc{RS}_{\alpha}(\theta,M)&\forall~\theta
\end{align*}
with equality holding for almost all $\theta$. Moreover, the above inequalities are strict at $\theta$s where $M$ violates left-continuity.
\end{lemma}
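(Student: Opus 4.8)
The goal is to fix up a floor-randomized mechanism $M$ so that the product $qr$ becomes left continuous, without disturbing $u$, and while only weakly raising consumer surplus (hence regulator surplus) at every type. The natural device is to redefine $qr$ at its jump discontinuities by taking the left limit. Concretely, since $M$ is floor randomized, the product $g(\theta) := q(\theta)r(\theta)$ is a decreasing function, so it is left continuous except at countably many points. I would define a new product $\tilde g(\theta) := \lim_{y \uparrow \theta} g(y)$ for $\theta > \underline\theta$ and $\tilde g(\underline\theta) := g(\underline\theta)$ (or the right limit there, which does not affect the integral). Then $\tilde g$ is decreasing, left continuous, and $\tilde g \geq g$ everywhere, with $\tilde g = g$ except at the (countably many) jump points of $g$; in particular $\tilde g$ and $g$ agree almost everywhere, so $\int_\theta^{\overline\theta}\tilde g(y)\,dy = \int_\theta^{\overline\theta} g(y)\,dy = u(\theta)$, which means the same envelope formula $u$ remains valid and the mechanism with product $\tilde g$ and rent $u$ is still IC and IR by Lemma \ref{lem:myerson}.

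\textbf{Recovering $(\tilde r, \tilde q)$ from $\tilde g$.} I then need to split $\tilde g$ back into an operation probability $\tilde r$ and quantity $\tilde q$ in floor-randomized form: set $\tilde q(\theta) = \max(\tilde g(\theta), \hat q)$ and $\tilde r(\theta) = \min(\tilde g(\theta)/\hat q, 1)$ when $\tilde g(\theta) > 0$, and $\tilde q(\theta) = \tilde r(\theta) = 0$ otherwise — exactly the floor-randomized transformation from Section \ref{sec:floor_rand}. This gives a floor-randomized $\widetilde M \equiv (\tilde r, \tilde q, u)$ with $\tilde q \tilde r = \tilde g$ left continuous. For DD: since $\tilde g(\theta)$ is a left limit of values $g(y) = q(y) r(y) \leq q(y) \leq q_e(y)$, and $q_e$ is continuous and decreasing, $\tilde g(\theta) \leq \lim_{y\uparrow\theta} q_e(y) = q_e(\theta)$; when $\tilde g(\theta) \geq \hat q$ we get $\tilde q(\theta) = \tilde g(\theta) \leq q_e(\theta)$, and when $\tilde g(\theta) < \hat q$ we get $\tilde q(\theta) = \hat q < q_e(\theta)$ by Observation \ref{ob:hatq_bounds}. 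At $\underline\theta$, DD required $q(\underline\theta) = q_e(\underline\theta)$; I should argue this is inherited, handling $\underline\theta$ by the right limit convention, or simply note the value at the single point $\underline\theta$ can be set to preserve $q(\underline\theta)=q_e(\underline\theta)$ without changing any integral.

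\textbf{The surplus comparison.} For the inequalities: at each $\theta$, $\textsc{CS}(\theta,\widetilde M) = \tilde r(\theta)(V(\tilde q(\theta)) - c - \theta \tilde q(\theta)) - u(\theta)$, and by the concave-closure computation in the proof of Theorem \ref{theo:fr}, $\tilde r(\theta)(V(\tilde q(\theta)) - c) = co(\widetilde V)(\tilde g(\theta))$ and $r(\theta)(V(q(\theta)) - c) = co(\widetilde V)(g(\theta))$ whenever we are in floor-randomized form (and both are $0$ when the product is $0$). Since $co(\widetilde V)$ is increasing and $\tilde g(\theta) \geq g(\theta)$, while the $-\theta \tilde g(\theta)$ term is balanced against $-\theta g(\theta)$... wait, here $\tilde g \geq g$ makes the $-\theta g$ term \emph{worse}, so I need $\textsc{TS}$ monotonicity more carefully: $\textsc{TS}(\theta, \tilde q(\theta)) \tilde r(\theta) - \textsc{TS}(\theta, q(\theta)) r(\theta) = co(\widetilde V)(\tilde g) - co(\widetilde V)(g) - \theta(\tilde g - g)$. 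Since $\tilde g \geq g$ and both lie in $[0, q_e(\theta)]$ where $co(\widetilde V)$ has derivative (the price) at least $\theta$ — indeed on $[\hat q, q_e(\theta)]$, $co(\widetilde V)' = P \geq P(q_e(\theta)) = \theta$, and on $[0,\hat q]$, $co(\widetilde V)' = P(\hat q) > \theta$ since $\hat q < q_e(\theta)$ — we get $co(\widetilde V)(\tilde g) - co(\widetilde V)(g) \geq \theta(\tilde g - g)$, so the total surplus weakly rises; since $u$ is unchanged, $\textsc{RS}_\alpha$ weakly rises, and likewise $\textsc{CS}$. Equality holds wherever $\tilde g(\theta) = g(\theta)$, i.e. for almost all $\theta$; strictness at a jump point $\theta$ of $g$ follows because there $\tilde g(\theta) > g(\theta)$ and the derivative bound is strict on $[0,\hat q]$ and weak-but-with-strict-gap handling on $(\hat q, q_e)$ — the main case to nail down is showing the gap is genuinely positive, which holds because $q_e(\theta)$ is strictly above the relevant points and $P$ is strictly decreasing. \textbf{The main obstacle} I anticipate is exactly this last strictness claim at jump points (and the bookkeeping at $\underline\theta$): one must verify that at a discontinuity of $g$, redefining to the left limit strictly increases $co(\widetilde V)(g(\theta)) - \theta g(\theta)$, which requires checking that $g(\theta) < q_e(\theta)$ strictly at such points — this follows from DD plus the fact that $q_e$ is continuous while $g$ jumps, so $g(\theta) = \lim_{y\downarrow\theta} g(y) \leq \lim_{y \downarrow \theta} q_e(y) = q_e(\theta)$ with the jump preventing $g(\theta)$ from being the left limit, but pinning down strict inequality versus the corner case $g(\theta) = q_e(\theta)$ needs a short argument that DD with equality only at $\underline\theta$ rules this out away from $\underline\theta$.
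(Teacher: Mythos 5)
Your proposal is correct and follows essentially the same route as the paper: both fix the mechanism at the countably many discontinuities of the decreasing product $q r$ by taking left limits (the paper takes left limits of $q$ and $r$ separately, which for a floor-randomized mechanism coincides with your left limit of the product followed by the floor-randomized re-decomposition), leave $u$ unchanged since the product is altered only on a null set, and show that total surplus weakly rises with strictness at jump points. The ``corner case'' you flag at the end is not actually an obstacle: at a jump point one has $g(\theta) < \tilde{g}(\theta) \le q_e(\theta)$ (the latter from DD and continuity of $q_e$), so the interval $\big(g(\theta),\tilde{g}(\theta)\big)$ is a nonempty subset of $[0,q_e(\theta))$ on which the slope of $co(\widetilde{V})$ strictly exceeds $\theta$, which yields the strict increase directly and makes precise the paper's terser claim that raising $q$ or $r$ under DD raises expected total surplus.
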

\begin{proof}
Because $q(\theta)r(\theta)$ is decreasing, it violates left-continuity at countably many types. Let $\Theta_d$ be the set of such types. We modify $q$ and $r$ at these types to obtain a new quantity rule $\tilde{q}$ and a new operation decision $\tilde{r}$.
\begin{align*}
    \tilde{q}(\theta) &= \lim_{z \uparrow \theta}q(z)  &
    \tilde{r}(\theta) &= \lim_{z \uparrow \theta}r(z) & \forall~\theta \in \Theta_d
\end{align*}
By construction, $\tilde{q}\tilde{r}$ is left continuous at all $\theta \in \Theta_d$, and as a result, it is also left continuous at all $\theta$. Therefore, the IC and IR mechanism $\widetilde{M} \equiv (\tilde{r},\tilde{q},\tilde{u})$, where $\tilde{u}(\theta) = \int \limits_{\theta}^{\overline{\theta}}\tilde{q}(y)\tilde{r}(y)dy$, is also left continuous. Moreover, $\widetilde{M}$ is floor randomized and satisfies DD because $M$ satisfies both these properties. Below, we argue that $\widetilde{M}$ dominates $M$.

For mechanism $\widetilde{M}$, the regulator's surplus at any type $\theta$ is 
\begin{align*}
    \tilde{r}(\theta)\textsc{TS}(\theta,\tilde{q}(\theta)) - (1-\alpha)\tilde{u}(\theta).
\end{align*}
Because $\tilde{q}(\theta)\tilde{r}(\theta)$ differs from $q(\theta)r(\theta)$ on measure zero types, we have $\tilde{u}(\theta)=u(\theta)$ for all $\theta$. Thus,
it is enough to show that \[
\tilde{r}(\theta)\textsc{TS}(\theta,\tilde{q}(\theta)) > r(\theta)\textsc{TS}(\theta,q(\theta))~~\forall~\theta\in \Theta_d
\] 
Fix $\theta\in \Theta_d$. First, suppose that $r(\theta)=0$. Then, $q(\theta)=0$. However, $\tilde{q}(\theta)\tilde{r}(\theta)>0$. Otherwise, $qr$ will be continuous at $\theta$, a contradiction. In particular, we have $\tilde{q}(\theta)>0$ and $\tilde{r}(\theta)>0$, and a larger expected total surplus under $\widetilde{M}$ at $\theta$. To complete the proof, consider the remaining case where $r(\theta)>0$. Because $M$ is floor randomized, $q(\theta) \ge \hat{q}$. Moreover, by construction, either $\tilde{q}(\theta) > q(\theta)$ or $\tilde{r}(\theta) > r(\theta)$. Because both $M$ and $\widetilde{M}$ satisfy DD, the increase in quantity or probability of operation increases the expected total surplus at $\theta$. 
\end{proof}

\noindent {\sc \textbf{Proof of Theorem \ref{theo:undsuff}.}} Suppose $M \equiv (q,r,u)$ is a floor randomized, left-continuous mechanism that satisfies strict DD. By contradiction, assume $M$ is dominated by another mechanism $\widetilde{M} \equiv (\tilde{r},\tilde{q},\tilde{u})$. Part (2) of Theorem \ref{theo:undnec} implies that we can find $\widetilde{M}$ such that it is floor randomized, left continuous, and satisfies DD. Let
\begin{align*}
    D_{qr} &:= \{\theta: \tilde{q}(\theta) \tilde{r}(\theta) < q(\theta)r(\theta)\}.
\end{align*}
By Lemma \ref{lem:Dqr} (which is stated and proved below after this proof), $D_{qr}$ has a positive measure. Next, define 
\begin{align*}
    \theta_h &:= \sup~\{\theta: \theta \in D_{qr}\}.
\end{align*}
$\theta_h$ is a finite real number because $D_{qr}$ is non-empty and bounded. Clearly, $\theta_h > \underline{\theta}$. 
In what follows, the proof obtains a contradiction by showing that there exists $\theta$, in a left neighbourhood of $\theta_h$, at which the regulator gets a higher surplus under mechanism $M$. To do so, it considers two cases: 1) $r(\theta_h) \in [0,1)$, and 2) $r(\theta_h)=1$.   

\noindent {\sc \textbf{Case 1}}. 
If $r(\theta_h) \in [0,1)$, then $q(\theta_h)r(\theta_h) < \hat{q}$ (since $M$ is a floor-randomized mechanism). There exists $\theta_{\ell} < \theta_h$, along with a corresponding left-neighbourhood $\mathcal{N} : = [\theta_{\ell}, \theta_h)$ of $\theta_h$,  such that 
\begin{align*}
\hat{q} >_{(a)} q(\theta)r(\theta) >_{(b)} \tilde{q}(\theta)\tilde{r}(\theta)\geq 0 ~\qquad~\forall~\theta\in\mathcal{N}.
\end{align*}
Inequality $(a)$ holds because $qr$ is  decreasing in $\theta$ and left continuous at $\theta_h$. As for inequality $(b)$, it follows from the definition of $\theta_h$. Thus, for every $\theta \in \mathcal{N}$, we have $q(\theta)=\hat{q}$ and $r(\theta)>0$. Furthermore, $r(\theta) > \tilde{r}(\theta)$. To see why, notice that 
$\hat{q}r(\theta) > \tilde{q}(\theta)\tilde{r}(\theta)$ and $\widetilde{M}$ being floor randomized imply $\tilde{q}(\theta) \in \{0,\hat{q}\}$. In both these cases, inequality $(b)$ implies $r(\theta) > \tilde{r}(\theta)$. Now, let
    \[
    \Delta := \sup_{y \in \mathcal{N}} r(y)-\tilde{r}(y)
    \]
which is a strictly positive. Thus, there exists $\theta \in \mathcal{N}$ such that $r(\theta) - \tilde{r}(\theta)=\Delta - \epsilon$, where $\epsilon < \Delta\frac{P(\hat{q})-P(q_e(\overline{\theta}))}{P(\hat{q})-P(q_e(\underline{\theta}))}$. Consider the difference
\begin{align*}
      \textsc{RS}_{\alpha}(\theta,M)-\textsc{RS}_{\alpha}(\theta,\widetilde{M}) &=~~~ r(\theta)\big(\overline{V}(\hat{q})-\theta\hat{q}\big)-\tilde{r}(\theta)\big(\overline{V}(\tilde{q}(\theta))-\theta\tilde{q}(\theta)\big)-(1-\alpha)\big(u(\theta)-\tilde{u}(\theta)\big) \\
      &\ge_{(a)} \Big(r(\theta)-\tilde{r}(\theta)\Big)\big(\overline{V}(\hat{q})-\theta\hat{q}\big)-(1-\alpha)\big(u(\theta)-\tilde{u}(\theta)\big) \\
      &=_{(b)} \Big(r(\theta)-\tilde{r}(\theta)\Big)\hat{q} \big(P(\hat{q})-\theta\big)-(1-\alpha)\big(u(\theta)-\tilde{u}(\theta)\big)
      \end{align*}
      where inequality $(a)$ follows from $\tilde{q}(\theta) \in \{0,\hat{q}\}$. As for equality $(b)$, it uses the fact that $\overline{V}(\hat{q})=\hat{q}P(\hat{q})$. To reach a contradiction and complete the proof, it suffices to show that the right hand side of the above equation is positive.
      \begin{align*}
          & \Big(r(\theta)-\tilde{r}(\theta)\Big)\hat{q} \big(P(\hat{q})-\theta\big)-(1-\alpha)\big(u(\theta)-\tilde{u}(\theta)\big) \\
          &= \Big(r(\theta)-\tilde{r}(\theta)\Big)\hat{q} \big(P(\hat{q})-\theta\big)- (1-\alpha) \Big( \hat{q}\int \limits_{\theta}^{\theta_h} \big(r(y)-\tilde{r}(y)\big)dy\Big)  - (1-\alpha)  \Big(\int \limits_{\theta_h}^{\overline{\theta}} \big(q(y)r(y)-\tilde{q}(y)\tilde{r}(y)\big)dy\Big)\\
          &~\qquad~\textrm{(since $r(y) > \tilde{r}(y) \ge 0$ for all $y \in [\theta, \theta_h$) and $u(\overline{\theta})=\tilde{u}(\overline{\theta})=0$)} \\
          &\ge \Big(r(\theta)-\tilde{r}(\theta)\Big)\hat{q} \big(P(\hat{q})-\theta\big)- (1-\alpha) \Big( \hat{q}\int \limits_{\theta}^{\theta_h} \big(r(y)-\tilde{r}(y)\big)dy\Big) \\
          & \textrm{(since $q(y)r(y) \le \tilde{q}(y)\tilde{r}(y)$ for all $y > \theta_h$)} \\
          &\ge \Big(r(\theta)-\tilde{r}(\theta)\Big)\hat{q} \big(P(\hat{q})-\theta\big)- \Big( \hat{q}\int \limits_{\theta}^{\theta_h} \big(r(y)-\tilde{r}(y)\big)dy\Big)~\qquad~\textrm{(since $r(y) > \tilde{r}(y)$ for all $y \in [\theta,\theta_h)$)} \\
          &\ge \Big(r(\theta)-\tilde{r}(\theta)\Big)\hat{q} \big(P(\hat{q})-\theta\big)-  \hat{q} \Delta (\theta_h - \theta) \\
          &= (\Delta - \epsilon)\hat{q} \big(P(\hat{q})-\theta\big)-  \hat{q} \Delta (\theta_h - \theta) \\
          &= \Delta \hat{q} \big(P(\hat{q}) - \theta_h) - \epsilon \hat{q} \big(P(\hat{q})-\theta\big) \\
          &\ge \Delta \hat{q} \big(P(\hat{q}) - \bar{\theta}) - \epsilon \hat{q} \big(P(\hat{q})-\underline{\theta}\big) \\
          &= \Delta \hat{q} \big(P(\hat{q}) - P(q_e(\bar{\theta}))\big) - \epsilon \hat{q} \big(P(\hat{q})-P(q_e(\underline{\theta}))\big) \\
          &> 0
      \end{align*}
      where the last inequality follows from Lemma \ref{ob:hatq_bounds} ($\hat{q} < q_e(\bar{\theta})$) and the fact that $\epsilon<\Delta\frac{P(\hat{q})-P(q_e(\overline{\theta}))}{P(\hat{q})-P(q_e(\underline{\theta}))}$. \\
    
    \noindent \textsc{\textbf{Case 2}}. Suppose $r(\theta_h)=1$. Then, by Lemma \ref{lem:theta_h_equality} (which is stated and proved below after this proof), $\tilde{r}(\theta_h)=1$. Because both $M$ and $\widetilde{M}$ are floor randomized, $r(\theta)=\tilde{r}(\theta)=1$ for all $\theta \le \theta_h$. Therefore, it suffices to focus on $q$ and $\tilde{q}$, which are left continuous on $\theta \le \theta_h$,  for comparing the regulator's surplus for types below $\theta_h$.
    
    There exists a non-empty left-neighbourhood $\mathcal{N} := [\theta_l,\theta_h)$ of $\theta_h$ such that 
    \[
    \tilde{q}(\theta) <_{(a)} q(\theta) <_{(b)} q_e(\theta_h) ~\forall~\theta\in \mathcal{N}.
    \] 
     Inequality $(a)$ holds because of the definition of $\theta_h$. Whereas, inequality $(b)$ is guaranteed because $q(\theta_h) <q_e(\theta_h)$ ($M$ satisfies strict DD) and $q$ is left continuous and decreasing. Let
    \[
    \Delta = \sup_{y \in \mathcal{N}} q(y)-\tilde{q}(y)
    \]
    which is strictly positive. Thus, there exists $\theta \in \mathcal{N}$ such that $q(\theta) - \tilde{q}(\theta)=\Delta - \epsilon$, where $\epsilon < \Delta\frac{P(q(\theta_l))-P(q_e(\theta_h))}{P(q(\theta_l))-P(q_e(\underline{\theta}))}$. Now, consider the difference
    \begin{align*}
      \textsc{RS}_{\alpha}(\theta,M)-\textsc{RS}_{\alpha}(\theta,\widetilde{M}) &= \int \limits_{\tilde{q}(\theta)}^{q(\theta)}\big(P(z)-\theta\big)dz-(1-\alpha)\big(u(\theta)-\tilde{u}(\theta)\big).
      \end{align*}
To reach a contradiction and complete the proof, it suffices to show that the right hand side of the above equation is positive.
      \begin{align*}
     &\int \limits_{\tilde{q}(\theta)}^{q(\theta)}\big(P(z)-\theta\big)dz - (1-\alpha) \Big(u(\theta)-\tilde{u}(\theta)\Big)&\\
      &= \int \limits_{\tilde{q}(\theta)}^{q(\theta)}\big(P(z)-\theta\big)dz - (1-\alpha) \Big( \int \limits_{\theta}^{\theta_h} \big(q(y)-\tilde{q}(y)\big)dy\Big)& \\
      &\qquad - (1-\alpha) \Big( \int \limits_{\theta_h}^{\overline{\theta}} \big(q(y)r(y)-\tilde{q}(y)\tilde{r}(y)\big)dy\Big)&\textrm{(since $u(\overline{\theta})=0$ and $\tilde{u}(\overline{\theta}) = 0$)} \\
      &\ge \int \limits_{\tilde{q}(\theta)}^{q(\theta)}\big(P(z)-\theta\big)dz - (1-\alpha) \Big( \int \limits_{\theta}^{\theta_h} \big(q(y)-\tilde{q}(y)\big)dy\Big)&\textrm{(since $q(y)r(y) \le \tilde{q}(y)\tilde{r}(y)$ for all $y > \theta_h$)} \\
      &\ge \int \limits_{\tilde{q}(\theta)}^{q(\theta)}\big(P(z)-\theta\big)dz - \int \limits_{\theta}^{\theta_h} \big(q(y)-\tilde{q}(y)\big)dy&\textrm{(since $q(y)>\tilde{q}(y)$ for all $y \in [\theta,\theta_h)$)}\\
      &\ge \big(P(q(\theta_{\ell})) - \theta\big) (q(\theta) - \tilde{q}(\theta)) - \Delta (\theta_h-\theta)&\textrm{(since $q(\theta_{\ell}) \ge q(\theta)$ and by definition of $\Delta$)} \\
      &= \big(P(q(\theta_{\ell})) - \theta\big) (\Delta - \epsilon) - \Delta (\theta_h - \theta)& \\
      &= \Delta\big(P(q(\theta_{\ell})) - \theta_h\big) - \epsilon \big(P(q(\theta_{\ell})) - \theta\big) &\\
      &\ge \Delta\big(P(q(\theta_{\ell})) - \theta_h\big) - \epsilon \big(P(q(\theta_{\ell})) - \underline{\theta}\big) &\\
      &> 0&
      \end{align*}
where the last inequality follows from $\epsilon < \Delta\frac{P(q(\theta_l))-P(q_e(\theta_h))}{P(q(\theta_l))-P(q_e(\underline{\theta}))}$.{\hfill $\blacksquare{}$ \vspace{12pt}}

\begin{lemma}\label{lem:Dqr}
The set $D_{qr}$ has a positive measure.    
\end{lemma}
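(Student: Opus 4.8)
The plan is to argue by contradiction. Suppose $D_{qr}$ has Lebesgue measure zero, so that $\tilde{q}(\theta)\tilde{r}(\theta) \ge q(\theta)r(\theta)$ for almost every $\theta$; I will show this forces $\widetilde{M} = M$, which is impossible because $\widetilde{M}$ dominates $M$ and hence must give a strictly higher regulator surplus at some type.

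First I would pin down the two mechanisms at the lowest type. Since both $M$ and $\widetilde{M}$ satisfy DD with equality at $\underline{\theta}$, we have $q(\underline{\theta}) = \tilde{q}(\underline{\theta}) = q_e(\underline{\theta})$, and by Observation \ref{ob:hatq_bounds} this exceeds $\hat{q}$; because both mechanisms are floor randomized, $\underline{\theta}$ must then lie in the left interval, so $r(\underline{\theta}) = \tilde{r}(\underline{\theta}) = 1$. Hence the total-surplus term in equation (\ref{eq:RS2}) equals $\textsc{TS}(\underline{\theta}, q_e(\underline{\theta}))$ for both mechanisms at $\underline{\theta}$. Using that floor-randomized mechanisms satisfy $u(\overline{\theta}) = \tilde{u}(\overline{\theta}) = 0$, Lemma \ref{lem:myerson} gives the rent terms $\int_{\underline{\theta}}^{\overline{\theta}} q(z)r(z)\,dz$ and $\int_{\underline{\theta}}^{\overline{\theta}} \tilde{q}(z)\tilde{r}(z)\,dz$. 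Subtracting, $\textsc{RS}_{\alpha}(\underline{\theta},\widetilde{M}) - \textsc{RS}_{\alpha}(\underline{\theta}, M) = -(1-\alpha)\int_{\underline{\theta}}^{\overline{\theta}}\big(\tilde{q}(z)\tilde{r}(z) - q(z)r(z)\big)\,dz$, which is $\le 0$ under the contradiction hypothesis since $\alpha < 1$ and $\tilde{q}\tilde{r} \ge qr$ almost everywhere.

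Because $\widetilde{M}$ dominates $M$, this difference is also $\ge 0$, hence it equals $0$; so $\int_{\underline{\theta}}^{\overline{\theta}}\big(\tilde{q}\tilde{r} - qr\big)\,dz = 0$, and combined with $\tilde{q}\tilde{r} \ge qr$ a.e.\ this yields $\tilde{q}(\theta)\tilde{r}(\theta) = q(\theta)r(\theta)$ for almost every $\theta$. I would then upgrade this to equality everywhere: $qr$ and $\tilde{q}\tilde{r}$ are decreasing and left continuous, and two such functions agreeing almost everywhere agree everywhere (evaluate the left limit at any type along a sequence avoiding the exceptional null set, and use left-continuity; at $\underline{\theta}$ equality was already shown). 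Finally, in a floor-randomized mechanism the product $qr$ determines $q$ and $r$ pointwise --- according to whether $q(\theta)r(\theta) \ge \hat{q}$, lies strictly between $0$ and $\hat{q}$, or equals $0$ --- and, together with $u(\overline{\theta}) = 0$ via Lemma \ref{lem:myerson}, also determines $u$; so $\tilde{q}\tilde{r} = qr$ everywhere forces $\widetilde{M} = M$. Then $\textsc{RS}_{\alpha}(\theta,\widetilde{M}) = \textsc{RS}_{\alpha}(\theta, M)$ for all $\theta$, contradicting domination, and we conclude that $D_{qr}$ has positive measure.

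The step I expect to be the delicate one is the upgrade from ``almost everywhere'' to ``everywhere.'' The integral computation at $\underline{\theta}$ only delivers $\tilde{q}\tilde{r} = qr$ a.e., which on its own would leave room for the two mechanisms to differ on a null set and, in principle, for $\widetilde{M}$ to strictly dominate $M$ at an isolated type; left-continuity of both products (available by hypothesis for $M$ and via part (2) of Theorem \ref{theo:undnec} for $\widetilde{M}$) is precisely what closes this gap. One should also be careful that the cancellation of the total-surplus terms at $\underline{\theta}$ uses only DD --- the equality $q(\underline{\theta}) = q_e(\underline{\theta})$ --- together with the floor-randomized structure to deduce $r(\underline{\theta}) = 1$, so that these terms cancel exactly rather than merely being ordered.
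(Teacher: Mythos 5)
Your argument is correct and follows essentially the same route as the paper's proof: both equate the total-surplus terms at $\underline{\theta}$ via DD (and $r(\underline{\theta})=\tilde{r}(\underline{\theta})=1$), use domination at $\underline{\theta}$ to compare the rents $\int q r$ and $\int \tilde{q}\tilde{r}$, and rule out equality by combining left-continuity with the fact that the product $qr$ pins down a floor-randomized mechanism. If anything, your version is slightly more explicit than the paper's, since you state where the contradiction hypothesis $\tilde{q}\tilde{r} \ge qr$ a.e.\ is needed to pass from equality of the integrals to a.e.\ equality of the integrands, and you justify $r(\underline{\theta})=1$ rather than leaving it implicit.
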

\begin{proof}
Both $M$ and $\widetilde{M}$ satisfy DD, and therefore, $q(\underline{\theta})=\tilde{q}(\underline{\theta})=q_e(\underline{\theta})$. Thus, the total surplus from both the mechanisms at $\underline{\theta}$ is the same. Because $\widetilde{M}$ dominates $M$, it must be that the rent paid by the monopolist of type $\underline{\theta}$ is lower in $\widetilde{M}$ than in $M$, that is,
\begin{align}\label{eq:rent}
    u(\underline{\theta}) = \int \limits_{\underline{\theta}}^{\bar{\theta}} q(z)r(z) dz \ge \int \limits_{\underline{\theta}}^{\bar{\theta}} \tilde{q}(z)\tilde{r}(z)dz = \tilde{u}(\underline{\theta}).
\end{align}
If the inequality is an equality, $q(\theta)r(\theta)=\tilde{q}(\theta)\tilde{r}(\theta)$ for almost all $\theta$, and by left-continuity of $M$ and $\widetilde{M}$, $q(\theta)r(\theta)=\tilde{q}(\theta)\tilde{r}(\theta)$ for all $\theta$. Because both $M$ and $\widetilde{M}$ are floor-randomized mechanisms, this implies $M = \widetilde{M}$, which is a contradiction. Thus, inequality (\ref{eq:rent}) is strict, and $q(\theta)r(\theta) > \tilde{q}(\theta)\tilde{r}(\theta)$ for a positive measure of $\theta$. Consequently, $D_{qr}$ has a positive measure.
\end{proof}

\begin{lemma}\label{lem:theta_h_equality}
If $r(\theta_h)=1$, then $\tilde{r}(\theta_h)=1$ and $q(\theta_h)=\tilde{q}(\theta_h)$.    
\end{lemma}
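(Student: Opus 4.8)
The plan is to use the dominance inequality only at the single type $\theta_h$, combined with the observation that $\theta_h=\sup D_{qr}$ forces $\tilde q(z)\tilde r(z)\ge q(z)r(z)$ for every $z>\theta_h$, hence $\tilde u(\theta_h)=\int_{\theta_h}^{\overline\theta}\tilde q(z)\tilde r(z)\,dz\ge\int_{\theta_h}^{\overline\theta}q(z)r(z)\,dz=u(\theta_h)$. Writing $\textsc{RS}_{\alpha}(\theta_h,\widetilde M)\ge \textsc{RS}_{\alpha}(\theta_h,M)$ via (\ref{eq:RS2}), substituting $r(\theta_h)=1$, and moving the rent terms to one side gives the single key inequality
\[
\tilde r(\theta_h)\,\textsc{TS}(\theta_h,\tilde q(\theta_h))\ \ge\ \textsc{TS}(\theta_h,q(\theta_h)),
\]
because $(1-\alpha)\big(\tilde u(\theta_h)-u(\theta_h)\big)\ge 0$. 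Everything else is extracted from this inequality and the definition of a floor-randomized mechanism.

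First I would pin down the sign of the right-hand side: since $r(\theta_h)=1$ and $M$ is floor randomized, $q(\theta_h)\ge\hat q$, and $\textsc{TS}(\theta_h,q(\theta_h))\ge\textsc{TS}(\theta_h,\hat q)>0$ by Observation~\ref{ob:hatq_bounds}. Hence $\tilde r(\theta_h)>0$, so $\tilde q(\theta_h)>0$ and floor randomization of $\widetilde M$ gives $\tilde q(\theta_h)\ge\hat q$. Next, suppose for contradiction $\tilde r(\theta_h)<1$; then $\theta_h\in\Theta_{01}^{\widetilde M}$, so $\tilde q(\theta_h)=\hat q$, and the displayed inequality becomes $\tilde r(\theta_h)\textsc{TS}(\theta_h,\hat q)\ge\textsc{TS}(\theta_h,q(\theta_h))\ge\textsc{TS}(\theta_h,\hat q)>0$, forcing $\tilde r(\theta_h)\ge 1$, a contradiction. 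Thus $\tilde r(\theta_h)=1$, and the displayed inequality now reads $\textsc{TS}(\theta_h,\tilde q(\theta_h))\ge\textsc{TS}(\theta_h,q(\theta_h))$; since $q(\theta_h)$ and $\tilde q(\theta_h)$ both lie in $[\hat q,q_e(\theta_h)]$ (for $q$ by strict DD and $\theta_h>\underline\theta$, for $\tilde q$ by DD of $\widetilde M$), and $\textsc{TS}(\theta_h,\cdot)$ is strictly increasing there, we conclude $\tilde q(\theta_h)\ge q(\theta_h)$.

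It remains to prove the reverse inequality $\tilde q(\theta_h)\le q(\theta_h)$. If $\theta_h\in D_{qr}$ then $q(\theta_h)=q(\theta_h)r(\theta_h)>\tilde q(\theta_h)\tilde r(\theta_h)=\tilde q(\theta_h)$, contradicting $\tilde q(\theta_h)\ge q(\theta_h)$; so $\theta_h\notin D_{qr}$. Since $D_{qr}$ has positive measure by Lemma~\ref{lem:Dqr} and $\theta_h=\sup D_{qr}$, pick $\theta_n\in D_{qr}$ with $\theta_n\uparrow\theta_h$ and $\theta_n<\theta_h$; then $r(\theta_n)=1$ (as $r$ is decreasing with $r(\theta_h)=1$), so $q(\theta_n)=q(\theta_n)r(\theta_n)>\tilde q(\theta_n)\tilde r(\theta_n)$. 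Letting $n\to\infty$ and using left-continuity of $q r$ (which equals $q$ on $[\underline\theta,\theta_h]$) and of $\tilde q\tilde r$ yields $q(\theta_h)\ge\tilde q(\theta_h)\tilde r(\theta_h)=\tilde q(\theta_h)$, so $q(\theta_h)=\tilde q(\theta_h)$, completing the proof.

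The step I expect to be the main obstacle is the bookkeeping around the supremum $\theta_h$: one has to rule out $\theta_h\in D_{qr}$ and then pass to the left limit along a sequence in $D_{qr}$, relying on left-continuity of the product allocations. By contrast, the algebraic heart — producing the inequality $\tilde r(\theta_h)\textsc{TS}(\theta_h,\tilde q(\theta_h))\ge\textsc{TS}(\theta_h,q(\theta_h))$ and reading off $\tilde r(\theta_h)=1$ and $\tilde q(\theta_h)\ge q(\theta_h)$ from it — is routine once the positivity and strict monotonicity of $\textsc{TS}(\theta_h,\cdot)$ on $[\hat q,q_e(\theta_h)]$ from Observation~\ref{ob:hatq_bounds} are in hand.
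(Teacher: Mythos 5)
Your proof is correct and follows essentially the same route as the paper's: both rest on the observation that $u(\theta_h)\le\tilde u(\theta_h)$ forces the expected total surplus inequality $\tilde r(\theta_h)\,\textsc{TS}(\theta_h,\tilde q(\theta_h))\ge \textsc{TS}(\theta_h,q(\theta_h))$, read off $\tilde r(\theta_h)=1$ and $\tilde q(\theta_h)\ge q(\theta_h)$ from it, and obtain the reverse inequality $q(\theta_h)\ge\tilde q(\theta_h)\tilde r(\theta_h)$ from the definition of $\theta_h$ together with left-continuity of $qr$ and $\tilde q\tilde r$. You merely spell out more explicitly the positivity of $\textsc{TS}$ via Observation~\ref{ob:hatq_bounds} and the sequence argument behind the left-limit step, both of which the paper leaves implicit.
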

\begin{proof}
For all $\theta > \theta_h$, we have $q(\theta)r(\theta) \le \tilde{q}(\theta)\tilde{r}(\theta)$. Therefore, the rent paid by the monopolist of type  $\theta_h$ is lower in $M$ than in $\widetilde{M}$, that is $u(\theta_h) \le \tilde{u}(\theta_h)$. It must be that the expected total surplus is higher in $\widetilde{M}$ than $M$ at $\theta_h$:
\begin{align}\label{eq:ts1}
    \tilde{r}(\theta_h) \Big(\overline{V}(\tilde{q}(\theta_h)) - \theta_h\tilde{q}(\theta_h) \Big) \ge 
    r(\theta_h) \Big(\overline{V}(q(\theta_h)) - \theta_hq(\theta_h) \Big).
\end{align}
Otherwise, $\widetilde{M}$ cannot dominate $M$ at $\theta_h$, a contradiction. We now use this inequality to show the desired result. 

Suppose $r(\theta_h)=1$. In this case, definition of $\theta_h$ and left-continuity of $qr$ and $\tilde{q}\tilde{r}$ implies $q(\theta_h) \ge \tilde{r}(\theta_h)\tilde{q}(\theta_h)$. If $\tilde{r}(\theta_h) < 1$, then $\tilde{q}(\theta_h) \le \hat{q} \le q(\theta_h)$, which contradicts with inequality (\ref{eq:ts1}). Thus, $\tilde{r}(\theta_h) = 1$, and therefore,  $q(\theta_h) \ge \tilde{q}(\theta_h)$. However, inequality (\ref{eq:ts1}) along with the fact that $r(\theta_h) = \tilde{r}(\theta_h) = 1$ implies $\tilde{q}(\theta_h) \ge q(\theta_h)$. This proves $q(\theta_h)=\tilde{q}(\theta_h)$.
\end{proof}


\noindent {\sc \textbf{Proof of Theorem \ref{theo:efficient}.}} Let $Q=[q_e(\overline{\theta}),q_e(\underline{\theta})]$. We begin by arguing that there exists an interval $[q_0,q_1] \subset Q$ on which $P$ is either concave or convex. If $P''(q) = 0$ for all $q \in Q$, then any closed interval contained in $Q$ is the desired interval. Otherwise,  $P''(q) \neq 0$ for some $q$. By continuity of the second derivative, we can find a closed interval $[q_0,q_1]$ containing $q$ where $P$ is either concave or convex. Let $\theta_{\ell} = P^{-1}(q_1)$ and $\theta_h = P^{-1}(q_0)$.

Because $P$ is either concave or convex on $[q_0,q_1]$, it is differentiable almost everywhere. Let $-g(y)$, where $g(y) > 0$ for all $y \in [q_0,q_1]$, be a subgradient of $P$ at $y$ (it is equal to the derivative of $P$ almost everywhere). Then, we can write 
\begin{align*}
 P(q) &= P(q')-\int \limits_{q'}^{q}g(y)dy   & \forall q,q'.
\end{align*}
In the interval $[q_0,q_1]$, if $P$ is concave then $g$ is increasing. On the other hand, if $P$ is convex on $[q_0,q_1]$, then $g$ is decreasing. Consider the following quantity rule
 \[
    \tilde{q}(\theta) = \begin{cases} q_e(\theta) & \textrm{if $\theta \notin [\theta_{\ell},\theta_h]$ }\\
    q_e(\theta)-\epsilon(\theta_h-\theta) & \textrm{if $\theta \in [\theta_{\ell},\theta_h]$ }
    \end{cases}
    \]  
    where $\epsilon < \frac{1-\alpha}{2g(P^{-1}(\theta^\star))}$ with $\theta^\star=\theta_{\ell}$ if $P$ is concave and $\theta^\star=\theta_h$ if $P$ is convex. First, we assume that $P$ is concave on $[q_0,q_1]$. Step 1 shows that $\tilde{q}$ is decreasing, and therefore, $\widetilde{M}\equiv (\tilde{q},\tilde{u})$, where $\tilde{u}(\theta)=\int \limits_{\theta}^{\overline{\theta}}\tilde{q}(y)dy$ is an IC and IR mechanism. Step 2 establishes that $\widetilde{M}$ dominates $M_e$. Finally, Step 3 considers the case of $P$ being convex and highlights changes to Steps 1 and 2. 

    \noindent {\sc \textbf{Step 1.}} Observe $\tilde{q}$ is decreasing on $\Theta \setminus [\theta_{\ell},\theta_h]$ as $q_e$ is decreasing (IC of $M_e$). Thus, it suffices to show that $\tilde{q}$ is decreasing in the interval $[\theta_{\ell},\theta_h]$. To do so, note that $P$ is decreasing and concave, and hence, $P^{-1}$ is decreasing and concave on $[\theta_{\ell},\theta_h]$. As a result, $q_e$ is also decreasing and concave on $[\theta_{\ell},\theta_h]$. Thus, for every $\theta,\theta' \in [\theta_{\ell},\theta_h]$, we have $q_e(\theta')=q_e(\theta)-\int \limits_{\theta}^{\theta'}w(y)dy$, where $w(y)>0$ and $w$ is increasing. Moreover, $w(y)=1/g(P^{-1}(y))$ whenever $P(\cdot)$ is differentiable in $[\theta_{\ell},\theta_h]$. For any $\theta', \theta \in [\theta_{\ell},\theta_h]$ such that $\theta'>\theta$, we have
    \begin{align}
        \tilde{q}(\theta)-\tilde{q}(\theta') =q_e(\theta)-q_e(\theta')-\epsilon(\theta'-\theta) &=\int \limits_{\theta}^{\theta'}w(y)dy-\epsilon(\theta'-\theta)& \nonumber\\
        &\ge (w(\theta_{\ell})-\epsilon)(\theta'-\theta) > 0& \label{eq:mono}
    \end{align}
    where the first inequality follows since $w$ is increasing and the last inequality holds because $\epsilon < \frac{1-\alpha}{2g(P^{-1}(\theta_{\ell}))} < w(\theta_{\ell})$. \\

    \noindent {\sc \textbf{Step 2.}} Now we show that mechanism $\widetilde{M}$ dominates mechanism $M_e$. Since $q_e(\theta) \ge \tilde{q}(\theta)$ for all $\theta$, we get $\tilde{u}(\theta) \le u_e(\theta)$. For all $\theta \notin [\theta_{\ell},\theta_h]$, since $q_e(\theta)=\tilde{q}(\theta)$,  we immediately conclude that $\textsc{RS}_{\alpha}(\theta,\widetilde{M}) \ge \textsc{RS}_{\alpha}(\theta,M_e)$. For $\theta \in [\theta_{\ell},\theta_h]$, the reduction in the total surplus is
    \begin{align}
        &\textsc{TS}(\theta,M_e) - \textsc{TS}(\theta,\widetilde{M}) &\nonumber\\
        &= \int \limits_{\tilde{q}(\theta)}^{q_e(\theta)} \Big(P(z) - \theta \Big) dz &\nonumber\\
        &=  \int\limits_{\theta}^{P(\tilde{q}(\theta))} (P^{-1}(y) - \tilde{q}(\theta))dy &\textrm{(integrating along the price axis)} \nonumber\\
        &\le \quad  \int\limits_{\theta}^{P(\tilde{q}(\theta))} (q_e(y) - \tilde{q}(y))dy &\textrm{($\tilde{q}$ is decreasing and $q_e(y)=P^{-1}(y)$)} \nonumber\\
        &=  \epsilon \int\limits_{\theta}^{P(\tilde{q}(\theta))} (\theta_h -y)dy & \textrm{(by definition of $\tilde{q}$)}\nonumber\\
        &= \epsilon (P(\tilde{q}(\theta))-\theta)\Big((\theta_h-\theta)-\frac{(P(\tilde{q}(\theta))-\theta)}{2}\Big) & \nonumber\\
        &\le \epsilon (P(\tilde{q}(\theta))-\theta)(\theta_h-\theta) & \textrm{(since $P(\tilde{q}(\theta)) \ge P(q_e(\theta))=\theta$).} \label{eq:su1}
    \end{align}
Using increasingness of $g$ and the fact that $\theta=P(q_e(\theta))$, we get 
\begin{align*}
    P(\tilde{q}(\theta)) - \theta &= P(\tilde{q}(\theta)) - P(q_e(\theta)) = \int \limits_{\tilde{q}(\theta)}^{q_e(\theta)} g(y)dy \le g(q_e(\theta))(q_e(\theta) - \tilde{q}(\theta)) = \epsilon g(q_e(\theta)) (\theta_h - \theta).
    \end{align*}
Substituting this in equation (\ref{eq:su1}), we get 
\begin{align}
        \textsc{TS}(\theta,M_e) - \textsc{TS}(\theta,\widetilde{M}) &\le  \epsilon^2 g(q_e(\theta)){(\theta_h-\theta)}^2 \nonumber\\
        &\le  \epsilon^2 g(q_e(\theta_{\ell})){(\theta_h-\theta)}^2 \qquad~\textrm{(since $g$ is increasing and $q_e$ is decreasing)} \nonumber\\
        &=  \epsilon^2 g(P^{-1}(\theta_{\ell})){(\theta_h-\theta)}^2\qquad~\textrm{(since $q_e(\theta)=P^{-1}(\theta)$).} \label{eq:surplus}
    \end{align}
Next, reduction in rent paid at $\theta \in [\theta_{\ell},\theta_h]$ is
    \begin{align*}
       u_e(\theta) - \tilde{u}(\theta) &= \int \limits_{\theta}^{\overline{\theta}} (q_e(y) - \tilde{q}(y))dy = \epsilon \int \limits_{\theta}^{\theta_h} (\theta_h-y)dy = \frac{1}{2}\epsilon (\theta_h - \theta)^2.
    \end{align*}
    Hence, for any $\theta \in[\theta_{\ell},\theta_h]$, we get 
    \begin{align*}
        \textsc{RS}_{\alpha}(\theta,\widetilde{M}) - \textsc{RS}_{\alpha}(\theta,M_e) &= \textsc{TS}(\theta,\widetilde{M}) - \textsc{TS}(\theta,M_e) - (1-\alpha)(\tilde{u}(\theta) - u_e(\theta)) \\
        &\ge (1-\alpha) \frac{1}{2}\epsilon (\theta_h - \theta)^2 -  \epsilon^2 g(P^{-1}(\theta_{\ell})){(\theta_h-\theta)}^2 \\
        &= \frac{1}{2} \epsilon (\theta_h - \theta)^2 (1-\alpha - 2\epsilon g(P^{-1}(\theta_{\ell}))).
    \end{align*}
 Since $\epsilon < \frac{1-\alpha}{2g(P^{-1}(\theta_{\ell}))}$, we see that the last expression is positive except for $\theta=\theta_h$, where it is zero. This establishes that $\widetilde{M}$ dominates $M_e$. \\

    \noindent {\sc \textbf{Step 3.}} To complete the proof we consider the case that $P$ is convex on $[q_0,q_1]$. In this case, the proof is similar to the case when $P$ is concave. Below, we highlight main changes. First, we choose $\epsilon < \frac{1-\alpha}{2g(P^{-1}(\theta_h))}$.
    
    Since $g$ and $w$ are decreasing and $\epsilon < \frac{1-\alpha}{2g(P^{-1}(\theta_h))} < \frac{1}{g(P^{-1}(\theta_h))} = w(\theta_h)$, inequality (\ref{eq:mono}) in Step 1 reduces to
    \[
    \tilde{q}(\theta)-\tilde{q}(\theta') \ge (w(\theta_h)-\epsilon)(\theta'-\theta) > 0.
    \]
    
   As for Step 2, identical simplifications lead to the counterpart of inequality (\ref{eq:surplus}) by using $g$ is decreasing
   \[
    \textsc{TS}(\theta,M_e) - \textsc{TS}(\theta,\widetilde{M}) \le \epsilon^2 g(P^{-1}(\theta_h)){(\theta_h-\theta)}^2.
   \]
   The proof is completed by comparing $\textsc{RS}_{\alpha}(\theta,\widetilde{M})$ and $\textsc{RS}_{\alpha}(\theta,M_e)$. {\hfill $\blacksquare{}$ \vspace{12pt}}

\noindent {\sc \textbf{Proof of Theorem \ref{theo:comp_stat}.}}
If $M$ is dominated, by part 2 of Theorem \ref{theo:undnec}, there exists another floor-randomized mechanism  $\widetilde{M} \equiv (\tilde{r},\tilde{q},\tilde{u})$ satisfying DD and left-continuity such that 
\begin{align*}
    \textsc{RS}_{\alpha}(\theta,\widetilde{M}) \ge \textsc{RS}_{\alpha}(\theta,M)~\qquad~\forall~\theta \in \Theta
\end{align*}
with strict inequality for some $\theta$. Define 
\begin{align*} 
D_{qr} &:= \{\theta: \tilde{q}(\theta)\tilde{r}(\theta) < q(\theta)r(\theta)\}, & D_{\tilde{q}\tilde{r}} &:= \{\theta: \tilde{q}(\theta)\tilde{r}(\theta) > q(\theta)r(\theta)\}.
\end{align*}
By Lemma \ref{lem:Dqr}, set $D_{qr}$ has a positive measure. If $D_{\tilde{q}\tilde{r}}$ has zero measure, then left-continuity of $q(\theta)r(\theta)$ and  $\tilde{q}(\theta)\tilde{r}(\theta)$ implies that $\tilde{q}(\theta)\tilde{r}(\theta) \le q(\theta)r(\theta)$ for all $\theta$. Thus, setting $M' = \widetilde{M}$ completes the proof. For the rest of the proof, we assume $D_{\tilde{q}\tilde{r}}$ has a positive measure. Consider the following mechanism $M' \equiv (r', q',u')$:
\begin{align*}
r'(\theta) &:= \min(r(\theta),\tilde{r}(\theta))~\qquad~\forall~\theta \\
q'(\theta) &:= \min(q(\theta),\tilde{q}(\theta))~\qquad~\forall~\theta \\
u'(\theta) &:= \int \limits_{\theta}^{\overline{\theta}} q'(y)r'(y)dy~\qquad~\forall~\theta.
\end{align*}
Clearly, $M'$ is floor randomized and satisfies DD because both $M$ and $\widetilde{M}$ have these properties. To see why $q'(\theta)r'(\theta)$ is left continuous, observe that $q(\theta)r(\theta)$ and $\tilde{q}(\theta)\tilde{r}(\theta)$ are left continuous, and $q'(\theta)r'(\theta) = \min(q(\theta)r(\theta),\tilde{q}(\theta)\tilde{r}(\theta))$ for all $\theta$. By definition, $u'(\theta) \le \min(u(\theta),\tilde{u}(\theta))$ for all $\theta$. Moreover, $q'(\theta)r'(\theta) \le q(\theta)r(\theta)$ for all $\theta$ with a strict inequality on the set $D_{qr}$. We now show that $\textsc{RS}_{\alpha}(\theta,M') \ge \textsc{RS}_{\alpha}(\theta,M)$ for all $\theta$ with a strict inequality holding on $D_{\tilde{q}\tilde{r}}$. To do so, consider the two possible cases. \\

\noindent {\sc \textbf{Case 1.}} Suppose $\theta \in D_{\tilde{q}\tilde{r}}$. Thus, $\tilde{q}(\theta)\tilde{r}(\theta) > q(\theta)r(\theta)$. Combining this inequality with the fact that both $M$ and $\widetilde{M}$ are floor-randomized mechanisms, we have $q'(\theta)=q(\theta) \le \tilde{q}(\theta)$ and $r'(\theta) = r(\theta) \le \tilde{r}(\theta)$ with one of the inequalities being strict.
\begin{align*} 
\textsc{RS}_{\alpha}(\theta,M') &=\quad r'(\theta)\Big(\overline{V}(q'(\theta)) - \theta q'(\theta) \Big) - (1-\alpha) u'(\theta) &\\
&>_{(b)} r(\theta)\Big(\overline{V}(q(\theta)) - \theta q(\theta) \Big) - (1-\alpha) u(\theta)& \\
&=\quad \textsc{RS}_{\alpha}(\theta,M).
\end{align*}
The strict inequality (b) follows from Lemma \ref{lem:D_1}, which states $\tilde{u}(\theta) < u(\theta)$ for all $\theta \in D_{\tilde{q}\tilde{r}}$, and since $u'(\theta) \le \tilde{u}(\theta)$, we get $u'(\theta) < u(\theta)$ for all $\theta \in D_{\tilde{q}\tilde{r}}$.\footnote{Weaker version of inequality (b) follows immediately because $u'(\theta) \le u(\theta)$.} \\

\noindent {\sc \textbf{Case 2.}} Suppose $\theta \notin D_{\tilde{q}\tilde{r}}$. Then, just as in {\sc Case 1}, we can argue that $q'(\theta)=\tilde{q}(\theta)$ and $r'(\theta) = \tilde{r}(\theta)$.
\begin{align*} 
\textsc{RS}_{\alpha}(\theta,M') &= r'(\theta)\Big(\overline{V}(q'(\theta)) - \theta q'(\theta) \Big)- (1-\alpha) u'(\theta)& \\
&= \tilde{r}(\theta)\Big(\overline{V}(\tilde{q}(\theta)) - \theta \tilde{q}(\theta) \Big) - (1-\alpha) u'(\theta)&\\
&\ge \tilde{r}(\theta)\Big(\overline{V}(\tilde{q}(\theta)) - \theta \tilde{q}(\theta) \Big) - (1-\alpha) \tilde{u}(\theta)& \\
&= \textsc{RS}_{\alpha}(\theta,\widetilde{M})& \\
&\ge \textsc{RS}_{\alpha}(\theta,M) 
\end{align*}
where the last inequality follows because $\widetilde{M}$ dominates $M$. 
{\hfill $\blacksquare{}$ \vspace{12pt}}

\begin{lemma}\label{lem:D_1}
If $D_{\tilde{q}\tilde{r}}$ is non-empty, then $u(\theta) > \tilde{u}(\theta)$ for every $\theta \in D_{\tilde{q}\tilde{r}}$. 
\end{lemma}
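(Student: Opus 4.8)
The plan is to argue by contradiction. Fix $\theta^\star\in D_{\tilde q\tilde r}$ and suppose $u(\theta^\star)\le\tilde u(\theta^\star)$. First I would set up notation: write $\pi:=qr$ and $\tilde\pi:=\tilde q\tilde r$, both of which are decreasing and left continuous because $M$ and $\widetilde M$ are floor randomized and left continuous, and note $u(\theta)=\int_\theta^{\overline\theta}\pi(y)\,dy$, $\tilde u(\theta)=\int_\theta^{\overline\theta}\tilde\pi(y)\,dy$. A case check over the three regimes $\Theta_1,\Theta_{01},\Theta_0$ (using $\overline{V}(\hat q)=\hat qP(\hat q)$) shows that for any floor-randomized mechanism $r(\theta)\textsc{TS}(\theta,q(\theta))=co(\widetilde{V})(\pi(\theta))-\theta\pi(\theta)$, where $co(\widetilde{V})$ is the concave closure of $\widetilde{V}$ from the proof of Theorem \ref{theo:fr}; hence $\textsc{RS}_{\alpha}(\theta,M)=co(\widetilde{V})(\pi(\theta))-\theta\pi(\theta)-(1-\alpha)u(\theta)$, and likewise for $\widetilde M$. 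The only property of $co(\widetilde{V})$ I would use is that $v\mapsto co(\widetilde{V})(v)-\theta v$ is non-decreasing on $[0,q_e(\theta)]$, since its right derivative there equals $P(\max(v,\hat q))-\theta\ge0$ (because $\hat q<q_e(\theta)$ by Observation \ref{ob:hatq_bounds}); moreover DD gives $\pi(\theta),\tilde\pi(\theta)\le q_e(\theta)$ at every $\theta$.

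Next I would locate a good type at which to apply domination. By the strict version of inequality (\ref{eq:rent}) established in the proof of Lemma \ref{lem:Dqr}, $\int_{\underline\theta}^{\overline\theta}(\pi-\tilde\pi)=u(\underline\theta)-\tilde u(\underline\theta)>0$; subtracting the contradiction hypothesis $\int_{\theta^\star}^{\overline\theta}(\pi-\tilde\pi)=u(\theta^\star)-\tilde u(\theta^\star)\le0$ gives $\int_{\underline\theta}^{\theta^\star}(\pi-\tilde\pi)>0$, so $E:=\{z<\theta^\star:\pi(z)>\tilde\pi(z)\}$ has positive measure. Let $\theta_1:=\sup E$. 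Since $\theta^\star\in D_{\tilde q\tilde r}$ and $\tilde\pi-\pi$ is left continuous, $\tilde\pi-\pi>0$ on some left neighbourhood $(\theta^\star-\delta,\theta^\star]$, which is disjoint from $E$, so $\theta_1\le\theta^\star-\delta<\theta^\star$; and left continuity of $\pi-\tilde\pi$ at $\theta_1$ (taking a limit along a sequence in $E$ increasing to $\theta_1$, or trivially if $\theta_1\in E$) yields $\pi(\theta_1)\ge\tilde\pi(\theta_1)$. Note $\theta_1\in[\underline\theta,\theta^\star)$.

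Then I would invoke the domination inequality $\textsc{RS}_{\alpha}(\theta_1,\widetilde M)\ge\textsc{RS}_{\alpha}(\theta_1,M)$. Because $\tilde\pi(\theta_1)\le\pi(\theta_1)\le q_e(\theta_1)$ and $v\mapsto co(\widetilde{V})(v)-\theta_1v$ is non-decreasing on $[0,q_e(\theta_1)]$, the total-surplus terms obey $co(\widetilde{V})(\pi(\theta_1))-\theta_1\pi(\theta_1)\ge co(\widetilde{V})(\tilde\pi(\theta_1))-\theta_1\tilde\pi(\theta_1)$, so the domination inequality collapses to $(1-\alpha)\big(u(\theta_1)-\tilde u(\theta_1)\big)\ge0$, i.e.\ $u(\theta_1)\ge\tilde u(\theta_1)$. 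Combining this with the hypothesis $u(\theta^\star)\le\tilde u(\theta^\star)$ gives
\[
\int_{\theta_1}^{\theta^\star}(\pi-\tilde\pi)=\big(u(\theta_1)-\tilde u(\theta_1)\big)-\big(u(\theta^\star)-\tilde u(\theta^\star)\big)\ge0 .
\]
On the other hand, $\theta_1=\sup E$ forces $\pi\le\tilde\pi$ on $(\theta_1,\theta^\star]$, and strictly so on $(\theta^\star-\delta,\theta^\star]$, whence $\int_{\theta_1}^{\theta^\star}(\pi-\tilde\pi)<0$ — a contradiction. Therefore $u(\theta)>\tilde u(\theta)$ for every $\theta\in D_{\tilde q\tilde r}$.

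The delicate point is the choice of $\theta_1$: it must simultaneously satisfy $\pi(\theta_1)\ge\tilde\pi(\theta_1)$ — so that the domination inequality at $\theta_1$ pushes in the direction $u(\theta_1)\ge\tilde u(\theta_1)$ — and $\tilde\pi\ge\pi$ throughout $(\theta_1,\theta^\star]$ with a strict gap near $\theta^\star$ — so that the last integral is strictly negative. Left continuity of $\tilde\pi-\pi$ is exactly what reconciles these two requirements; everything else (the envelope formula, the one-dimensional monotonicity of $co(\widetilde{V})(\cdot)-\theta(\cdot)$, and the measure-theoretic bookkeeping) is routine.
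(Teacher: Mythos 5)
Your proof is correct and follows essentially the same strategy as the paper's: assume $u(\theta^\star)\le\tilde u(\theta^\star)$, locate the supremum $\theta_1$ of the set of types below $\theta^\star$ at which $qr$ exceeds $\tilde q\tilde r$, use left-continuity to transfer $q(\theta_1)r(\theta_1)\ge\tilde q(\theta_1)\tilde r(\theta_1)$ to the threshold, and combine the domination inequality at $\theta_1$ with the envelope formula and DD to reach a contradiction. The differences are only in execution — you package the total-surplus comparison through the monotonicity of $v\mapsto co(\widetilde{V})(v)-\theta v$ on $[0,q_e(\theta)]$ instead of the paper's case analysis, and you extract the contradiction from the sign of $\int_{\theta_1}^{\theta^\star}\big(q(y)r(y)-\tilde q(y)\tilde r(y)\big)dy$ rather than from a pointwise violation of domination at the threshold type — and both steps are sound.
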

\begin{proof}
Assume for contradiction that there exists $\theta \in D_{\tilde{q}\tilde{r}}$ such that $u(\theta) \le \tilde{u}(\theta)$. 
Since $\theta \in D_{\tilde{q}\tilde{r}}$, we have $\theta > \underline{\theta}$.
Let $D:=\{y \in (\underline{\theta},\theta): y \notin D_{\tilde{q}\tilde{r}}\}$. We first show that $D$ is non-empty. 
If $D$ is empty, for all $y \in (\underline{\theta},\theta)$, we have $y \in D_{\tilde{q}\tilde{r}}$, i.e., $q(y)r(y) < \tilde{q}(y)\tilde{r}(y)$. But then 
\begin{align*}
u(\underline{\theta}) = \int \limits_{\underline{\theta}}^{\theta}q(y)r(y)dy + u(\theta)
< \int \limits_{\underline{\theta}}^{\theta}\tilde{q}(y)\tilde{r}(y)dy + u(\theta)
\le \int \limits_{\underline{\theta}}^{\theta}\tilde{q}(y)\tilde{r}(y)dy + \tilde{u}(\theta)
= \tilde{u}(\underline{\theta})
\end{align*}
where the second inequality follows from our assumption that $u(\theta) \le \tilde{u}(\theta)$. Observe $u(\underline{\theta}) < \tilde{u}(\underline{\theta})$ contradicts inequality $\textsc{RS}_{\alpha}(\underline{\theta},\widetilde{M}) \ge \textsc{RS}_{\alpha}(\underline{\theta},M)$. Therefore, $D$ is non-empty. Define $\theta_h$ as follows: 
\begin{align*} 
\theta_h &:= \sup \{y \in D\}.
\end{align*}
Since each $y \in D$ satisfies $q(y)r(y) \ge \tilde{q}(y)\tilde{r}(y)$, by left-continuity of $qr$ and $\tilde{q}\tilde{r}$, we get $q(\theta_h)r(\theta_h) \ge \tilde{q}(\theta_h)\tilde{r}(\theta_h)$, i.e., $\theta_h \notin D_{\tilde{q}\tilde{r}}$. This implies that $\theta_h < \theta$ (since $\theta \in D_{\tilde{q}\tilde{r}}$). Hence, we have $y \in D_{\tilde{q}\tilde{r}}$ for all $y \in (\theta_h,\theta]$, and this implies
\begin{equation}\label{eq:rent_order_1}
\tilde{u}(\theta_h)  = \int \limits_{\theta_h}^{\theta}\tilde{q}(y)\tilde{r}(y)dy + \tilde{u}(\theta) > \int\limits_{\theta_h}^{\theta}q(y)r(y)dy + \tilde{u}(\theta) \ge  \int\limits_{\theta_h}^{\theta}q(y)r(y)dy + u(\theta) = u(\theta_h).
\end{equation}
We now complete the proof by comparing the regulator's surplus under $M$ and $\widetilde{M}$ for $\theta_h$.
\begin{align*}
 \textsc{RS}_{\alpha}(\theta_h,\widetilde{M}) &=\quad 
 \tilde{r}(\theta_h)\Big(\overline{V}(\tilde{q}(\theta_h)) - \theta_h \tilde{q}(\theta_h) \Big)- (1-\alpha) \tilde{u}(\theta_h)& \\
 &\le_{(a)} r(\theta_h)\Big(\overline{V}(q(\theta_h)) - \theta_h q(\theta_h) \Big) - (1-\alpha) \tilde{u}(\theta_h)& \\
 &<_{(b)} r(\theta_h)\Big(\overline{V}(q(\theta_h)) - \theta_h q(\theta_h) \Big) - (1-\alpha) u(\theta_h)& \\
 &=\quad \textsc{RS}_{\alpha}(\theta_h, M). &
\end{align*}
To understand inequality $(a)$, recall $\tilde{q}(\theta_h)\tilde{r}(\theta_h)\le q(\theta_h)r(\theta_h)$. If this inequality holds with equality, we have $\tilde{q}(\theta_h)=q(\theta_h)$ and $\tilde{r}(\theta_h)=r(\theta_h)$ (as $M$ and $\widetilde{M}$ are both floor-randomized mechanisms). Otherwise, we have $\tilde{q}(\theta_h) \le q(\theta_h)$ and $\tilde{r}(\theta_h) \le r(\theta_h)$ with one of the inequalities being strict. Inequality $(a)$ follows from combining these inequalities with $q(\theta_h) \le q_e(\theta_h)$, which holds because of $M$ satisfies DD. The strict inequality $(b)$ follows from inequality (\ref{eq:rent_order_1}). The inequality $\textsc{RS}_{\alpha}(\theta_h,\widetilde{M}) < \textsc{RS}_{\alpha}(\theta_h, M)$ contradicts to the fact that $\widetilde{M}$ dominates $M$. Thus, $u(\theta) > \tilde{u}(\theta)$ for every $\theta \in D_{\tilde{q}\tilde{r}}$.
\end{proof}

\section{Proofs of Section \ref{sec:implications}}
\label{app:implications}
The proof of Proposition  \ref{prop:maxmin} is given after the proof of Lemma \ref{lem:mono}.\\

\noindent {\sc\textbf{Proof of Lemma \ref{lem:mono}.}} Let $M \equiv (r,q,u)$ be a floor-randomized mechanism satisfying DD. It suffices to prove that for all $\theta, \theta'$ such that $\theta' > \theta$,
\begin{align*}
\textsc{RS}_{\alpha}(\theta,M) \ge  \textsc{RS}_{\alpha}(\theta',M).
\end{align*}
Let $\theta^*_q$ be such that $r(\theta) =0$ for all $\theta > \theta^*_q$. We first establish $\textsc{RS}_{\alpha}(\theta^*_q,M) \ge \textsc{RS}_{\alpha}(\theta,M) \ge  \textsc{RS}_{\alpha}(\theta',M) \ge 0$, where $\theta^*_q < \theta < \theta'$. Clearly, 
$\textsc{RS}_{\alpha}(\theta,M)= \textsc{RS}_{\alpha}(\theta',M)=0$, and thus, it suffices to show that the first inequality holds. If $r(\theta^*_q)=0$, we have $\textsc{RS}_{\alpha}(\theta^*_q,M)=0$. If $r(\theta^*_q) > 0$, by Observation \ref{ob:hatq_bounds}, $\textsc{TS}(\theta^*_q,q(\theta^*_q)) > 0$. Since $u(\theta^*_q)=0$, we get $\textsc{RS}_{\alpha}(\theta^*_q,M) > 0$.

Henceforth, we assume that $\theta < \theta' \le \theta^*_q$ and show that $\textsc{RS}_{\alpha}(\theta,M) \ge  \textsc{RS}_{\alpha}(\theta',M)$ with a strict inequality holding if $\alpha \in (0,1)$. Both $q$ and $r$ are decreasing functions because $M$ is floor randomized. Further, by Observation \ref{ob:hatq_bounds}, total surplus at $\theta$ and $\theta'$ are non-negative.
Then,
    \begin{align} 
    &\textsc{RS}_{\alpha}(\theta,M) - \textsc{RS}_{\alpha}(\theta',M)& \nonumber\\
    &= \Big( \overline{V}(q(\theta)) - \theta q(\theta) \Big)r(\theta) - \Big( \overline{V}(q(\theta')) - \theta'q(\theta') \Big) r(\theta') -(1-\alpha)\Big(u(\theta)-u(\theta')\Big) &\nonumber\\
    &\ge \Big( \overline{V}(q(\theta)) - \theta q(\theta) \Big)r(\theta) - \Big( \overline{V}(q(\theta')) - \theta'q(\theta') \Big) r(\theta) -(1-\alpha)\Big(u(\theta)-u(\theta')\Big) & \nonumber \\
    & \textrm{(since $r(\theta) \ge r(\theta')$ and $\textsc{TS}(\theta',q(\theta')) \ge 0$)} \nonumber\\
    &= r(\theta) \int \limits_{q(\theta')}^{q(\theta)}(P(z)-\theta)dz + r(\theta) (\theta'-\theta)q(\theta') -(1-\alpha)\Big(\int \limits_{\theta}^{\theta'}q(\tilde{\theta})r(\tilde{\theta})d\tilde{\theta}\Big)~\qquad~\textrm{(by IC)} \nonumber\\
    &\ge r(\theta) \int \limits_{q(\theta')}^{q(\theta)}(P(z)-\theta)dz + r(\theta) (\theta'-\theta)q(\theta') -(1-\alpha)r(\theta)\Big(\int \limits_{\theta}^{\theta'}q(\tilde{\theta})d\tilde{\theta}\Big)~\qquad~\textrm{(since $r$ is decreasing)} \nonumber\\
    &= r(\theta)\Bigg[\int \limits_{q(\theta')}^{q(\theta)}(P(z)-\theta)dz + (\theta'-\theta)q(\theta') -\int \limits_{\theta}^{\theta'}q(\tilde{\theta})d\tilde{\theta}+ \alpha \int \limits_{\theta}^{\theta'}q(\tilde{\theta})d\tilde{\theta}\Bigg] &\nonumber\\
    &= r(\theta)\Bigg[\int \limits_{q(\theta')}^{q(\theta)}(P(z)-\theta)dz -\int \limits_{\theta}^{\theta'}(q(\tilde{\theta})-q(\theta'))d\tilde{\theta} + \alpha \int \limits_{\theta}^{\theta'}q(\tilde{\theta})d\tilde{\theta}\Bigg].  & \label{eq:net}
    \end{align}
Observe that the third integral is always non-negative and it is positive if $\alpha \in (0,1)$. Thus, it suffices to show that the first integral is weakly larger than the second. Observe that the first integral can be equivalently written in terms of $y \equiv P(z)$ instead of $z$. Formally,
    \begin{align}
        \int \limits_{q(\theta')}^{q(\theta)}(P(z)-\theta)dz &= \int \limits_{\theta}^{P(q(\theta))}(q(\theta)-q(\theta'))dy + \int \limits_{P(q(\theta))}^{P(q(\theta'))}(P^{-1}(y)-q(\theta'))dy. \label{eq:change_var}
    \end{align}
    Figure \ref{fig_lemma_2} pictorially depicts this change of variable. Observe $P(q(\theta)) \ge \theta$ by DD and $P(q(\theta')) \ge P(q(\theta))$ because $P$ and $q$ are decreasing. Notice that each integral on the right-hand side of (\ref{eq:change_var}) is non-negative because $P$ and $q$ are decreasing.
     
    \begin{figure}[!hbt]
        \centering
        \includegraphics[width=4in]{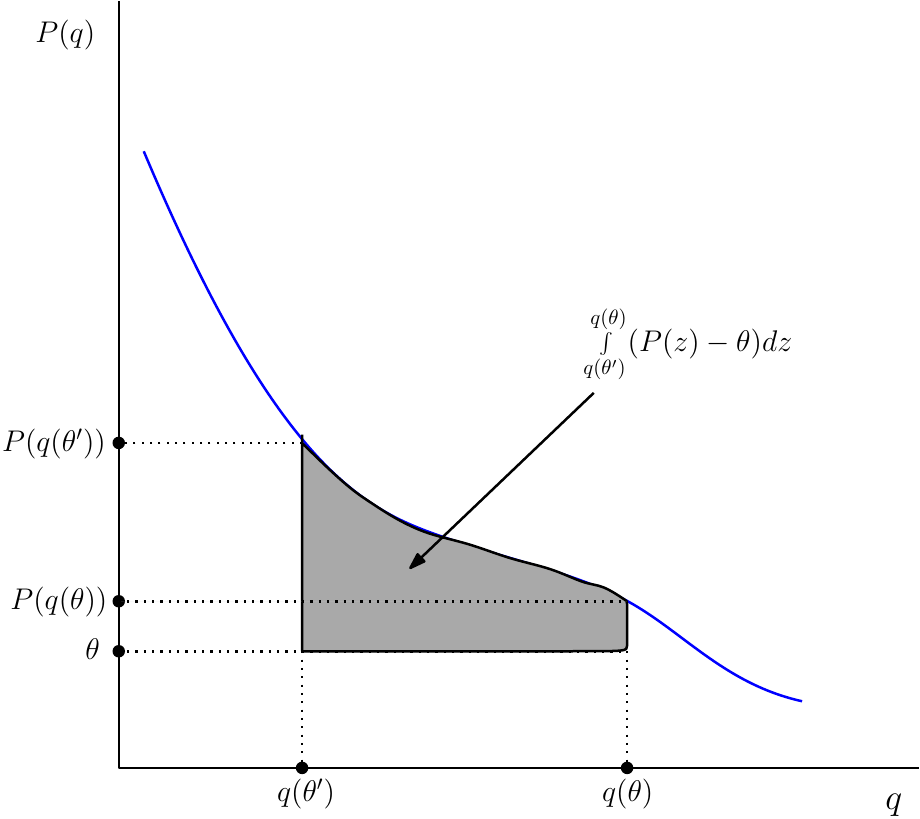}
        \caption{Pictorial depiction of integration along the price axis}
        \label{fig_lemma_2}
    \end{figure}
    
    Now, suppose $\theta' \le P(q(\theta))$, then (\ref{eq:change_var}) implies 
    \begin{align}
        \int \limits_{q(\theta')}^{q(\theta)}(P(z)-\theta)dz &\ge \int \limits_{\theta}^{\theta'}(q(\theta)-q(\theta'))d\tilde{\theta} 
        \ge \int \limits_{\theta}^{\theta'}(q(\tilde{\theta})-q(\theta'))d\tilde{\theta} \label{eq:er1}
    \end{align}
    where the second inequality holds because $q$ is decreasing. If $\theta' > P(q(\theta))$, we know by DD, $q_e(\theta') \ge q(\theta')$, and hence, $P(q(\theta')) \ge P(q_e(\theta')) = \theta'$.
    This implies that $\theta' \in [P(q(\theta)),P(q(\theta'))]$. Again, using equation (\ref{eq:change_var}), we get 
    \begin{align}
        \int \limits_{q(\theta')}^{q(\theta)}(P(z)-\theta)dz &\ge \int \limits_{\theta}^{P(q(\theta))}(q(\theta)-q(\theta'))d\tilde{\theta} + \int \limits_{P(q(\theta))}^{\theta'}(P^{-1}(\tilde{\theta})-q(\theta'))d\tilde{\theta} &\nonumber \\
        &\ge \int \limits_{\theta}^{P(q(\theta))}(q(\theta)-q(\theta'))d\tilde{\theta} + \int \limits_{P(q(\theta))}^{\theta'}(q(\tilde{\theta})-q(\theta'))d\tilde{\theta}&\textrm{(by DD)}\nonumber \\
        &\ge \int \limits_{\theta}^{P(q(\theta))}(q(\tilde{\theta})-q(\theta'))d\tilde{\theta} + \int \limits_{P(q(\theta))}^{\theta'}(q(\tilde{\theta})-q(\theta'))d\tilde{\theta} &\textrm{(since $q$ is decreasing)}\nonumber \\
        &= \int \limits_{\theta}^{\theta'}(q(\tilde{\theta})-q(\theta'))d\tilde{\theta}.& \label{eq:er2}
    \end{align}
By inequalities (\ref{eq:er1}) and (\ref{eq:er2}), we have
    \begin{align*}
        \int \limits_{q(\theta')}^{q(\theta)}(P(z)-\theta)dz &\ge \int \limits_{\theta}^{\theta'}(q(\tilde{\theta})-q(\theta'))d\tilde{\theta}. 
    \end{align*} 
Thus, (\ref{eq:net}) implies that $\textsc{RS}_{\alpha}(\theta,M) \ge \textsc{RS}_{\alpha}(\theta',M)$ with a strict inequality holding for $\alpha \in (0,1)$.{\hfill $\blacksquare{}$ \vspace{12pt}}

\noindent {\sc \textbf{Proof of Proposition \ref{prop:maxmin}.}} We begin by arguing that it is without loss to focus attention on floor-randomized mechanisms satisfying DD and left-continuity. To see why, pick any IC and IR mechanism $M$ that violates either of these properties. By Theorem \ref{theo:undnec}, there exists a floor-randomized mechanism satisfying DD and left-continuity, say $M'$, that dominates $M$. Moreover, 
\begin{align*}
    \inf_{G \in \mathcal{G}} \textsc{RS}_{\alpha}(M',G) =_{(a)} \textsc{RS}_{\alpha}(M',G^\star) &\ge_{(b)} \textsc{RS}_{\alpha}(M,G^\star) \ge \inf_{G \in \mathcal{G}} \textsc{RS}_{\alpha}(M,G)
\end{align*}
where equality $(a)$ holds because of Lemma \ref{lem:mono}. As for inequality $(b)$, it holds because $M'$ dominates $M$. Notice that the set of floor-randomized mechanisms satisfying DD and left-continuity is non-empty, and therefore, there exists a max-min optimal mechanism which satisfies these three properties.

To complete the proof, it suffices to argue that $M_{G^\star}$, where
\begin{align}\label{eq:mm2}
 M_{G^\star} \in \arg \max_{M \in \mathcal{N}}~ \textsc{RS}_\alpha(M,G^\star)   
\end{align}
is indeed max-min optimal in the set $\mathcal{N}$ of floor-randomized mechanisms satisfying DD and left-continuity. By Lemma \ref{lem:mono}, infimum occurs at  $G^\star$ for all such mechanisms. And because of equation (\ref{eq:mm2}), mechanism $M_{G^\star}$ attains the maximum regulator's surplus among all such mechanisms at $G^\star$. {\hfill $\blacksquare{}$ \vspace{12pt}}

\noindent {\sc \textbf{Proof of Proposition \ref{prop:max}}.} Let  $L^{\infty}(\Theta)$ be the set of bounded density functions on $\Theta$. To prove the result, it suffices to show that for a given undominated mechanism $M$ satisfying the conditions of the proposition, there exists density $f \in L^{\infty}(\Theta)$ such that
\[
\textsc{RS}_\alpha(M,f) := \int \limits_{\underline{\theta}}^{\overline{\theta}}\textsc{RS}_{\alpha}(\theta, M)f(\theta)d\theta \ge \textsc{RS}_\alpha(M',f) := \int \limits_{\underline{\theta}}^{\overline{\theta}}\textsc{RS}_{\alpha}(\theta, M')f(\theta)d\theta
\]
for every IC and IR mechanism $M'$.\footnote{Earlier in the paper, the expected regulator surplus for a mechanism $M$ was denoted by $\textsc{RS}_{\alpha}(M,F)$, where $F$ was a cumulative distribution function.  In this proof, to make the analysis transparent, we use the notation $\textsc{RS}_\alpha(M,f)$, where $f$ is a density.}

Observe, $L^{\infty}(\Theta)$ is weak* compact.
For each $f \in L^{\infty}(\Theta)$, select $M_f \in \mathcal{M}$, an IC and IR mechanism, such that $\textsc{RS}_\alpha(M_f,f) > \textsc{RS}_\alpha(M,f)$. If no such $M_f$ exists for some $f\in L^{\infty}(\Theta)$, then the proof is complete.

By continuity, there is a weak* open neighborhood $O_f$ of $f$ such that for every $f' \in O_f$
\begin{align*}
    \textsc{RS}_\alpha(M_f,f') > \textsc{RS}_\alpha(M,f').
\end{align*}

The collection $\{O_f: f \in L^{\infty}(\Theta)\}$ is an open cover of $L^{\infty}(\Theta)$. Compactness implies that there exists a finite subcover $\{O_{f_n}:n=1,2,...,N\}$. For each $n \in \{1,\ldots,N\}$, let $O_{f_n}$ be the open neighborhood corresponding to some density $f_n$ and $M_n$ be the IC and IR mechanism such that 
for every $f' \in O_{f_n}$
\begin{align*}
    \textsc{RS}_\alpha(M_n,f') > \textsc{RS}_\alpha(M,f').
\end{align*}
Define the following set of vectors in $\Re^N$:
\begin{align*}
    \mathcal{H}=\{ \big(\textsc{RS}_\alpha(M_1,f) - \textsc{RS}_\alpha(M,f), \ldots, \textsc{RS}_\alpha(M_n,f) - \textsc{RS}_\alpha(M,f)\big): f \in L^{\infty}(\Theta)\}.
\end{align*}
Note that for any $h \in \mathcal{H}$, $h_i > 0$ for some $i \in \{1,\ldots,N\}$. As a result, $\mathcal{H} \cap \Re_-$ is empty. Since $\mathcal{H}$ is convex, there is a separating hyperplane $\gamma \in \Re^N_+$ separating $\mathcal{H}$ and $\Re_-$ such that $\gamma \cdot h > 0$ for all $h \in \mathcal{H}$. Without loss of generality, we can normalize $\gamma$ such that $\sum_i^N \gamma_i =1$. Hence, for every $f$, we have 
\begin{align}
    \label{eq:ra1}
    \sum_{i=1}^N \gamma_i \textsc{RS}_\alpha(M_i,f) > \textsc{RS}_\alpha(M,f).
\end{align}
Lemma \ref{lem:floor} (which generalizes Theorem \ref{theo:fr} and is proved after this proof) establishes that there exists a floor-randomized mechanism $M^T$ such that for every $f$,
\begin{align}
    \label{eq:cont1}
    \textsc{RS}_\alpha(M^T,f) \ge \sum_{i=1}^N \gamma_i \textsc{RS}_\alpha(M_i,f).
\end{align}
Thus, combining (\ref{eq:ra1}) and (\ref{eq:cont1}), we get that there exists a floor-randomized mechanism $M^T$ such that for every $f$,
\begin{align}
    \label{eq:cont2}
    \textsc{RS}_\alpha(M^T,f) > \textsc{RS}_\alpha(M,f).
\end{align}
Further, we can assume $M^T$ to be left continuous and satisfies DD (by Theorem \ref{theo:undnec}). By (\ref{eq:cont2}), we can conclude that for almost all $\theta \in \Theta$,
\begin{align}
    \label{eq:cont3}
    \textsc{RS}_\alpha(\theta, M^T) \ge \textsc{RS}_\alpha(\theta,M)
\end{align}
with strict inequality for a positive Lebesgue measure of $\theta$s. By Lemma \ref{lem:newl1}, both $\textsc{RS}_\alpha(\theta, M^T)$ and $\textsc{RS}_\alpha(\theta,M)$ are left continuous for all $\theta > \underline{\theta}$. Hence, inequality (\ref{eq:cont3}) holds for all $\theta > \underline{\theta}$. Suppose $\textsc{RS}_\alpha(\underline{\theta},M) > \textsc{RS}_\alpha(\underline{\theta}, M^T)$. By continuity of $\textsc{RS}_\alpha(\theta,M)$ at $\theta=\underline{\theta}$ and the fact $\textsc{RS}_\alpha(\underline{\theta}, M^T)$ is decreasing (Lemma \ref{lem:mono}), we conclude that there exists $\theta' > \underline{\theta}$ such that $\textsc{RS}_\alpha(\theta,M) > \textsc{RS}_\alpha(\theta, M^T)$ for all $\theta \in [\underline{\theta},\theta']$, contradicting our earlier conclusion. Hence, inequality (\ref{eq:cont3}) also holds for $\theta=\underline{\theta}$.

Thus, we have shown that inequality (\ref{eq:cont3}) holds for all $\theta$ and is strict for a positive Lebesgue measure of $\theta$s. This contradicts the fact that $M$ is undominated. {\hfill $\blacksquare{}$ \vspace{12pt}}

The following lemma generalizes Theorem \ref{theo:fr}.\footnote{By taking $N=1$ in Lemma \ref{lem:floor}, we recover Theorem \ref{theo:fr}.}
\begin{lemma}
    \label{lem:floor}
    Suppose $(M_1,\ldots,M_N)$ are $N$ IC and IR mechanisms. Then, there exists a floor-randomized mechanism $M^T$ such that for any $\gamma \in [0,1]^N$ with $\sum_{i=1}^N \gamma_i =1$ we have 
    \begin{align*}
        \textsc{RS}_\alpha(\theta, M^T) &\ge \sum_{i=1}^N \gamma_i \textsc{RS}_\alpha(\theta,M_i)~\qquad~\forall~\theta.
    \end{align*}
\end{lemma}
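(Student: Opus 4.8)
Fix $\gamma\in[0,1]^N$ with $\sum_{i=1}^N\gamma_i=1$ and write $M_i=(r_i,q_i,u_i)$. The plan is to take $M^T$ to be the unique floor-randomized mechanism whose product rule is the convex combination $\psi(\theta):=\sum_{i=1}^N\gamma_i\,q_i(\theta)r_i(\theta)$. Since each $q_ir_i$ is decreasing (Lemma \ref{lem:myerson}), so is $\psi$, so applying the transformation of Section \ref{sec:floor_rand} to $\psi$ (i.e.\ $q^T(\theta)=\max(\psi(\theta),\hat{q})$, $r^T(\theta)=\min(\psi(\theta)/\hat{q},1)$ when $\psi(\theta)>0$ and $q^T(\theta)=r^T(\theta)=0$ otherwise, with $u^T(\theta)=\int_\theta^{\overline{\theta}}\psi(z)dz$) yields, exactly as in the construction preceding Theorem \ref{theo:fr}, an IC, IR, floor-randomized mechanism $M^T=(r^T,q^T,u^T)$ with $q^T(\theta)r^T(\theta)=\psi(\theta)$, $u^T(\overline{\theta})=0$, and $r^T(\theta)\overline{V}(q^T(\theta))=co(\widetilde{V})(\psi(\theta))$, where $\overline{V}$, $\widetilde{V}$ and $co(\widetilde{V})$ are as in the proof of Theorem \ref{theo:fr}.

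First I would reduce the claim to a single pointwise concavity inequality. Plugging the above identities into \eqref{eq:RS2}, using $\sum_i\gamma_i\theta q_i(\theta)r_i(\theta)=\theta\psi(\theta)$ and $\sum_i\gamma_i u_i(\theta)=\sum_i\gamma_i u_i(\overline{\theta})+\int_\theta^{\overline{\theta}}\psi(z)dz\ge\int_\theta^{\overline{\theta}}\psi(z)dz$ (the latter by Lemma \ref{lem:myerson} and $u_i(\overline{\theta})\ge0$), one gets for every $\theta$
\[
\textsc{RS}_{\alpha}(\theta,M^T)-\sum_{i=1}^N\gamma_i\textsc{RS}_{\alpha}(\theta,M_i)\ \ge\ co(\widetilde{V})(\psi(\theta))-\sum_{i=1}^N\gamma_i\,r_i(\theta)\overline{V}(q_i(\theta)).
\]
So it remains to prove $co(\widetilde{V})(\psi(\theta))\ge\sum_{i=1}^N\gamma_i r_i(\theta)\overline{V}(q_i(\theta))$ for each fixed $\theta$.

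For this — the heart of the argument — I would use the paper's convention $q_i(\theta)=0\iff r_i(\theta)=0$ to write $r_i(\theta)\overline{V}(q_i(\theta))=r_i(\theta)\widetilde{V}(q_i(\theta))$ (both sides vanish when $q_i(\theta)=0$). Then, setting $q_0(\theta):=0$, $w_0:=\sum_{i=1}^N\gamma_i(1-r_i(\theta))$ and $w_i:=\gamma_i r_i(\theta)$, the numbers $w_0,\ldots,w_N$ are nonnegative and sum to $1$, the quantity $\sum_{i=1}^N\gamma_i r_i(\theta)\overline{V}(q_i(\theta))$ equals $\sum_{i=0}^N w_i\widetilde{V}(q_i(\theta))$, and its barycenter $\sum_{i=0}^N w_i q_i(\theta)$ equals $\psi(\theta)$. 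Since $co(\widetilde{V})$ is concave on $[0,\overline{q}]$, lies weakly above $\widetilde{V}$, and satisfies $co(\widetilde{V})(0)=0$, Jensen's inequality gives
\[
\sum_{i=1}^N\gamma_i\,r_i(\theta)\overline{V}(q_i(\theta))\ \le\ \sum_{i=0}^N w_i\,co(\widetilde{V})(q_i(\theta))\ \le\ co(\widetilde{V})\Big(\sum_{i=0}^N w_i q_i(\theta)\Big)=co(\widetilde{V})(\psi(\theta)),
\]
which completes the proof.

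I expect the main obstacle to be exactly this last step: $\widetilde{V}$ is not concave because of its jump at $q=0$, so Jensen cannot be applied to $\widetilde{V}$ directly and must instead be routed through the concave closure $co(\widetilde{V})$, while correctly bundling the total ``shutdown probability'' $w_0$ onto the atom at $q=0$. This is the $N$-mechanism generalization of the two-point Jensen argument in the proof of Theorem \ref{theo:fr}, which is recovered by setting $N=1$ (then $w_1=r_1(\theta)$, $w_0=1-r_1(\theta)$). Everything else — monotonicity of $\psi$, that the Section \ref{sec:floor_rand} transformation produces a floor-randomized mechanism with the asserted product rule and value-term identity, and the rent bookkeeping via $u_i(\overline{\theta})\ge0$ — is routine and parallels Section \ref{sec:floor_rand}.
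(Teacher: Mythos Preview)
Your proposal is correct and follows essentially the same approach as the paper: construct $M^T$ as the floor-randomized transformation of the mixture $\psi=\sum_i\gamma_i q_ir_i$, reduce the comparison to a pointwise inequality on the value term, and control that via the concave closure $co(\widetilde{V})$. Your packaging is slightly cleaner than the paper's in two respects: you state the identity $r^T(\theta)\overline{V}(q^T(\theta))=co(\widetilde{V})(\psi(\theta))$ once, avoiding the paper's case split on $\psi\gtrless\hat{q}$; and you apply Jensen in a single step with the combined weights $(w_0,\ldots,w_N)$, whereas the paper effectively applies concavity twice (first to pass from $\psi$ to the points $q_ir_i$ via the $\gamma_i$, then from each $q_ir_i$ to $r_i\overline{V}(q_i)$ via the pair $(r_i,1-r_i)$). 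These are stylistic compressions of the same argument, not a different route.
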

\begin{proof}
    Let $\tilde{q}(\theta)\equiv \sum \limits_{i=1}^{N}\gamma_iq_i(\theta)r_i(\theta)$, which decreases in $\theta$ because each $M_i$ is IC, and define
\begin{align*}
    q^T(\theta) = \begin{cases}\max(\tilde{q}(\theta),\hat{q}) & \textrm{$\tilde{q}(\theta)>0$} \\
        0 & {\rm otherwise}
    \end{cases} & &  r^T(\theta) = \begin{cases}\min(\frac{\tilde{q}(\theta)}{\hat{q}},1)& \textrm{$\tilde{q}(\theta)>0$} \\
    0 & {\rm otherwise}
    \end{cases}
\end{align*}
At every $\theta$, we have $q^T(\theta)r^T(\theta)$ equals $\tilde{q}(\theta)$. Consequently, the mechanism $M^T\equiv(r^T,q^T,u^T)$, where $u^T(\theta):= \int_{\theta}^{\overline{\theta}}\tilde{q}(z)dz$ for all $\theta$, is IC and IR with $u^T(\overline{\theta})=0$. Moreover, it is also floor randomized with $\Theta_{1} = \{\theta: \tilde{q}(\theta) \ge \hat{q}\}$ and $\Theta_{01} = \{\theta: \hat{q} > \tilde{q}(\theta) > 0\}$. Notice that $r^T(\theta) \in (0,1)$ is true only when $q^T(\theta)=\hat{q}$. 

To establish the desired inequality, pick $\theta \in \Theta$. If $\tilde{q}(\theta)=0$, then $q_i(\theta)=0$ for all $i$ and the claim holds trivially. So, assume $\tilde{q}(\theta) > 0$ and consider the following two cases.

\noindent \textbf{\textsc{Case 1}}. Let $\theta$ be such that $\tilde{q}(\theta)\ge \hat{q}$. Here, by definition, $r^T(\theta)=1$ and $q^T(\theta)=\tilde{q}(\theta)$.
\begin{align*}
   \textsc{RS}_{\alpha}(\theta,M^T) &=   r^T(\theta)\overline{V}(q^T(\theta)) - \theta q^T(\theta)r^T(\theta) - (1-\alpha)u^T(\theta) \\
   &= \overline{V}(\tilde{q}(\theta))-\theta \tilde{q}(\theta)- (1-\alpha)\Big(\sum_{i=1}^{N}\gamma_i u_i(\theta)-\sum_{i=1}^{N}\gamma_i u_i(\overline{\theta})\Big)\\
   &\ge_{(a)} \overline{V}(\tilde{q}(\theta))-\theta \tilde{q}(\theta)- (1-\alpha)\sum_{i=1}^{N}\gamma_i u_i(\theta)\\
    &= co(\widetilde{V})(\tilde{q}(\theta))- \theta \tilde{q}(\theta)-(1-\alpha)\sum_{i=1}^{N}\gamma_i u_i(\theta)\\
    &\ge_{(b)} \sum_{i=1}^{N}\gamma_i co(\widetilde{V})(q_i(\theta)r_i(\theta))- \theta \tilde{q}(\theta)- (1-\alpha)\sum_{i=1}^{N}\gamma_i u_i(\theta)\\
    &\ge_{(c)} \sum_{i=1}^{N}\gamma_i r_i(\theta)\overline{V}(q_i(\theta))- \theta \tilde{q}(\theta)- (1-\alpha)\sum_{i=1}^{N}\gamma_i u_i(\theta) \\
    &= \sum_{i=1}^N \gamma_i \textsc{RS}_\alpha(\theta,M_i),
\end{align*}
where inequality $(a)$ holds because each $M_i$ is IR, and therefore, $u_i(\overline{\theta}) \ge0$. As for inequality $(b)$, it follows from concavity of $co(\widetilde{V})$. Finally, inequality $(c)$ holds because $co(\widetilde{V})$ is a concave closure of function $\widetilde{V}(\theta) = \overline{V}(\theta)\mathbb{I}_{\{q>0\}}$.

\noindent \textbf{\textsc{Case 2}}. Let $\theta$ be such that $0< \tilde{q}(\theta)< \hat{q}$. Then, $q^T(\theta)=\hat{q}$ and $r^T(\theta)=\tilde{q}(\theta)/\hat{q}$. Moreover,
\begin{align*}
   \textsc{RS}_{\alpha}(\theta,M^T)&=   r^T(\theta)\overline{V}(q^T(\theta)) - \theta q^T(\theta)r^T(\theta) - (1-\alpha)u^T(\theta) \\
   &= r^T(\theta)\overline{V}(\hat{q})-\theta \tilde{q}(\theta)- (1-\alpha)\Big(\sum_{i=1}^{N}\gamma_i u_i(\theta)-\sum_{i=1}^{N}\gamma_i u_i(\overline{\theta})\Big)\\
   &\ge_{(a)} r^T(\theta)\overline{V}(\hat{q})-\theta \tilde{q}(\theta)- (1-\alpha)\sum_{i=1}^{N}\gamma_i u_i(\theta)\\
    &= co(\widetilde{V})(\tilde{q}(\theta))- \theta \tilde{q}(\theta)-(1-\alpha)\sum_{i=1}^{N}\gamma_i u_i(\theta)\\
    &\ge_{(b)} \sum_{i=1}^{N}\gamma_i co(\widetilde{V})(q_i(\theta)r_i(\theta))- \theta \tilde{q}(\theta)- (1-\alpha)\sum_{i=1}^{N}\gamma_i u_i(\theta)\\
    &\ge_{(c)} \sum_{i=1}^{N}\gamma_i r_i(\theta)\overline{V}(q_i(\theta))- \theta \tilde{q}(\theta)- (1-\alpha)\sum_{i=1}^{N}\gamma_i u_i(\theta) \\
    &= \sum_{i=1}^N \gamma_i \textsc{RS}_\alpha(\theta,M_i),
\end{align*}
where inequality $(a)$ holds because each $M_i$ is IR, and therefore, $u_i(\overline{\theta}) \ge0$. As for inequality $(b)$, it follows from concavity of $co(\widetilde{V})$. Finally, inequality $(c)$ holds because $co(\widetilde{V})$ is a concave closure of function $\widetilde{V}(\theta) = \overline{V}(\theta)\mathbb{I}_{\{q>0\}}$.
\end{proof}

\begin{lemma}
    \label{lem:newl1}
    If $M$ is an IC and IR mechanism that is left continuous, floor randomized, and satisfies DD, then $\textsc{RS}_{\alpha}(\theta,M)$ is left continuous in $\theta$ for all $\theta > \underline{\theta}$.
\end{lemma}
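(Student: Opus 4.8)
The plan is to split $\textsc{RS}_{\alpha}(\theta,M)$ into three summands using equation (\ref{eq:RS}) and to show that each is left continuous in $\theta$. Writing $\overline{V}(q)=V(q)-c$ as in the proof of Theorem \ref{theo:fr}, we have
\[
\textsc{RS}_{\alpha}(\theta,M)=r(\theta)\overline{V}(q(\theta))-\theta\, q(\theta)r(\theta)-(1-\alpha)u(\theta).
\]
Since $M$ is floor randomized, $u(\overline{\theta})=0$, so Lemma \ref{lem:myerson} gives $u(\theta)=\int_{\theta}^{\overline{\theta}}q(z)r(z)\,dz$, which is continuous in $\theta$; hence the third summand is continuous. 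The product $q(\cdot)r(\cdot)$ is left continuous by hypothesis and $\theta\mapsto\theta$ is continuous, so the second summand $\theta\, q(\theta)r(\theta)$ is left continuous. It therefore remains only to show that $r(\theta)\overline{V}(q(\theta))$ is left continuous.

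The key step is the pointwise identity
\[
r(\theta)\overline{V}(q(\theta))=co(\widetilde{V})\big(q(\theta)r(\theta)\big)\qquad\text{for all }\theta,
\]
where $\widetilde{V}(q)=\overline{V}(q)\mathbf{1}_{q>0}$ and $co(\widetilde{V})$ is its concave closure, computed explicitly in the proof of Theorem \ref{theo:fr}. I would verify this by checking it on each of the three intervals of the floor-randomized partition $\Theta_0\succ\Theta_{01}\succ\Theta_1$: on $\Theta_1$ one has $r(\theta)=1$ and $q(\theta)r(\theta)=q(\theta)\ge\hat{q}$, so both sides equal $\overline{V}(q(\theta))$; on $\Theta_{01}$ one has $q(\theta)=\hat{q}$ and $q(\theta)r(\theta)=\hat{q}r(\theta)<\hat{q}$, so, using $\overline{V}(\hat{q})=\hat{q}P(\hat{q})$, the right side equals $\frac{\overline{V}(\hat{q})}{\hat{q}}\hat{q}r(\theta)=r(\theta)\overline{V}(\hat{q})$, matching the left side; on $\Theta_0$ both sides are $0$.

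To conclude, I would use that $co(\widetilde{V})$ is continuous on $[0,\overline{q}]$ — it is linear on $[0,\hat{q}]$, coincides with the continuous function $\overline{V}$ on $[\hat{q},\overline{q}]$, and the two pieces agree at $\hat{q}$ because $\overline{V}(\hat{q})=\hat{q}P(\hat{q})$ — together with the fact that $q(\cdot)r(\cdot)$ is monotone (so its left limit exists everywhere) and left continuous. Composing a continuous function with a monotone left-continuous function yields a left-continuous function, so $r(\theta)\overline{V}(q(\theta))=co(\widetilde{V})(q(\theta)r(\theta))$ is left continuous; hence $\textsc{RS}_{\alpha}(\theta,M)$ is left continuous as a difference of left-continuous functions, for every $\theta>\underline{\theta}$ (left continuity at $\underline{\theta}$ being vacuous). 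The only non-routine ingredient is the identity $r(\theta)\overline{V}(q(\theta))=co(\widetilde{V})(q(\theta)r(\theta))$: it is precisely what converts the hypothesis that the product $qr$ is left continuous into left continuity of the regulator's surplus, even though $\overline{V}$ jumps at $0$ and $q$, $r$ may jump individually at the boundaries of the partition. Note that the argument does not actually invoke DD.
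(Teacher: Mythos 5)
Your proof is correct, and it takes a genuinely different route from the paper's. The paper writes $\textsc{RS}_{\alpha}(\theta,M)=r(\theta)\textsc{TS}(\theta,q(\theta))-(1-\alpha)u(\theta)$, disposes of $u$ via convexity, and then establishes left continuity of $r(\theta)\textsc{TS}(\theta,q(\theta))$ by a case analysis over the three partition intervals $\Theta_1,\Theta_{01},\Theta_0$, with explicit bookkeeping at the suprema of these intervals (e.g., arguing that left continuity of $qr$ forces $r(\sup\Theta_1)=1$, and that $\lim_{\theta\uparrow\sup\Theta_{01}}r(\theta)=0$ when that supremum lies in $\Theta_0$). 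You instead isolate the only delicate summand $r(\theta)\overline{V}(q(\theta))$ and observe the identity $r(\theta)\overline{V}(q(\theta))=co(\widetilde{V})\bigl(q(\theta)r(\theta)\bigr)$ for any floor-randomized mechanism — an identity you correctly verify on each cell of the partition and which is implicit in the paper's proof of Theorem \ref{theo:fr} but not exploited here. Since $co(\widetilde{V})$ is continuous on $[0,\overline{q}]$ (the two pieces agree at $\hat{q}$ by the defining equation of the floor) and $qr$ is left continuous, the composition is left continuous, and the boundary bookkeeping disappears entirely. Your approach is cleaner and makes transparent why left continuity of the \emph{product} $qr$ (rather than of $q$ and $r$ separately) is exactly the right hypothesis; the paper's case analysis is more elementary but has to re-derive, at each interval boundary, facts that your identity delivers uniformly. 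Your closing remark that DD is never used is accurate — the paper's proof does not use it either. One tiny simplification: for the composition step you only need $f$ continuous and $g$ left continuous; monotonicity of $qr$ (to guarantee existence of left limits) is not actually required.
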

\begin{proof}
    To see why $\textsc{RS}_{\alpha}(\theta,M)$ is left continuous in $\theta$,
    note that 
    \begin{align*}
        \textsc{RS}_{\alpha}(\theta,M) &= r(\theta) \textsc{TS}(\theta,q(\theta)) - (1-\alpha)u(\theta).
    \end{align*}
    Since $u$ is convex, it is continuous in the interior of $\Theta$. Also, since $u$ is decreasing, non-negative, and $u(\bar{\theta})=0$, it is continuous at $\bar{\theta}$. This implies that to show left-continuity of $\textsc{RS}_{\alpha}(\theta,M)$, it is sufficient to show that $r(\theta)\textsc{TS}(\theta,q(\theta))$ is left continuous at $\theta > \underline{\theta}$.
    
    Consider the partition of type space $\Theta$, $\{\Theta_1,\Theta_{01},\Theta_0\}$, defined for the floor-randomized mechanism $M$.
    \begin{enumerate}
        \item Consider the set $\Theta_1$, and let $\sup \Theta_1 = \theta_\ell$. If $\theta_{\ell} = \underline{\theta}$, there is nothing to prove. Therefore, assume $\theta_{\ell} > \underline{\theta}$. For every $\theta \in \Theta_1$, $r(\theta)=1$ and $q(\theta) \ge \hat{q}$. By left-continuity of $M$, we must have $q$ to be left continuous and $r(\theta_{\ell})=1$. 
        Consequently, $\theta_{\ell} \in \Theta_1$, and $r(\theta)=1$ for all $\theta \in \Theta_1$ and $q(\theta)$ is left continuous on $\Theta_1$. Hence, $r(\theta)\textsc{TS}(\theta,q(\theta))$ is left continuous for all $\theta \in (\underline{\theta},\theta_{\ell}]$.

        \item Now consider $\Theta_{01}$ (if it is empty, there is nothing to prove), and let $\sup \Theta_{01} = \theta_h$. For every $\theta \in \Theta_{01}$, we have $q(\theta)=\hat{q}$. If $\theta_h \in \Theta_{01}$, then $r(\theta) \in (0,1)$ for all $\theta \in \Theta_{01}$, and since $q(\theta) = \hat{q}$ for all $\theta \in \Theta_{01}$, left-continuity of $M$ ensures left-continuity of $r$. If $\theta_h \notin \Theta_{01}$, then $r(\theta_h)=0$ and $q(\theta_h)=0$. Left-continuity of $M$ ensures $\lim_{\theta \uparrow \theta_h}r(\theta)=0$. As a result, $r(\theta)\textsc{TS}(\theta,q(\theta))$ is left continuous for all $\theta \in (\theta_{\ell},\theta_h]$.

        \item Finally, consider any $\theta > \theta_h$. Since $q(\theta) = r(\theta) = 0$ for $\theta > \theta_h$, $r(\theta)\textsc{TS}(\theta, q(\theta))$ is left continuous for all $\theta \in (\theta_h,\bar{\theta}]$.
    \end{enumerate}
    This concludes the proof that $\textsc{RS}_{\alpha}(\theta,M)$ is left continuous in $\theta$ for all $\theta > \underline{\theta}$.
\end{proof}

\noindent {\sc\textbf{Proof of Lemma \ref{lem:extreme}.}} Let $(r,q,u) \in \mathcal{M}(q)$ be such that $r(\theta) \in (0,1)$ for some $\theta$. It suffices to show that $(r,q,u)$ is not an extreme point of $\mathcal{M}(q)$. That is, it is enough to establish that there exists two decreasing functions $\hat{r}$ and $\tilde{r}$, both taking values in $[0,1]$, and satisfying
\[r(\theta) = \frac{1}{2}\hat{r}(\theta) + \frac{1}{2}\tilde{r}(\theta) ~~\forall~\theta.\]
Notice that the above equation, in combination with $q,$ will then imply $u = \left(\hat{u}(\theta) + \tilde{u}(\theta)\right)/2$, establishing that $(r,q,u)$ can be written as convex combination of two mechanism, $(\hat{r},q,\hat{u})$ and $(\tilde{r},q,\tilde{u})$, both in $\mathcal{M}(q)$.

To do so, consider 
\begin{align*} 
\hat{r}(\theta) = 
\begin{cases}
0 & \textrm{if}~r(\theta) < 0.5 \\
2r(\theta) - 1 & \textrm{if}~r(\theta) \ge 0.5
\end{cases}
\hspace{0.5in}
\tilde{r}(\theta) = 
\begin{cases}
2r(\theta) & \textrm{if}~r(\theta) < 0.5 \\
1 & \textrm{if}~r(\theta) \ge 0.5
\end{cases}
\end{align*}
It is easy to see that $r(\theta)= (\hat{r}(\theta)+\tilde{r}(\theta))/2$ for all $\theta$.
Both $\hat{r}$ and $\tilde{r}$ are feasible, that is, they take values in $[0,1]$. Finally, both $\hat{r}$ and $\tilde{r}$ are decreasing because $r$ is decreasing. If $r(\theta) \notin \{0,1\}$ for some $\theta$, then $\hat{r}(\theta) \ne \tilde{r}(\theta)$. Therefore, $r$ can be written as convex combination of two more decreasing functions. {\hfill $\blacksquare{}$ \vspace{12pt}}

\noindent {\sc \textbf{Proof of Proposition \ref{prop:pshrink}}.} We begin by showing that the new quantity floor $\hat{q}_{{\rm n}} > \hat{q}$, and use this inequality to establish that if $M \in \mathcal{D}(P_{{\rm n}})$, then $M \in \mathcal{D}(P)$. Observe, for any $q$,
    \begin{align*}
        V_{{\rm n}}(q) - q P_{{\rm n}}(q) &= \int \limits_0^{q}\big[P_{{\rm n}}(z) - P_{{\rm n}}(q)\big]dz \\
        &= \int \limits_0^{q} \Big[ \big[P(q) - P_{{\rm n}}(q)\big] - \big[P(z) - P_{{\rm n}}(z)\big] \Big]dz + \int \limits_0^{q} \big[P(z) - P(q)\big]dz \\
        &< \int \limits_0^{q} \big[P(z) - P(q)\big]dz \\
        &= V(q) - qP(q)
    \end{align*}
    where the inequality follows from (\ref{eq:decdiff}). Further, applying the above inequality for $q=\hat{q}$, we get 
    \begin{align*}
        V_{{\rm n}}(\hat{q}) - \hat{q} P_{{\rm n}}(\hat{q}) < c.
    \end{align*}
The above inequality along with the fact that $V_{{\rm n}}(q) - q P_{{\rm n}}(q)$ is strictly increasing and continuous in $q$ implies that $\hat{q}_{{\rm n}} > \hat{q}$. 

    Now pick any deterministic undominated mechanism $(q,u) \in \mathcal{D}(P_{{\rm n}})$ -- note that for deterministic mechanisms $q$ itself determines $r$ and $u$ is determined by the envelope formula. We now establish $(q,u) \in \mathcal{D}(P)$. Observe, $(q,u)$ is floor-randomized with respect to the floor $\hat{q}$. This is because, by Theorem \ref{theo:undnec}, if $q(\theta)>0$, then $q(\theta) \ge \hat{q}_{{\rm n}} > \hat{q}$. Hence, $(q,u)$ is floor-randomized with respect to $P$. Further, by Theorem \ref{theo:undnec}, $(q,u)$ satisfies DD with respect to $P_{{\rm n}}$, which means for any $\theta$, we have $q(\theta) \le P^{-1}_{{\rm n}}(\theta) \le  P^{-1}(\theta)$  with second inequality strict for $\theta > \underline{\theta}$ and all inequalities equalities for $\theta = \underline{\theta}$ (because $P$ and $P_{{\rm n}}$ cross at $P^{-1}(\underline{\theta})$ and (\ref{eq:decdiff}) holds). As a result $(q,u)$ satisfies strict DD with respect to $P$. Since $(q,u) \in  \mathcal{D}(P_{{\rm n}})$, it is left continuous. Thus, by Theorem \ref{theo:undsuff}, we have $(q,u) \in \mathcal{D}(P)$. {\hfill $\blacksquare{}$ \vspace{12pt}}

\section{Unknown fixed cost and known marginal cost}
\label{sec:fixed}
A key step in our analysis was a rent-invariant transformation, where we converted a given IC and IR mechanism into another IC and IR mechanism that had type-by-type the same rent ($u(\theta)$). Among such rent-invariant transformations, the floor-randomized transformation improved the expected total surplus ($r(\theta)\left( V(q(\theta))-c-\theta q(\theta)\right)$) at every type and allowed us to narrow the search for undominated mechanisms to a more structured class of IC and IR mechanisms. However, if the private information of the monopolist affects the fixed cost, a rent-invariant transformation may not always be available, and even when it is available, it may adversely impact the expected total surplus at some types. Consequently, the search for undominated mechanisms cannot be confined to a well-behaved smaller subset of mechanisms, further complicating the analysis. Nevertheless, below we make progress in characterizing undominated mechanisms in a special case of the bilinear cost of \citet{BM82}, where the fixed cost is the private information of the monopolist, and the marginal cost is commonly known.

Suppose the cost of producing a quantity $q>0$ of the product is $\theta + cq$, where $c > 0$ is the constant marginal cost that is observed by both the monopolist and the regulator. However, the fixed cost $\theta$ is privately observed by the monopolist.\footnote{The result in this section extends to the cost specification $k_1\theta + c q$, where $k_1 >0$ is such that $[A2]$ is satisfied.} Thus, the private information only affects the fixed cost and the marginal cost is common knowledge. In this case, $M \equiv (r,q,u)$ is IC and IR if and only if $r(\theta)$ is decreasing in $\theta$, and
\[
u(\theta) = u(\overline{\theta}) + \int \limits_{\theta}^{\overline{\theta}}r(y)dy
\]
where $u(\overline{\theta}) \ge 0$. Let $q_e$ be such that $q_e = P^{-1}(c)$.

With known marginal cost, a rent-invariant transformation is unavailable as $r$ pins down the rent uniquely up to a constant. Therefore, we must search over the entire set of IC and IR mechanisms. However, because the marginal cost is commonly known, total surplus is maximized uniquely at $q_e$ and the characterization of undominated mechanisms reduces to obtaining restriction on $r$, which are identified in the next result.

\begin{theorem}
\label{theo:fixedc}
An mechanism  $M\equiv(r,q,u)$ is a undominated if and only if $r$ is left continuous with $r(\underline{\theta})=1$, $u(\bar{\theta})=0$, and $q(\theta)=q_e$ for all $\theta$.    
\end{theorem}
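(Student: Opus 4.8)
The statement has two directions. For sufficiency, suppose $M \equiv (r,q,u)$ satisfies the three conditions: $r$ is left continuous with $r(\underline{\theta})=1$, $u(\overline{\theta})=0$, and $q(\theta)=q_e$ for all $\theta$. I would suppose for contradiction that $M$ is dominated by some IC and IR mechanism $\widetilde{M} \equiv (\tilde r, \tilde q, \tilde u)$. The regulator's surplus is $\textsc{RS}_\alpha(\theta,M) = r(\theta)\textsc{TS}(\theta,q(\theta)) - (1-\alpha)u(\theta)$, and since $\textsc{TS}(\theta,q)=V(q)-\theta - cq$ is uniquely maximized at $q_e$ (marginal cost is common knowledge), replacing $\tilde q$ by $q_e$ only raises total surplus without changing $\tilde r$ or the rent, so WLOG $\tilde q \equiv q_e$. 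Now both mechanisms have the same quantity rule; they differ only in $(\tilde r, \tilde u)$. The key computation: at $\underline{\theta}$, $r(\underline{\theta})=\tilde r(\underline{\theta})=1$ is forced (if $\tilde r(\underline{\theta})<1$, since $\textsc{TS}(\underline{\theta},q_e)>0$ by [A2] and $\tilde u(\underline{\theta})\geq 0 = u(\underline{\theta})$... wait, actually I need $\tilde r(\underline\theta)=1$ to get the total surplus comparison right). I would argue: domination at $\underline{\theta}$ together with $\textsc{RS}_\alpha(\underline{\theta},\widetilde M) \geq \textsc{RS}_\alpha(\underline{\theta},M)$ and the envelope formula $u(\theta) = \int_\theta^{\overline\theta} r(y)\,dy$ (since $q\equiv q_e$, the monotone object is just $r$) forces $\int_{\underline\theta}^{\overline\theta}\tilde r(y)\,dy \leq \int_{\underline\theta}^{\overline\theta} r(y)\,dy$, hence (by a $\theta_h := \sup\{\theta : \tilde r(\theta) < r(\theta)\}$ argument exactly parallel to Case 2 in the proof of Theorem~\ref{theo:undsuff}) one finds a type $\theta$ in a left-neighborhood of $\theta_h$ where $\textsc{RS}_\alpha(\theta,M) > \textsc{RS}_\alpha(\theta,\widetilde M)$, using left continuity of $r$ to get the strict separation; this contradicts domination.

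\textbf{Necessity.} Conversely, suppose $M \equiv (r,q,u)$ is undominated. Since total surplus is uniquely maximized at $q_e$ and the rent depends only on $r$, the mechanism $(r, q_e, u)$ dominates $(r,q,u)$ unless $q(\theta)=q_e$ for almost all $\theta$ — actually for all $\theta$ where $r(\theta)>0$, and on the set where $r(\theta)=0$ the value of $q$ is immaterial, but our convention $q(\theta)=0 \iff r(\theta)=0$ would be violated; I would handle this by noting the paper's convention already ties $q$ to $r$, and then observe that if $r(\theta)>0$ anywhere with $q(\theta)\neq q_e$, raising $q$ to $q_e$ there strictly raises $\textsc{RS}_\alpha(\theta,\cdot)$ without changing rent or surplus elsewhere, contradicting undomination; so $q(\theta)=q_e$ whenever $r(\theta)>0$. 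But wait — the theorem claims $q(\theta)=q_e$ for \emph{all} $\theta$, including where $r(\theta)=0$; this seems to slightly abuse the convention, so I would either note that $q$ is inconsequential there, or (cleaner) observe $r(\underline{\theta})=1$ forces $r>0$ on a nondegenerate interval and the claim is really about the relevant region — I'd phrase it carefully. Next, $u(\overline{\theta})=0$: if $u(\overline{\theta})>0$, lowering it to $0$ keeps IC, keeps IR, strictly raises $\textsc{RS}_\alpha(\theta,M)$ at every $\theta$ (since $u$ drops uniformly by $u(\overline\theta)$), contradicting undomination. For $r(\underline{\theta})=1$: if $r(\underline{\theta})<1$, consider $\tilde r$ that agrees with $r$ except bumped up near $\underline\theta$ (or just set $\tilde r(\underline\theta)=1$ on a tiny left-closed interval) — since $\textsc{TS}(\underline{\theta},q_e)>0$, raising $r$ at $\underline\theta$ raises surplus there; the extra rent it creates is spread over lower types only (there are none below $\underline\theta$), wait — the rent formula $u(\theta)=\int_\theta^{\overline\theta}\tilde r(y)dy$ means raising $\tilde r$ near $\underline\theta$ raises $u(\theta)$ for $\theta$ below that point; at $\underline\theta$ itself the surplus change is $\Delta r \cdot \textsc{TS}(\underline\theta,q_e) - (1-\alpha)\Delta r \cdot (\text{width})$, which is positive for a thin enough interval. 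This requires a limiting/infimal argument (there may be no "worst" type), analogous to the $\Delta$–$\epsilon$ device in the proof of Theorem~\ref{theo:undsuff}. Finally, left continuity of $r$: if $r$ jumps down at some $\theta_0 > \underline\theta$ (i.e. $\lim_{z\uparrow\theta_0}r(z) > r(\theta_0)$), redefine $r(\theta_0):=\lim_{z\uparrow\theta_0}r(z)$; this changes $r$ on a measure-zero set so $u$ is unchanged, IC/IR are preserved, and $\textsc{RS}_\alpha(\theta_0,\cdot)$ strictly increases (higher $r$ at $\theta_0$, same rent, $\textsc{TS}(\theta_0,q_e)>0$ by [A2] since $q_e = P^{-1}(c)$ and [A2]-type reasoning gives $\textsc{TS}(\theta,q_e) = V(q_e)-\theta-cq_e$, which is positive for all $\theta$ — I should double-check this uses [A2] with $\overline\theta$ and monotonicity in $\theta$), contradicting undomination. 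This mirrors Lemma~\ref{lem:lc}.

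\textbf{Main obstacle.} The hardest part is the sufficiency direction's $\theta_h := \sup\{\theta : \tilde r(\theta) < r(\theta)\}$ argument: I need to show that whenever $\widetilde M$ saves rent at $\underline\theta$ (which domination at $\underline\theta$ forces, given equal total surplus there because both have $q=q_e$ and $r(\underline\theta)=\tilde r(\underline\theta)=1$), it must lose regulator surplus somewhere to the left of $\theta_h$. The delicate point is the strict inequality — I need $\tilde r < r$ on a positive-measure set below $\theta_h$ and, via left continuity, to pin down that $r(\theta_h)\geq\tilde r(\theta_h)$ and hence find a left-neighborhood where $r > \tilde r$ strictly, then run the $\Delta - \epsilon$ bound exactly as in Case~2 of the Theorem~\ref{theo:undsuff} proof with the roles of $q$-differences replaced by $r$-differences (this is actually \emph{simpler} here since there is no floor and $q\equiv q_e$ is constant). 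I would also need to confirm $\textsc{TS}(\theta,q_e)>0$ for all $\theta\in\Theta$ as a standing fact — it follows from $\textsc{TS}(\theta,q_e)=V(q_e)-\theta-cq_e \geq V(q_e)-\overline\theta-cq_e$ and an [A2]-analogue, which I'd state once at the start of the proof. A secondary subtlety is reconciling "$q(\theta)=q_e$ for all $\theta$" with the convention $q(\theta)=0\iff r(\theta)=0$; I expect the intended reading is that $q_e>0$ always and the shut-down region has $r=0$, so the characterization really says "no shut-down at all" — i.e. $r>0$ everywhere is \emph{not} claimed, only $r(\underline\theta)=1$, and where $r(\theta)=0$ the stated $q(\theta)=q_e$ is a harmless normalization; I would add a one-line remark to this effect.
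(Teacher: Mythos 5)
Your proposal is correct and follows essentially the same route as the paper: sufficiency by contradiction via the normalization of the dominating mechanism, the positive-measure set $\{\theta:\tilde r(\theta)<r(\theta)\}$, $\theta_h=\sup$ of that set, and the $\Delta$--$\epsilon$ estimate on a left-neighbourhood (the paper's Lemma on $D_r$ and its bound $\epsilon<\Delta\,\textsc{TS}(\theta_h,q_e)/\textsc{TS}(\underline{\theta},q_e)$); necessity by the pointwise improvements you list, which is exactly what the paper invokes with ``arguments similar to those in the main paper.'' One remark on the tension you flag: your instinct is right and the issue lies in the theorem statement rather than in your proof --- the efficiency argument only forces $q(\theta)=q_e$ where $r(\theta)>0$, and under the paper's convention $q(\theta)=0\iff r(\theta)=0$ a shut-down mechanism such as $r=\mathbf{1}_{\{\theta\le\theta^*\}}$, $q=q_e\mathbf{1}_{\{\theta\le\theta^*\}}$ is itself undominated (weak domination at $\theta^*$ forces $\tilde u(\theta^*)=0$, hence $\tilde r\equiv 0$ above $\theta^*$ and $\tilde r\equiv 1$ below), so the clean reading is ``$q(\theta)=q_e$ whenever $r(\theta)>0$'' and your proposed one-line remark is warranted.
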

\begin{proof}
For an IC and IR mechanism $M \equiv (r,q,u)$, the regulator's surplus is
\[
\textsc{RS}_\alpha(\theta, M) = r(\theta)\left[V(q(\theta))-\theta-cq(\theta)\right]-(1-\alpha)u(\theta).
\]
Clearly, the total surplus $V(q(\theta))-\theta-cq(\theta)$ is uniquely maximized at $q(\theta)=q_e$ for all $\theta$. Moreover, by [A2], which in this context says that the total surplus $V(q_e) - \overline{\theta} - cq_e > 0$ at the efficient quantity and highest fixed cost is positive, $V(q_e) - \theta - cq_e > 0$ for all $\theta$.
Therefore, $M$ being undominated implies $q(\theta)=q_e$ for all $\theta$. Next, observe that by arguments similar to those in the main paper, $r(\underline{\theta})=1$ and $r$ is left continuous, and $u(\bar{\theta})=0$.

Now suppose that an IC and IR mechanism $M \equiv (r,q,u)$ is such that $r$ is left continuous with $r(\underline{\theta})=1$, $u(\bar{\theta})=0$, and $q(\theta)=q_e$ for all $\theta$. Assume for contradiction that $M$ is dominated by another IC and IR mechanism $\widetilde{M} = (\tilde{r},\tilde{q},\tilde{u})$. It is without loss of generality to assume that $\tilde{r}$ is left continuous with $\tilde{r}(\underline{\theta})=1$, $\tilde{u}(\bar{\theta})=0$, and $\tilde{q}(\theta)=q_e$ for all $\theta$. Let
\begin{align*}
    D_{r} &:= \{\theta: \tilde{r}(\theta) < r(\theta)\}.
\end{align*}
By Lemma \ref{lem:Dr} (which is stated and proved after this proof), $D_{r}$ has a positive measure. Next, define 
\begin{align*}
    \theta_h &:= \sup~\{\theta: \theta \in D_{r}\}.
\end{align*}
$\theta_h$ is a finite real number because $D_{r}$ is non-empty and bounded. Clearly, $\theta_h > \underline{\theta}$. In what follows, the proof obtains a contradiction by showing that there exists $\theta$, in a left neighbourhood of $\theta_h$, at which the regulator gets a higher surplus under mechanism $M$. 

There exists $\theta_{\ell} < \theta_h$, along with a corresponding left-neighbourhood $\mathcal{N} : = [\theta_{\ell}, \theta_h)$ of $\theta_h$,  such that 
\begin{align*}
 r(\theta) >_{(a)} \tilde{r}(\theta)\geq 0 ~\qquad~\forall~\theta\in\mathcal{N}.
\end{align*}
Inequality $(a)$ follows from the definition of $\theta_h$ along with the fact that $r$ and $\tilde{r}$ are decreasing and left continuous in $\theta$.

Now, let
    \[
    \Delta := \sup_{y \in \mathcal{N}} \big[r(y)-\tilde{r}(y)\big]
    \]
which is positive by definition of $\mathcal{N}$. Thus, there exists $\theta \in \mathcal{N}$ such that $r(\theta) - \tilde{r}(\theta)  = \Delta - \epsilon$, where $\epsilon<\Delta\frac{\textsc{TS}(\theta_h, q_e)}{\textsc{TS}(\underline{\theta}, q_e)}$ and $\textsc{TS}(\theta, q_e) = V(q_e)-\theta-cq_e$. Consider the difference
\begin{align*}
      \textsc{RS}_{\alpha}(\theta,M)-\textsc{RS}_{\alpha}(\theta,\widetilde{M}) &=\Big(r(\theta)-\tilde{r}(\theta)\Big)\textsc{TS}(\theta, q_e)-(1-\alpha)\big(u(\theta)-\tilde{u}(\theta)\big).
      \end{align*}
      To complete the proof, it suffices to show that the right hand side of the above equation is positive, contradicting the fact that $\widetilde{M}$ dominates $M$. Observe
      \begin{align*}
          & \Big(r(\theta)-\tilde{r}(\theta)\Big)\textsc{TS}(\theta, q_e)-(1-\alpha)\big(u(\theta)-\tilde{u}(\theta)\big)  \\
          &= \Big(r(\theta)-\tilde{r}(\theta)\Big)\textsc{TS}(\theta, q_e)- (1-\alpha) \Big( \int \limits_{\theta}^{\theta_h} \big(r(y)-\tilde{r}(y)\big)dy\Big)  - (1-\alpha)  \Big(\int \limits_{\theta_h}^{\overline{\theta}} \big(r(y)-\tilde{r}(y)\big)dy\Big)\\
          &~\qquad~\textrm{(since $u(\overline{\theta})=\tilde{u}(\overline{\theta})=0$)} \\
          &\ge \Big(r(\theta)-\tilde{r}(\theta)\Big)\textsc{TS}(\theta, q_e)- (1-\alpha) \Big(\int \limits_{\theta}^{\theta_h} \big(r(y)-\tilde{r}(y)\big)dy\Big) \\
          & \textrm{(since $r(y) \le \tilde{r}(y)$ for all $y > \theta_h$)} \\
          &\ge \Big(r(\theta)-\tilde{r}(\theta)\Big)\textsc{TS}(\theta, q_e)- \Big(\int \limits_{\theta}^{\theta_h} \big(r(y)-\tilde{r}(y)\big)dy\Big)~\qquad~\textrm{(since $r(y) > \tilde{r}(y)$ for all $y \in [\theta,\theta_h)$)} \\
          &\ge \Big(r(\theta)-\tilde{r}(\theta)\Big)\textsc{TS}(\theta, q_e)-  \Delta (\theta_h - \theta) ~\qquad~\textrm{(by definition of $\Delta$)}\\
          &= (\Delta - \epsilon)\textsc{TS}(\theta, q_e)-  \Delta (\theta_h - \theta) \\
          &= \Delta \big[\textsc{TS}(\theta, q_e) - (\theta_h - \theta)\big]-\epsilon \textsc{TS}(\theta, q_e)\\
          &= \Delta \textsc{TS}(\theta_h, q_e) -\epsilon \textsc{TS}(\theta, q_e)\\
          &\ge \Delta \textsc{TS}(\theta_h, q_e) -\epsilon \textsc{TS}(\underline{\theta}, q_e) \qquad~\textrm{(since $\textsc{TS}$ is decreasing in $\theta$)}\\
          &> 0
      \end{align*}
      where the last inequality follows from the fact that $\epsilon<\Delta\frac{\textsc{TS}(\theta_h, q_e)}{\textsc{TS}(\underline{\theta}, q_e)}$.
\end{proof}

\begin{lemma}\label{lem:Dr}
The set $D_{r}$ has a positive measure.    
\end{lemma}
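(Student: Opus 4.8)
The plan is to mimic the boundary-type argument used in Lemma \ref{lem:Dqr}. First I would compare the two mechanisms at the lowest type $\underline{\theta}$. By the assumptions already imposed on $M$ and (without loss) on $\widetilde{M}$, we have $r(\underline{\theta}) = \tilde{r}(\underline{\theta}) = 1$ and $q(\underline{\theta}) = \tilde{q}(\underline{\theta}) = q_e$, so the total surplus at $\underline{\theta}$ equals $\textsc{TS}(\underline{\theta}, q_e)$ under both mechanisms. Since $\widetilde{M}$ dominates $M$, evaluating $\textsc{RS}_{\alpha}(\underline{\theta},\widetilde{M}) \ge \textsc{RS}_{\alpha}(\underline{\theta},M)$ and cancelling the equal total-surplus terms yields $(1-\alpha)\tilde{u}(\underline{\theta}) \le (1-\alpha) u(\underline{\theta})$; as $\alpha < 1$, this gives $\tilde{u}(\underline{\theta}) \le u(\underline{\theta})$.

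Next I would rewrite the two rents via the IC envelope formula for the unknown-fixed-cost model, using $u(\bar{\theta}) = \tilde{u}(\bar{\theta}) = 0$: we have $u(\underline{\theta}) = \int_{\underline{\theta}}^{\bar{\theta}} r(y)\,dy$ and $\tilde{u}(\underline{\theta}) = \int_{\underline{\theta}}^{\bar{\theta}} \tilde{r}(y)\,dy$. Hence $\int_{\underline{\theta}}^{\bar{\theta}} \tilde{r}(y)\,dy \le \int_{\underline{\theta}}^{\bar{\theta}} r(y)\,dy$.

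The final step is a contradiction argument. Suppose $D_r$ had measure zero. Then $\tilde{r}(y) \ge r(y)$ for almost every $y$, which combined with the integral inequality above forces $\tilde{r}(y) = r(y)$ for almost every $y$. Since both $r$ and $\tilde{r}$ are decreasing and left continuous, this equality holds for every $y$. Then $\tilde{u} = u$ everywhere, and together with $\tilde{q} = q = q_e$ this gives $\textsc{RS}_{\alpha}(\theta,\widetilde{M}) = \textsc{RS}_{\alpha}(\theta,M)$ for all $\theta$, contradicting the fact that $\widetilde{M}$ dominates $M$ (which requires a strict inequality at some type). Therefore $D_r$ has positive measure.

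I expect the only mild subtlety to be the passage from ``equal almost everywhere'' to ``equal everywhere'' using monotonicity and left-continuity of $r$ and $\tilde{r}$, together with the observation that the resulting identity of the two mechanisms is incompatible with domination; every other step is a direct transcription of the lowest-type comparison from Lemma \ref{lem:Dqr}.
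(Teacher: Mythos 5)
Your proposal is correct and follows essentially the same route as the paper's proof: compare the two mechanisms at $\underline{\theta}$ where total surplus coincides, deduce $\tilde{u}(\underline{\theta}) \le u(\underline{\theta})$ from domination, rewrite the rents via the envelope formula, and rule out $D_r$ having measure zero by forcing $r=\tilde{r}$ almost everywhere and then everywhere via monotonicity and left-continuity, which would make the mechanisms identical and contradict domination. If anything, your contrapositive arrangement of the last step (combining $\tilde{r}\ge r$ a.e.\ with the integral inequality) is slightly tighter than the paper's case split, but it is the same argument.
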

\begin{proof}
Both $M$ and $\widetilde{M}$ are such that $r(\underline{\theta})=\tilde{r}(\underline{\theta})=1$ and $q(\underline{\theta})=\tilde{q}(\underline{\theta})=q_e$. Thus, the total surplus from both the mechanisms at $\underline{\theta}$ is the same. Because $\widetilde{M}$ dominates $M$, it must be that the rent paid by the monopolist of type $\underline{\theta}$ is lower in $\widetilde{M}$ than in $M$, that is,
\begin{align}\label{eq:rent}
    u(\underline{\theta}) = \int \limits_{\underline{\theta}}^{\bar{\theta}} r(z) dz \ge \int \limits_{\underline{\theta}}^{\bar{\theta}} \tilde{r}(z)dz = \tilde{u}(\underline{\theta}).
\end{align}
If the inequality is an equality, $r(\theta)=\tilde{r}(\theta)$ for almost all $\theta$, and by left-continuity of $M$ and $\widetilde{M}$, $r(\theta)=\tilde{r}(\theta)$ for all $\theta$. Because both $M$ and $\widetilde{M}$ are such that  $q(\underline{\theta})=\tilde{q}(\underline{\theta})=q_e$ for all $\theta$, this implies $M = \widetilde{M}$, which is a contradiction. Thus, inequality (\ref{eq:rent}) is strict, and $r(\theta) > \tilde{r}(\theta)$ for a positive measure of $\theta$. Consequently, $D_{r}$ has a positive measure.
\end{proof}

\end{document}